\documentclass[11pt]{article}

\usepackage{amssymb,amsfonts,amsmath,amsthm,amscd,mathrsfs,dsfont,bbm}
\usepackage{graphicx,float,epsfig,psfrag, color, esdiff}
\usepackage{wrapfig}
\usepackage{subfig}

\DeclareMathAlphabet{\mathpzc}{OT1}{pzc}{m}{it}


\usepackage{fullpage}

\newtheorem{propo}{Proposition}[section]

\newtheorem{lemma}[propo]{Lemma}

\newtheorem{corollary}[propo]{Corollary}
\newtheorem{thm}[propo]{Theorem}

\theoremstyle{definition}
\newtheorem{remark}[propo]{Remark}

\def\<{\langle}
\def\>{\rangle}

\def\eps{{\varepsilon}}
\def\beps{\bar{\varepsilon}}

\def\cY{{\mathcal{Y}}}

\def\sT{{\sf T}}

\def\cH{{\cal H}}

\def\sign{{\operatorname{\rm{sgn}}}}

\def\P{{\mathbb P}}
\def\prob{{\mathbb P}}
\def\integers{{\mathbb Z}}

\def\E{{\mathbb E}} 

\def\one{\mathbf{1}}

\def\reals{\mathbb{R}}

\def\normal{{\sf N}}
\def\Ber{{\sf Ber}}
\def\Unif{{\sf Unif}}

\def\de{{\rm d}}

\def\cG{\mathcal{G}}
\def\d{{\mathrm{d}}}

\def\Var{{\sf Var}}

\newcommand\norm[1]{\left\lVert{#1}\right\rVert}
\newcommand\abs[1]{\left\lvert{#1}\right\rvert}

\newcommand\myeqref[1]{{Eq.\,\eqref{#1}}}

\def\one{{\bf 1}}


\def\bb{{\bf b}}

\def\bG{{\bf G}}

\def\tG{\widetilde{G}}

\def\Var{{\rm Var}}

%
%

\def\Pair{\binom{[n]}{2}}
\def\bDelta{{\boldsymbol \Delta}}

\def\bs{{\boldsymbol s}}
\def\bx{{\boldsymbol x}}
\def\by{{\boldsymbol y}}
\def\bv{{\boldsymbol v}}
\def\bz{{\boldsymbol z}}
\def\bu{{\boldsymbol u}}

\def\bX{{\boldsymbol X}}

\def\bZ{{\boldsymbol Z}}
\def\bY{{\boldsymbol Y}}
\def\bG{{\boldsymbol G}}
\def\btG{\boldsymbol{\widetilde{G}}}
\def\bV{{\boldsymbol V}}

\def\bR{{\boldsymbol R}}

\def\beps{{\boldsymbol{\eps}}}
\def\ba{{\boldsymbol a}}
\def\bb{{\boldsymbol b}}

\DeclareMathAlphabet{\mathpzc}{OT1}{pzc}{m}{it}

\def\BEC{{\sf BEC}}


\def\Info{{\sf I}}
\def\hx{\widehat{x}}
\def\hbx{{\boldsymbol{\widehat{x}}}}
\def\hs{\widehat{x}}
\def\hbs{{\boldsymbol{\widehat{s}}}}
\def\bxi{{\boldsymbol{\xi}}}

\def\mmse{{\sf mmse}}
\def\MMSE{{\sf MMSE}}
\def\MSEAMP{{\sf MSE}_{\sf AMP}}
\def\vmse{{\sf vmmse}}
\def\overlap{{\sf Overlap}}

\def\tG{\widetilde{G}}

\def\ons{{\sf b}}
\def\tons{{\sf \tilde{b}}}
\def\bone{{\mathbf{1}}}

\def\sbmH{{\cal{H}_\mathrm{SBM}}}

\def\hcube{{\{\pm 1\}^n}}

\def\mseAMP{{\sf mse_{\sf AMP}}}

\def\Binom{{\rm Binom}}
\def\SBM{{\rm SBM}}

\def\op{\overline{p}}
\def\Unif{{\rm Uniform}}

\def\sech{{\text{sech}}}

%

\begin{document}

\title{Asymptotic Mutual Information for the\\ Two-Groups Stochastic Block Model}

\date{\today}
\author{Yash Deshpande\footnote{Department of Electrical
    Engineering, Stanford University},  \;\; Emmanuel Abbe\footnote{Department of Electrical
    Engineering and Program in Applied and Computational Mathematics, Princeton University}, 
\;\; Andrea Montanari\footnote{Department of Electrical
    Engineering and Department of Statistics, Stanford University}}

\maketitle

\begin{abstract}
We develop an information-theoretic view of the stochastic block
model, a popular statistical model for the large-scale structure of
complex networks. 
A graph $G$ from such a model is generated by first assigning vertex
labels at random from a finite alphabet, and then connecting vertices 
with edge probabilities depending on the labels of the endpoints.
In the case of the symmetric two-group model, we
establish an explicit `single-letter' characterization of the
per-vertex mutual information between the vertex labels and the graph.

The explicit expression of the mutual information is intimately 
related to estimation-theoretic quantities, and --in particular--
reveals a phase transition at the critical point for community detection. Below
the critical point the per-vertex mutual information is asymptotically the same
as if edges were independent. Correspondingly, no algorithm can
estimate the partition better than random guessing. 
Conversely, above the threshold, the per-vertex  mutual information is
strictly smaller than the independent-edges upper bound. 
In this regime there exists a procedure that estimates the vertex
labels better than random guessing.
\end{abstract}

\section{Introduction and main results}

The stochastic block model is the  simplest statistical model for 
networks with a community (or cluster) structure. As such, it has
attracted considerable amount of work across statistics, machine
learning, and theoretical computer science \cite{holland,dyer,snij,condon,mixed-core}. A random graph $\bG=(V,E)$ 
from this model has its vertex set $V$ partitioned into $r$
groups, which are assigned $r$ distinct labels. The probability of edge $(i,j)$ being present depends on the
group labels of vertices $i$ and $j$. 

In the context of social network analysis, groups correspond to social
communities \cite{holland}. For other data-mining applications, they
represent latent attributes of the nodes \cite{mcsherry}. In all of these
cases, we are interested in inferring the vertex labels from a single
realization of the graph. 

In this paper we develop an information-theoretic viewpoint on the
stochastic block model. Namely, we develop an explicit (`single-letter') 
expression for the per-vertex conditional entropy of the vertex labels given 
the graph. Equivalently, we compute the asymptotic per-vertex mutual information
between the graph and the vertex labels.
Our results hold asymptotically for large networks under
suitable conditions on the model parameters. 
The asymptotic mutual information is of independent interest, but is
also intimately related to estimation-theoretic quantities.

For the sake of simplicity, we will focus on the symmetric two group
model. Namely,
we assume the vertex set $V=[n]\equiv \{1,2,\dots,n\}$ to be
partitioned  into two sets $V = V_+\cup V_-$, with $\prob(i\in V_+) =
\prob(i\in V_-) = 1/2$ independently across vertices $i$. In
particular, the size of each group $|V_+|,|V_-|\sim \Binom(n,1/2)$
concentrates tightly around its expectation $n/2$.
Conditional on the edge labels, edges are independent with
\begin{align}
\prob\big((i,j)\in E\big|V_+,V_-\big)  = \begin{cases}
p_n & \mbox{ if $\{i,j\}\subseteq V_+$, or $\{i,j\}\subseteq V_-$,}\\
q_n& \mbox{ otherwise.}\\
\end{cases}
\end{align}
Throughout we will denote by $\bX = (X_i)_{i\in V}$ the set of vertex
labels $X_i\in \{+1,-1\}$, and we will be interested in the
conditional entropy $H(\bX|\bG)$ or --equivalently-- the mutual
information $I(\bX;\bG)$ in the limit $n\to\infty$.
We will write $\bG\sim \SBM(n;p,q)$  (or $(\bX,\bG)\sim\SBM(n;p,q)$) to
imply that the graph $G$ is distributed according to the stochastic
block model with $n$ vertices and parameters $p,q$.

Since we are interested in the large $n$ behavior, two preliminary
remarks are in order:
\begin{enumerate}
\item \emph{Normalization.} We obviously have\footnote{Unless
    explicitly stated otherwise, logarithms will be in base $e$, and
    entropies will be measured in nats.} $0\le H(\bX|\bG)\le n\,\log
  2$. It is therefore natural to study the per-vertex entropy
  $H(\bX|\bG)/n$.

 As we will see, depending on the model parameters, 
this will take any value between $0$ and $\log 2$. 
\item \emph{Scaling.} The reconstruction problem becomes easier when
  $p_n$ and $q_n$ are well separated, and more difficult when they are
  closer to each other.
For instance, in an early contribution, Dyer and Frieze \cite{dyer} proved
that the labels can be reconstructed exactly --modulo an overall
flip-- if $p_n=p>q_n=q$ are
distinct and independent of $n$. This --in particular-- implies 
$H(\bX|\bG)/n\to 0$ in this limit (in fact, it implies $H(\bX|\bG)\to \log 2$).
In this regime, the `signal' is so strong that the conditional entropy
is trivial. Indeed, recent work \cite{abh, mossel-consist} show that
this can 
also happen with $p_n$ and $q_n$ vanishing, and  characterizes the 
sequences $(p_n,q_n)$ for which this happens.
 (See Section
  \ref{sec:related} for an account of related work.)

Let $\op_n = (p_n+q_n)/2$ be the average edge probability.
It turns out that the relevant `signal-to-noise ratio' (SNR) is given by the
following parameter:
\begin{align}
\lambda_n = \frac{n\, (p_n-q_n)^2}{4\op_n(1-\op_n)}\, .
\end{align}
Indeed, we will see that $H(\bX|\bG)/n$ of order $1$, and has a strictly
positive limit when $\lambda_n$ is of order one. This is also the
regime in which the fraction of incorrectly labeled vertices has a
limit that is strictly between $0$ and $1$.   
\end{enumerate}

\subsection{Main result: Asymptotic per-vertex mutual information}

As mentioned above, our main result provides a single-letter
characterization for the per-vertex mutual information. 
This is given in terms of an 
\emph{effective Gaussian
scalar channel}.
Namely, define the Gaussian channel
\begin{align}
  Y_0 = Y_0(\gamma) = \sqrt{\gamma} \, X_0 + Z_0, \label{eq:scalarprob_1}
\end{align}
where $X_0\sim \Unif(\{+1,-1\})$ independent\footnote{
Throughout the paper, we will generally denote scalar equivalents of vector/matrix
quantities with the $0$ subscript} of $Z_0\sim\normal(0,1)$.
We denote by $\mmse(\gamma)$ and $\Info(\gamma)$ the corresponding
minimum mean square error and
mutual information:
\begin{align}
\Info(\gamma) & = \E\,\log \Big\{\frac{\de p_{Y|X}(Y_0(\gamma)|X_0)}{\de p_{Y}(Y_0(\gamma))}\Big\}\, ,\label{eq:InfoDef}\\
  \mmse(\gamma ) &= \E\left\{ ( X_0- \E\left\{ X_0| Y_0(\gamma)\right\})^2
                   \right\}\, . \label{eq:mmseDef}
\end{align}
In the present case, these quantities can be written explicitly as
Gaussian integrals of elementary functions:
\begin{align}
\Info(\gamma) & = \gamma-\E\log\cosh\big(\gamma+\sqrt{\gamma}\,
Z_0\big)\, ,\label{eq:InfoFormula}\\
\mmse(\gamma ) &= 1- \E\big\{\tanh(\gamma+\sqrt{\gamma}\, Z_0)^2\big\}\, . \label{eq:mmseFormula}
\end{align}

We are now in position to state our main result.
\begin{thm}\label{thm:main}
  For any $\lambda>0$, let $\gamma_* = \gamma_*(\lambda)$ be the
  largest non-negative solution of the equation:
  \begin{align}
      \gamma &= \lambda\big(1- \mmse(\gamma)\big)\, . \label{eq:MainEquation}
  \end{align}
We refer to $\gamma_*(\lambda)$ as to the \emph{effective
  signal-to-noise ratio}.
  Further, define $\Psi(\gamma, \lambda)$ by:
  \begin{align}
    \Psi(\gamma, \lambda ) &=
    \frac{\lambda}{4}+\frac{\gamma^2}{4\lambda}-\frac{\gamma}{2}+\Info(\gamma)\, .
\end{align}
Let the graph $\bG$ and vertex labels $\bX$ be distributed according to
the stochastic block model with $n$ vertices and parameters $p_n,q_n$
(i.e. $(\bG,\bX)\sim\SBM(n;p_n,q_n)$) and define $\lambda_n \equiv n\,
(p_n-q_n)^2/(4\op_n(1-\op_n))$.

Assume that, as $n\to\infty$, $(i)$ $\lambda_n\to\lambda$ and $(ii)$  $n\op_n(1-\op_n)\to\infty$.
Then, 
 \begin{align}
    \lim_{n\to\infty}  \frac{1}{n}\, I(\bX;\bG)&= \Psi(\gamma_*(\lambda),
    \lambda)\, .
\end{align}
\end{thm}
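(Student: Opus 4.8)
The plan is to compute the mutual information $I(\bX;\bG)$ by relating the stochastic block model to an effective rank-one matrix estimation problem (a ``spiked Wigner'' model) and then applying the replica-symmetric formula for the latter. The first step is a \emph{channel universality / Gaussian approximation}: conditionally on $\bX$, the adjacency matrix $\bG$ has independent entries $G_{ij}\sim\Ber(\op_n + (p_n-q_n)X_iX_j/2)$, so after centering and rescaling the matrix $\widetilde G_{ij} = (G_{ij}-\op_n)/\sqrt{\op_n(1-\op_n)}$ behaves, to leading order, like $\sqrt{\lambda_n/n}\,X_iX_j + W_{ij}$ with $W$ a GOE-like noise matrix. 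Under the hypotheses $\lambda_n\to\lambda$ and $n\op_n(1-\op_n)\to\infty$, one shows $\frac1n I(\bX;\bG) = \frac1n I(\bX;\sqrt{\lambda_n/n}\,\bX\bX^{\sf T} + \bW) + o(1)$; this requires controlling the difference between Bernoulli and Gaussian per-edge channels, e.g. via a Lindeberg-type interpolation applied to the free energy (log-partition function), using that the local discrepancy in the log-likelihood is $O(1/(n\op_n(1-\op_n)))$ per edge and summing over $\binom n2$ edges.

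The second step is the evaluation of the mutual information for the Gaussian spiked model $\bY = \sqrt{\lambda/n}\,\bX\bX^{\sf T} + \bW$, which should equal $\Psi(\gamma_*(\lambda),\lambda)$. The natural route is the I-MMSE relation: $\frac{\d}{\d\lambda}\bigl[\tfrac1n I(\bX;\bY)\bigr] = \tfrac14\bigl(1 - \mathrm{MMSE}_n(\lambda)\bigr) + o(1)$, where $\mathrm{MMSE}_n(\lambda)$ is the matrix-MMSE for estimating $\bX\bX^{\sf T}/n$. One then needs to identify $\lim_n \mathrm{MMSE}_n(\lambda)$ with the scalar quantity $\mmse(\gamma_*(\lambda))^2$ arising from the decoupled Gaussian scalar channel \eqref{eq:scalarprob_1} at the fixed-point SNR $\gamma_*$, and check that $\frac{\d}{\d\lambda}\Psi(\gamma_*(\lambda),\lambda) = \tfrac14(1-\mmse(\gamma_*(\lambda)))$ — here the key algebraic miracle is that, because $\gamma_*$ solves the stationarity equation \eqref{eq:MainEquation}, the $\partial_\gamma$ terms cancel and only the explicit $\lambda$-dependence survives, yielding $\tfrac14 + \tfrac{\gamma_*}{2\lambda}\cdot(\text{something})$ that collapses to $\tfrac14(1-\mmse(\gamma_*))$ after using $\gamma_* = \lambda(1-\mmse(\gamma_*))$. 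Matching at $\lambda=0$ (where both sides are $\log 2$, i.e. $I=0$... wait, at $\lambda = 0$ we have $\gamma_* = 0$, $\Info(0)=0$, so $\Psi(0,0)=0$, consistent with $I(\bX;\bY)=0$) then pins down the constant of integration.

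The main obstacle — and the technical heart of the paper — will be establishing the limit of the matrix-MMSE, i.e. proving that the replica-symmetric prediction is correct and that there is no ``replica symmetry breaking'' gap. Concretely, one must show $\liminf$ and $\limsup$ of $\tfrac1n I(\bX;\bY)$ both equal $\Psi(\gamma_*,\lambda)$. For the upper bound, a second-moment / interpolation argument (à la Guerra, or the Aizenman–Sims–Starr scheme, or a conditional-second-moment method) comparing the SBM system to the decoupled scalar channels should give $\frac1n I \le \Psi(\gamma_*,\lambda) + o(1)$, or a bound by $\inf_\gamma \Psi(\gamma,\lambda)$ over all fixed points; the lower bound is typically harder and may require an Approximate Message Passing (AMP) analysis showing that AMP achieves mean-square error $\mmse(\gamma_*)$, combined with the I-MMSE relation to convert the algorithmic guarantee into an information-theoretic one, or alternatively a cavity/adaptive-interpolation argument. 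A subtle point throughout is that $\gamma_*$ is defined as the \emph{largest} solution of \eqref{eq:MainEquation}, which is exactly what one expects the MMSE fixed-point iteration to converge to from the uninformative initialization, so the AMP route is the most natural way to see why this particular root is the operative one; reconciling the two one-sided bounds so that they meet at $\Psi(\gamma_*,\lambda)$ (rather than at some other stationary value of $\gamma\mapsto\Psi(\gamma,\lambda)$) is where the real work lies.
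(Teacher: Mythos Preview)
Your first step---reducing the SBM to the rank-one spiked Wigner model via a Lindeberg interpolation on the log-partition function---is exactly what the paper does (Proposition~\ref{prop:gausseq}), so that part is on target.

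The analysis of the Gaussian model, however, departs from the paper in a substantive way, and your assignment of which tool yields which inequality is inverted. AMP is a specific estimator, so its mean-square error \emph{upper} bounds the matrix MMSE; via I-MMSE this integrates to an \emph{upper} bound on $\frac1n I(\bX;\bY)$, not a lower bound. The paper uses AMP precisely for this upper bound, and it does \emph{not} invoke any Guerra/Aizenman--Sims--Starr interpolation for the other direction. Instead, the matching lower bound comes from an \emph{area theorem}: one integrates the I-MMSE identity from $\lambda=0$ to $\lambda=\infty$, where both endpoints of $\frac1n I$ are known exactly (namely $0$ and $\log 2$, up to the side-information term), and since the AMP upper bound on MMSE integrates to exactly the same total, the inequality $\MMSE_n\le\MSEAMP$ is forced to be an equality for almost every $\lambda$.

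For this area-theorem argument to work the paper introduces a device you do not mention: extra side information $\bX(\eps)$ through a $\BEC(1-\eps)$ channel, with $\eps>0$ taken to zero at the very end. This is essential for two reasons. First, with $\eps=0$ the state evolution initialized at $\gamma_0=0$ satisfies $\gamma_1=\lambda(1-\mmse(0))=0$ and AMP never leaves the trivial fixed point; the $\eps>0$ perturbation makes $\gamma_1=\lambda\eps>0$ so the iteration converges to the informative fixed point. Second, for $\eps>0$ the fixed-point equation has a \emph{unique} non-negative solution (Lemma~\ref{rem:unique}), which is what makes the endpoint evaluations at $\lambda\to 0$ and $\lambda\to\infty$ and the integration in Lemma~\ref{lem:PsiIntegral} go through cleanly.

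Two smaller points: the matrix MMSE limit is $1-\gamma_*^2/\lambda^2=1-(1-\mmse(\gamma_*))^2$, not $\mmse(\gamma_*)^2$; and correspondingly $\frac{\d}{\d\lambda}\Psi(\gamma_*(\lambda),\lambda)=\tfrac14(1-\gamma_*^2/\lambda^2)$, not $\tfrac14(1-\mmse(\gamma_*))$.
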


A few remarks are in order.

\begin{remark}
Of course, we could have stated our result in terms of conditional
entropy. Namely
 \begin{align}
    \lim_{n\to\infty}  \frac{1}{n}\, H(\bX|\bG)&= \log 2- \Psi(\gamma_*(\lambda),
    \lambda)\, .
\end{align}
\end{remark}

\begin{remark}
Notice that our assumptions require $n\op_n(1-\op_n)\to\infty$ \emph{at
  any, arbitrarily slow, rate}. In words, this corresponds to the graph
average degree diverging at any, arbitrarily slow, rate.

Recently (see Section \ref{sec:related} for a discussion of this literature), there has been considerable interest in the case of bounded
average degree, namely
\begin{align}
p_n  = \frac{a}{n}\, ,\;\;\;\;\;\; q_n  = \frac{b}{n}\, ,
\end{align}
with $a,b$ bounded.
Our proof gives an explicit error bound in terms of problem parameters
even when $n\op_n(1-\op_n)$ is of order one. Hence we are able to 
characterize 
the asymptotic mutual information for large-but-bounded average
degree up to an offset that vanishes with the average degree.

Explicitly, we prove that:
\begin{align}
  \limsup_{n\to\infty}
\Big|\frac{1}{n}I(\bX;\bG)-\Psi(\gamma_*(\lambda),\lambda) \Big| \le \frac{C\lambda^3}{\sqrt{a+b}}\, ,
\end{align}
for some absolute constant $C$.
\end{remark}

Our main result and its proof has implications on the minimum error
that can be achieved in estimating the labels $\bX$ from the graph $\bG$.
For reasons that will become clear below, a natural metric is given by the
matrix minimum mean square error
\begin{align}
\MMSE_n(\lambda) &\equiv
\frac{1}{n(n-1)}\E\Big\{\big\|\bX\bX^{\sT}-\E\{\bX\bX^{\sT}|\bG\}\big\|_F^2\Big\}\,. \label{eq:MMSEdef}
\end{align}
(Occasionally, we will also use the notation $\MMSE(\lambda;n)$ for
$\MMSE_n(\lambda)$.)
Using the exchangeability of the indices $\{1,\dots,n\}$, this can
also be rewritten as
\begin{align}
\MMSE_n(\lambda) &=\frac{2}{n(n-1)}\sum_{1\le i<j\le n}\E\big\{\big[X_iX_j-\E\{X_iX_j|\bG\}\big]^2\big\}\\
& = \E\big\{\big[X_1X_2-\E\{X_1X_2|\bG\}\big]^2\big\}\label{eq:MMSE_2vert}\\
& = \min_{\hx_{12}: \cG_n\to \reals}
\E\big\{\big[X_1X_2-\hx_{12}(\bG)\big]^2\big\}\, . \label{eq:MMSE_2vert_opt}
\end{align}
(Here $\cG_n$ denotes the set of graphs with vertex set $[n]$.)
In words, $\MMSE_n(\lambda)$ is the minimum error incurred in
estimating the relative sign of the labels of two given (distinct)
vertices. Equivalently, we can assume that vertex $1$ has label
$X_1=+1$. Then $\MMSE_n(\lambda)$ is the minimum mean square error
incurred in estimating the label of any other vertex, say vertex
$2$. Namely, by symmetry, we have (see Section \ref{sec:Estimation})
\begin{align}
\MMSE_n(\lambda)& = \E\big\{\big[X_2-\E\{X_2|X_1=+1,\bG\}\big]^2\big|X_1=+1\big\}\label{eq:AlternativeMMSE1}\\
& =  \min_{\hx_{2|1}: \cG_n\to\reals}
\E\big\{\big[X_2-\hx_{2|1}(\bG)\big]^2|X_1=+1\big\}\, . \label{eq:AlternativeMMSE2}
\end{align}
In particular $\MMSE_n(\lambda)\in [0,1]$, with  $\MMSE_n(\lambda)\to
1$ corresponding to random guessing.
\begin{thm}\label{thm:MainEstimation}
Under the assumptions of Theorem \ref{thm:main} (in particular
assuming $\lambda_n\to \lambda$ as $n\to\infty$), the following limit 
holds for the matrix minimum mean square error 
\begin{align}
\lim_{n\to\infty} \MMSE_n(\lambda_n) =
1-\frac{\gamma_*(\lambda)^2}{\lambda^2}\, .
\end{align}
Further, this implies $\lim_{n\to\infty} \MMSE_n(\lambda_n) = 1$ for
$\lambda\le 1$ and $\lim_{n\to\infty} \MMSE_n(\lambda_n) < 1$ for
$\lambda>1$. 
\end{thm}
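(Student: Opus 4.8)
The plan is to derive the MMSE limit from Theorem \ref{thm:main} via an I-MMSE-type identity, treating $\lambda$ as the variable one differentiates against. First I would establish that $\Psi(\gamma_*(\lambda),\lambda)$ is differentiable in $\lambda$ (for all but at most countably many $\lambda$, and in particular away from the critical point $\lambda=1$), and compute $\frac{\d}{\d\lambda}\Psi(\gamma_*(\lambda),\lambda)$. Because $\gamma_*(\lambda)$ is defined as the largest solution of the stationarity condition $\gamma=\lambda(1-\mmse(\gamma))$, and because $\Psi(\gamma,\lambda)$ is (as one checks using the identity $\Info'(\gamma)=\frac12\mmse(\gamma)$ for the scalar Gaussian channel, see \eqref{eq:InfoFormula}–\eqref{eq:mmseFormula}) stationary in $\gamma$ exactly on that solution set, the envelope theorem gives $\frac{\d}{\d\lambda}\Psi(\gamma_*(\lambda),\lambda)=\partial_\lambda\Psi(\gamma,\lambda)\big|_{\gamma=\gamma_*(\lambda)} = \frac14-\frac{\gamma_*(\lambda)^2}{4\lambda^2}$, so that $\frac{\d}{\d\lambda}\Psi(\gamma_*(\lambda),\lambda)=\frac14\big(1-\gamma_*(\lambda)^2/\lambda^2\big)$.

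The second ingredient is a finite-$n$ differentiation formula relating $\frac{\d}{\d\lambda}\,\frac1n I(\bX;\bG)$ to $\MMSE_n$. The cleanest route is to pass through a Gaussian observation model: one shows (this is the content of the Gaussian/universality part of the proof of Theorem \ref{thm:main}, which I am entitled to use) that up to vanishing corrections the SBM with parameters $(p_n,q_n)$ is information-theoretically equivalent to observing $\bY = \sqrt{\lambda_n/n}\,\bX\bX^{\sT}+\bZ$ with $\bZ$ a GOE-type noise matrix, whose mutual information satisfies the matrix I-MMSE relation $\frac{\d}{\d\lambda}\,\frac1n I(\bX;\bY_\lambda) = \frac{1}{4n^2}\E\|\bX\bX^{\sT}-\E\{\bX\bX^{\sT}|\bY_\lambda\}\|_F^2 = \frac14\MMSE_n(\lambda)(1+o(1))$, using \eqref{eq:MMSEdef}–\eqref{eq:MMSE_2vert}. (The normalization $n^2$ versus $n(n-1)$ only affects lower-order terms.) Combining with the previous paragraph and the convergence $\frac1n I(\bX;\bG)\to\Psi(\gamma_*(\lambda),\lambda)$ from Theorem \ref{thm:main}, a standard argument — convergence of convex/monotone functions plus convergence of their derivatives at continuity points, or integrating the identity and matching antiderivatives — yields $\lim_{n\to\infty}\MMSE_n(\lambda_n) = 4\,\frac{\d}{\d\lambda}\Psi(\gamma_*(\lambda),\lambda) = 1-\gamma_*(\lambda)^2/\lambda^2$.

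For the final dichotomy I would analyze the fixed-point equation \eqref{eq:MainEquation} directly. The map $\gamma\mapsto\lambda(1-\mmse(\gamma))$ has value $0$ at $\gamma=0$ (since $\mmse(0)=1$), is increasing and concave in $\gamma$ (concavity of $\gamma\mapsto 1-\mmse(\gamma)$ for the scalar channel is classical), and has derivative at the origin equal to $\lambda\cdot\frac{\d}{\d\gamma}(1-\mmse(\gamma))\big|_{\gamma=0}=\lambda$ (using $\mmse'(0)=-1$, computable from \eqref{eq:mmseFormula}). Hence for $\lambda\le 1$ the only non-negative solution is $\gamma_*(\lambda)=0$, giving $\MMSE\to 1$; for $\lambda>1$ there is a strictly positive solution $\gamma_*(\lambda)>0$, giving $\MMSE\to 1-\gamma_*(\lambda)^2/\lambda^2<1$.

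The main obstacle is the rigorous passage from "convergence of $\frac1n I$" to "convergence of its $\lambda$-derivative, i.e. of $\MMSE_n$": pointwise convergence of functions does not in general imply convergence of derivatives. I expect this is handled by exploiting that $\lambda\mapsto \frac1n I(\bX;\bG)$ is concave and nondecreasing (so its derivative is monotone, and one gets convergence of derivatives at every continuity point of the limit's derivative — which, by the fixed-point analysis above, is every $\lambda\neq 1$), together with continuity/monotonicity of $\lambda\mapsto\MMSE_n(\lambda)$ itself to upgrade to convergence along the sequence $\lambda_n\to\lambda$. A secondary technical point is controlling the $o(1)$ errors in the Gaussian-equivalence step uniformly enough to survive differentiation, which is where the explicit error bound $C\lambda^3/\sqrt{a+b}$ from the remark after Theorem \ref{thm:main} does the work.
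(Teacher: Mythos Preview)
Your overall strategy matches the paper's: differentiate $\lambda\mapsto\Psi(\gamma_*(\lambda),\lambda)$ via the envelope theorem to get $\tfrac14(1-\gamma_*^2/\lambda^2)$, relate $\MMSE_n$ to $\tfrac{\d}{\d\lambda}\tfrac1n I(\bX;\bG)$, and pass to the limit; the dichotomy at $\lambda=1$ is exactly Lemma~\ref{lemma:FixedPoint}. The endgame you sketch (monotonicity/concavity to upgrade convergence of functions to convergence of derivatives) is also what the paper uses, in the integrated form of Section~\ref{sec:ProofEstimation}.

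There is, however, a genuine gap in your second ingredient. You propose to obtain the finite-$n$ I--MMSE relation by passing through the Gaussian model and invoking the standard Gaussian I--MMSE identity. But that identity gives
\[
\frac{\d}{\d\lambda}\,\frac1n I(\bX;\bY_\lambda)=\frac{1}{4n^2}\,\E\big\|\bX\bX^{\sT}-\E\{\bX\bX^{\sT}\mid \bY_\lambda\}\big\|_F^2,
\]
which is the \emph{Gaussian} matrix MMSE (conditioning on $\bY$), not $\MMSE_n(\lambda)$ as defined in \eqref{eq:MMSEdef} (conditioning on $\bG$). Proposition~\ref{prop:gausseq} only controls the difference of mutual informations; it says nothing about the closeness of the two posteriors $\E\{\bX\bX^{\sT}\mid\bG\}$ and $\E\{\bX\bX^{\sT}\mid\bY\}$, so your equality ``$=\tfrac14\MMSE_n(\lambda)(1+o(1))$'' silently identifies two different quantities. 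Your proposed fix---``controlling the $o(1)$ errors in the Gaussian-equivalence step uniformly enough to survive differentiation''---does not bridge this either: a uniform bound $|I(\bX;\bG)-I(\bX;\bY)|\le n\eps_n$ does not imply that the $\lambda$-derivatives agree to $o(1)$.

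The paper closes this gap by establishing an I--MMSE identity \emph{directly for the SBM}, without detouring through the Gaussian model. Lemma~\ref{lemma:Differentiation} is a general differentiation formula for $H(\bX|\bY)$ when the observations come through a family of discrete channels (from \cite{measson2009generalized}); specializing it to the SBM and Taylor-expanding in $\Delta_n/\op_n$, $\Delta_n/(1-\op_n)$ gives Lemma~\ref{lemma:DifferentiationSBM}:
\[
\left|\frac{1}{n}\,\frac{\d I(\bX;\bG)}{\d \theta}-\frac14\,\MMSE_n(\theta)\right|\le C\Big(\sqrt{\tfrac{\theta}{n\op_n(1-\op_n)}}\ \vee\ \tfrac1n\Big),
\]
with the genuine SBM posterior on the right. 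From there the paper integrates in $\theta$, uses Theorem~\ref{thm:main} at the endpoints, identifies the limiting integrand via Lemmas~\ref{lem:fixedpt} and \ref{lem:PsiIntegral}, and recovers the pointwise limit by monotonicity of $\theta\mapsto\MMSE_n(\theta)$ together with continuity of $\theta\mapsto 1-\gamma_*(\theta)^2/\theta^2$. In short: your envelope computation and your fixed-point analysis are correct, but the crucial link $\tfrac{\d}{\d\lambda}\tfrac1n I(\bX;\bG)\approx\tfrac14\MMSE_n(\lambda)$ requires the discrete-channel differentiation lemma, not the Gaussian I--MMSE plus universality.
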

For further discussion of this result and its generalizations, we
refer to Section \ref{sec:Estimation}.
In particular, Corollary \ref{coro:Metrics} establishes that
$\lambda=1$ is a phase transition for other estimation metrics as
well, in particular for overlap and vector mean square error.

\begin{remark}
As Theorem \ref{thm:main}, also the last theorem holds under the mild
condition that the average degree $n\op_n$ diverges \emph{at any,
  arbitrarily slow rate}. This should be contrasted with the phase
transition of naive spectral methods.

It is well understood that the community structure can be estimated by
the principal eigenvector of the centered adjacency matrix
$\bG-\E\{\bG\} = (\bG-\op_n\bone\bone^{\sT})$. (We denote by $\bG$ the
graph as well as its adjacency matrix.) This approach is
successful fro $\lambda>1$ but requires average degree $n\op_n\ge
(\log n)^c$ for $c$ a constant \cite{capitaine2009largest,benaych2011eigenvalues}. 
\end{remark}

\begin{remark}
Our proof of Theorem \ref{thm:main} and Theorem
\ref{thm:MainEstimation}
involves the analysis of a Gaussian observation model, whereby the
rank one matrix $\bX\bX^{\sT}$ is corrupted by additive Gaussian noise,
according to   $\bY= \sqrt{\lambda/n} \bX\bX^\sT + \bZ$.
In particular, we prove a single letter characterization of the
asymptotic  mutual information per dimension in this model
$\lim_{n\to\infty}I(\bX;\bY)$, cf. Theorem
\ref{thm:singleletterGauss} below. The resulting asymptotic value is proved
to coincide with the asymptotic value in the stochastic block model,
as established in Theorem \ref{thm:main}. In other words, the
per-dimension mutual information turns out to be \emph{universal} across
multiple noise models.
\end{remark}

\subsection{Outline of the paper}

In Section \ref{sec:related} we review the literature on this
problem. We then discuss the connection with estimation in Section
\ref{sec:Estimation}. This section also demonstrates how to evaluate the
asymptotic formula in Theorem \ref{thm:main}.

Section \ref{sec:Strategy} describes the proof strategy. As an
intermediate step, we introduce a Gaussian observation model which is
of independent interest. The proof of Theorem \ref{thm:main} is
reduced to two main propositions:
\begin{itemize}
\item Proposition  \ref{prop:gausseq} establishes that --within the
  regime defined in  Theorem \ref{thm:main}-- the stochastic block
  model is asymptotically equivalent to the Gaussian observation
  model (see Section \ref{sec:Strategy} for a formal definition).
  This statement (with explicit error bounds) is proved in
  Section \ref{sec:Gausseq} through a careful application of the
  Lindeberg method.
\item Proposition \ref{prop:singleletter} develops a single-letter
  characterization of the asymptotic per-vertex mutual information
of the Gaussian observation model. The proof of this fact is presented
in Section \ref{sec:singleletterproof} and builds on two steps. We
first prove an asymptotic upper bound on the matrix minimum mean
square error $\MMSE_n(\lambda)$ using an approximate message passing (AMP)
algorithm. We then use an area theorem to prove that this upper bound
is tight.
\end{itemize}

Finally, Section \ref{sec:ProofEstimation} contains the proof of
Theorem  \ref{thm:MainEstimation}.
Several technical details are deferred to the appendices.

\subsection{Notations}

The set of first $n$ integers is denoted by 
$[n] = \{1,2,\dots,n\}$. 

When possible, we will follow the convention of denoting random
variables by upper-case letters (e.g. $X,Y,Z,\dots$), and their values by lower case
letters (e.g. $x,y,z,\dots$). 
We use boldface for vectors and matrices, e.g. $\bX$ for a random
vector and $\bx$ for a deterministic vector. The graph $\bG$ will be
identified with its adjacency matrix.  Namely, with a slight abuse of
notation, we will use $\bG$ both to denote a graph $\bG=(V=[n], E)$
(with $V=[n]$ the vertex set, and $E$ the edge set, i.e. a set of
unordered pairs of vertices), and its adjacency matrix. 
This is a symmetric zero-one matrix $\bG =
(G_{ij})_{1\le i,j\le n}$ with entries
\begin{align}
G_{ij} = \begin{cases}
1 &\; \mbox{if $(i,j)\in E$},\\
0 &\; \mbox{otherwise.}
\end{cases}
\end{align}
Throughout we assume $G_{ii} =0$ by convention.

We write $f_1(n) = f_2(n) + O(f_3(n))$ to mean that $\abs{f_1(n) - f_2(n)}\le Cf_3(n) $ for
a universal constant $C$. We denote by $C$ a generic (large)
constant that is independent of problem parameters, whose value
can change from line to line.

We say that an event holds \emph{with high probability} if it holds
with probability converging to one as $n\to\infty$.

We denote the $\ell_2$ norm of a vector $\bx$ by $\norm{\bx}_2$ and the 
Frobenius norm of a matrix $\bY$ by $\norm{\bY}_F$. The ordinary
scalar product of vectors $\ba,\bb\in\reals^m$ is denoted as
$\<\ba,\bb\> \equiv \sum_{i=1}^ma_ib_i$.

Unless stated otherwise, logarithms will be taken in the natural
basis, and entropies measured in nats.

%
%

\section{Related work} \label{sec:related}
The stochastic block model was first introduced within the social
science literature in \cite{holland}. Around the same time, it was
studied within theoretical computer science \cite{bui,dyer}, under the
name of `planted partition model. '

A large part of the literature has focused on the problem of 
\emph{exact recovery} of the community (cluster) structure.
A long series of papers
\cite{bui,dyer,boppana,snij,jerrum,condon,carson,mcsherry,bickel,rohe,choi,sbm-algos,Vu-arxiv,chen-xu}, 
establishes sufficient conditions on the gap between $p_n$ and $q_n$ that
guarantee exact recovery of 
the vertex labels with  high probability.
A sharp threshold for exact recovery was  obtained in
\cite{abh,mossel-consist}, showing that for 
$p_n=\alpha \log(n)/n$, $q_n= \beta\log(n)/n$, $\alpha,\beta>0$, exact recovery is solvable
(and efficiently so) if and only if $\sqrt{\alpha}- \sqrt{\beta} \geq 2$. 
Efficient algorithms for this problem were also developed in \cite{prout,new-xu, afonso_lap}.
For the SBM with arbitrarily many communities, necessary and sufficient
conditions for exact recovery were recently obtained in 
\cite{colin1}. The resulting sharp threshold is efficiently
achievable and is stated in terms  of a CH-divergence. 

A parallel line of work studied the \emph{detection} problem.
In this case, the estimated community structure is only required to be
asymptotically positively correlated with the ground truth.
For this requirement, two independent groups
\cite{massoulie-STOC,Mossel_SBM2} proved  that detection is solvable
(and so efficiently) if and only if $(a-b)^2 > 2(a+b)$, when
$p_n=a/n$, $q_n=b/n$.
This settles  a conjecture made in \cite{decelle} and improves  on
earlier work \cite{coja-sbm}. Results for detection with more than two
communities were recently obtained in 
\cite{sbm-groth,new-vu,colin1,bordenave2015non}. 
A variant of community detection with a single hidden community in a sparse graph was studied in \cite{am_1comm}. 

In a sense, the present paper bridges detection and exact recovery, by
characterizing the minimum estimation error when this is non-zero,
but --for $\lambda>1$-- smaller than for random guessing.

An information-theoretic view of the SBM was first introduced in
\cite{random,abbetoc}. 
There it was shown that in the regime of $p_n=a/n$, $q_n=b/n$, and $a
\leq b$ (i.e., disassortative communities), 
the normalized mutual information $I(\bX;\bG)/n$ admits a limit as
$n\to\infty$. This result is obtained by showing that the condition
entropy 
$H(\bX|\bG)$ is sub-additive in $n$, using an interpolation method for
planted models. While the result of \cite{random,abbetoc} 
holds for arbitrary $a\leq b$ (possibly small) and extend to a broad
family of planted models, 
the existence of the limit in the assortative case $a>b$ is left open.
Further, sub-additivity methods do not provide any insight as to the
limit value.

For the partial recovery of the communities, it was shown in
\cite{mossel-consist} that the communities 
can be recovered up to a vanishing fraction of the nodes if and only
if $n(p-q)^2/(p+q)$ diverges. 
This is generalized in \cite{colin1} to the case of more than two communities.
In these regimes, the normalized mutual information $I(\bX;\bG)/n$ (as
studied in this paper) tends to $\log 2$ nats. 
For the constant degree regime, it was shown in \cite{mossel2} that
when 
$(a-b)^2/(a+b)$ is sufficiently large, the fraction of nodes that can
be recovered is determined by the broadcasting problem on tree
\cite{evans}. 
Namely, consider the reconstruction problem whereby a bit is
broadcast on a Galton-Watson tree with  Poisson($(a+b)/2$) offspring
and with binary symmetric channels of bias $b/(a+b)$ on each branch.
Then the probability of recovering the bit correctly from the leaves
at large depth gives the fraction of nodes that can be correctly
labeled in the SBM.

In terms of proof techniques, our arguments are closest to \cite{korada2011applications,deshpande2014information}. We use the well-known Lindeberg strategy to reduce computation of
mutual information in the SBM to mutual information of the Gaussian
observation model. We then compute  the latter mutual information by 
developing sharp algorithmic upper bounds, which are then shown to be asymptotically
tight via an area theorem. 
The Lindeberg strategy builds from \cite{korada2011applications,
chatterjee2006generalization} while the area theorem argument also 
appeared in \cite{montanari2006analysis}. We expect these techniques to
be more broadly applicable to compute quantities like normalized mutual
information or conditional entropy in a variety of models.

Let us finally mentioned that the result obtained in this paper are
likely to extend to more general SBMs, 
with multiple communities, to the Censored Block Model studied in
\cite{abbetoc,abbs,Huang_Guibas_Graphics,Chen_Huang_Guibas_Graphics,Chen_Goldsmith_ISIT2014,abbs-isit,rough,new-xu,new-vu,florent_CBM},
the Labeled Block Model \cite{label_marc,jiaming}, 
and other variants of block models. In particular, it would be
interesting to understand which estimation-theoretic quantities 
appear for these models, and whether a general result stands 
behind the case of this paper. 

While this paper was in preparation, Lesieur, Krzakala and Zdborov\'a
\cite{lesieur2015mmse} studied estimation of low-rank matrices
observed through noisy memoryless channels. They conjectured that the
resulting minimal estimation error is universal across a variety of
channel models. Our proof  (see Section \ref{sec:Strategy} below)
establishes universality across two such models: the Gaussian and the
binary output channels. We expect that similar techniques can be
useful to prove universality for other models as well.

%
%

\section{Estimation phase transition}
\label{sec:Estimation}

In this section we discuss how to evaluate the asymptotic formulae in 
Theorem \ref{thm:main} and Theorem \ref{thm:MainEstimation}.
We then discuss the consequences of our results for various
estimation metrics.

Before passing to these topics, we will derive a simple upper bound on
the per-vertex mutual information,
which will be a useful comparison for our results.

\subsection{An elementary upper bound}

It is instructive to start with an elementary upper bound on
$I(\bX;\bG)$.
\begin{lemma}\label{lemma:Elementary}
Assume $p_n$, $q_n$ satisfy the assumptions of Theorem \ref{thm:main}
(in particular $(i)$ $\lambda_n\to\lambda$ and $(ii)$
$n\op_n(1-\op_n)\to \infty$). Then
\begin{align}
\limsup_{n\to\infty}\frac{1}{n}I(\bX;\bG)\le \frac{\lambda}{4}\, .
\end{align}
\end{lemma}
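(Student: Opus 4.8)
The plan is to bound $I(\bX;\bG)$ using the chain rule together with the fact that the graph $\bG$ is obtained from the latent labels by a memoryless (edge-by-edge) channel, so that conditioning on the labels makes the edges independent. Concretely, I would first write $I(\bX;\bG) = H(\bG) - H(\bG\,|\,\bX)$. Since each edge $G_{ij}$ is, conditional on $\bX$, a Bernoulli variable with parameter $p_n$ or $q_n$, we have $H(\bG\,|\,\bX) = \sum_{i<j} h(G_{ij}\,|\,X_i,X_j)$ where $h$ is the binary entropy, and $H(\bG)\le \sum_{i<j} H(G_{ij})$ by subadditivity of entropy. Each $G_{ij}$ marginally is $\Ber(\op_n)$ (since the labels are uniform and independent, an unordered pair is ``same group'' or ``different group'' with probability $1/2$ each, giving edge probability $\tfrac12 p_n + \tfrac12 q_n = \op_n$). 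Hence

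\begin{align}
\frac{1}{n}I(\bX;\bG) &\le \frac{1}{n}\sum_{i<j}\Big[ h(\op_n) - \tfrac12 h(p_n) - \tfrac12 h(q_n)\Big]\, ,
\end{align}
where $h(x) = -x\log x - (1-x)\log(1-x)$.

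The next step is a second-order Taylor estimate. The bracketed quantity is exactly the ``Jensen gap'' of the concave function $h$ at the two points $p_n,q_n$ with equal weights, i.e. $h\big(\tfrac{p_n+q_n}{2}\big) - \tfrac12 h(p_n) - \tfrac12 h(q_n)$. By Taylor expansion around $\op_n$, this equals $-\tfrac18 h''(\xi_n)(p_n-q_n)^2$ for some $\xi_n$ between $p_n$ and $q_n$ (or one can expand at $\op_n$ and control the remainder). Since $h''(x) = -1/(x(1-x))$, and under the assumption $n\op_n(1-\op_n)\to\infty$ we have $p_n,q_n\to 0$ or at least $p_n/\op_n, q_n/\op_n$ bounded, one gets $-h''(\xi_n) \le (1+o(1))/(\op_n(1-\op_n))$. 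Therefore
\begin{align}
\frac{1}{n}I(\bX;\bG) &\le \frac{1}{n}\binom{n}{2}\cdot\frac{(1+o(1))(p_n-q_n)^2}{8\,\op_n(1-\op_n)} = \frac{(1+o(1))\,n(p_n-q_n)^2}{16\,\op_n(1-\op_n)} = \frac{(1+o(1))\lambda_n}{4}\, ,
\end{align}
and taking $n\to\infty$ with $\lambda_n\to\lambda$ gives the claim.

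The only delicate point is justifying that $-h''(\xi_n) \le (1+o(1))/(\op_n(1-\op_n))$ uniformly, i.e. that $\xi_n$ (lying between $p_n$ and $q_n$) satisfies $\xi_n(1-\xi_n)\ge(1-o(1))\op_n(1-\op_n)$. This is where the hypothesis $n\op_n(1-\op_n)\to\infty$ is used: together with $\lambda_n\to\lambda$ (so $(p_n-q_n)^2 = O(\op_n(1-\op_n)/n)$, hence $|p_n-q_n|/\op_n \to 0$ when $\op_n\to 0$, and similarly $|p_n-q_n|/(1-\op_n)\to 0$ when $\op_n\to 1$), it forces $p_n$ and $q_n$ to be within a $(1+o(1))$ factor of $\op_n$ in the relevant sense. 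A clean way to handle this without case analysis on whether $\op_n\to 0$, $\to 1$, or stays bounded away from both is to use the integral form of the Taylor remainder, $h(\op_n)-\tfrac12 h(p_n)-\tfrac12 h(q_n) = \tfrac14\int\!\int \ldots$, or simply to bound $\min_{x\in[p_n\wedge q_n,\,p_n\vee q_n]} x(1-x)$ from below directly using $|p_n-q_n|\to 0$ relative to $\op_n(1-\op_n)$. I expect this uniformity bookkeeping to be the main (though routine) obstacle; everything else is the standard $I = H(\bG)-H(\bG|\bX)$ decomposition plus concavity of binary entropy.
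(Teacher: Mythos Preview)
Your proposal is correct and follows essentially the same route as the paper: both use $I(\bX;\bG)=H(\bG)-H(\bG|\bX)$, conditional independence of edges given $\bX$, and subadditivity of $H(\bG)$ to reduce to $\binom{n}{2}I(X_1X_2;G_{12})$, which the paper writes as $\tfrac12 D(p_n\|\op_n)+\tfrac12 D(q_n\|\op_n)$ and you write as the Jensen gap $h(\op_n)-\tfrac12 h(p_n)-\tfrac12 h(q_n)$ (these are identical since $\op_n=\tfrac12(p_n+q_n)$). The paper then Taylor expands the explicit expression after substituting $p_n=\op_n+\sqrt{\op_n(1-\op_n)\lambda_n/n}$, $q_n=\op_n-\sqrt{\op_n(1-\op_n)\lambda_n/n}$, whereas you use the mean-value form $-\tfrac18 h''(\xi_n)(p_n-q_n)^2$ together with $|\xi_n-\op_n|\le\Delta_n=o(\op_n(1-\op_n))$; both arrive at $\lambda_n/4+o(1)$ by the same mechanism.
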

\begin{proof}
We have
\begin{align}
\frac{1}{n}\, I(\bX;\bG) &= \frac{1}{n}\, H(\bG)-\frac{1}{n}\,
  H(\bG|\bX) \\
&\stackrel{(a)}{=}  \frac{1}{n}\, H(\bG)-\frac{1}{n}\sum_{1\le i<j\le n}
  H(G_{ij}|\bX)\\
&\stackrel{(b)}{=}  \frac{1}{n}\, H(\bG)-\frac{1}{n}\sum_{1\le i<j\le n}
  H(G_{ij}|X_i\cdot X_j)\\
&\le \frac{1}{n}\sum_{1\le i<j\le n} I(X_i\cdot X_j;G_{ij}) =
  \frac{n-1}{2}\, I(X_1\cdot X_2;G_{12}) \, ,
\end{align} 
where $(a)$ follows since $\{G_{ij}\}_{i<j}$ are conditionally
independent given $\bX$ and $(b)$ because $G_{ij}$ only depends on
$\bX$ through the product $X_i\cdot X_j$ (notice that there is no
comma but product in $H(G_{ij}|X_i\cdot X_j)$.

From our model, it is easy to check that
\begin{align}
I(X_1\cdot X_2; G_{12}) = \frac{1}{2}\, p_n\log\frac{p_n}{\op_n} +
  \frac{1}{2}\, q_n\log\frac{q_n}{\op_n}+\frac{1}{2}\, (1-p_n)\log\frac{1-p_n}{1-\op_n} +
  \frac{1}{2}\, (1-q_n)\log\frac{1-q_n}{1-\op_n}\, .
\end{align}
The claim follows by substituting $p_n =
\op_n+\sqrt{\op_n(1-\op_n)\lambda_n/n}$,
$q_n = \op_n-\sqrt{\op_n(1-\op_n)\lambda_n/n}$ and by Taylor
expansion\footnote{Indeed Taylor expansion yields the stronger result 
$n^{-1}I(\bX;\bG) \le  (\lambda_n/4)+n^{-1}$ for all $n$ large enough.}.
\end{proof}

\subsection{Evaluation of the asymptotic formula}

\begin{figure}[t!]
\hspace{3.5cm}\includegraphics[scale=0.5]{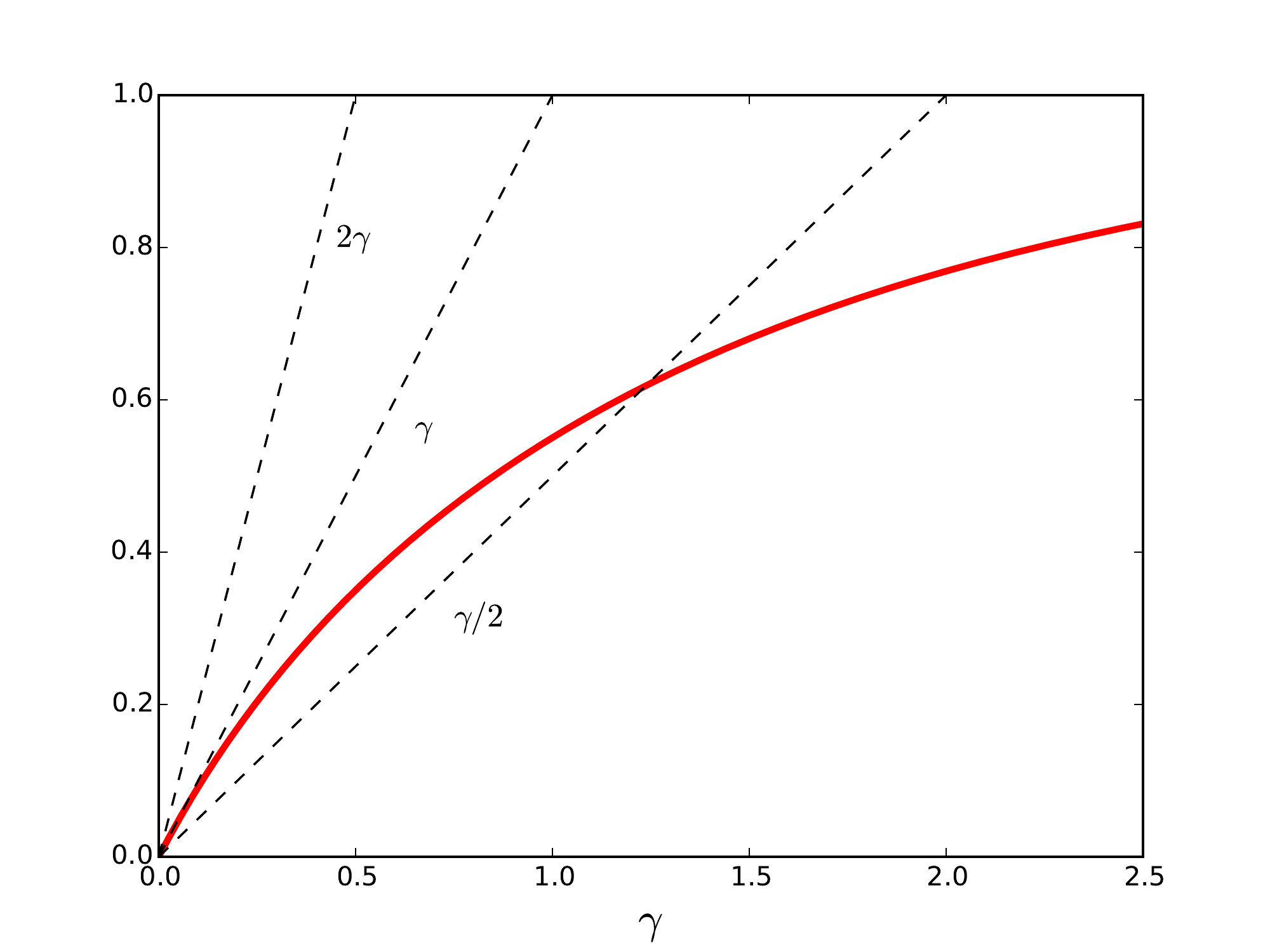}
\put(-90,120){$1-\mmse(\gamma)$}
 \caption{Illustration of the fixed point equation
   Eq.~(\ref{eq:MainEquation}). The `effective signal-to-noise ratio'
   $\gamma_*(\lambda)$ is given by the intersection of the curve
$\gamma\mapsto G(\gamma) = 1-\mmse(\gamma)$, and the line $\gamma/\lambda$.}
  \label{fig:FormulaEvaluation}
\end{figure}

\begin{figure}[t!]
\includegraphics[scale=0.4]{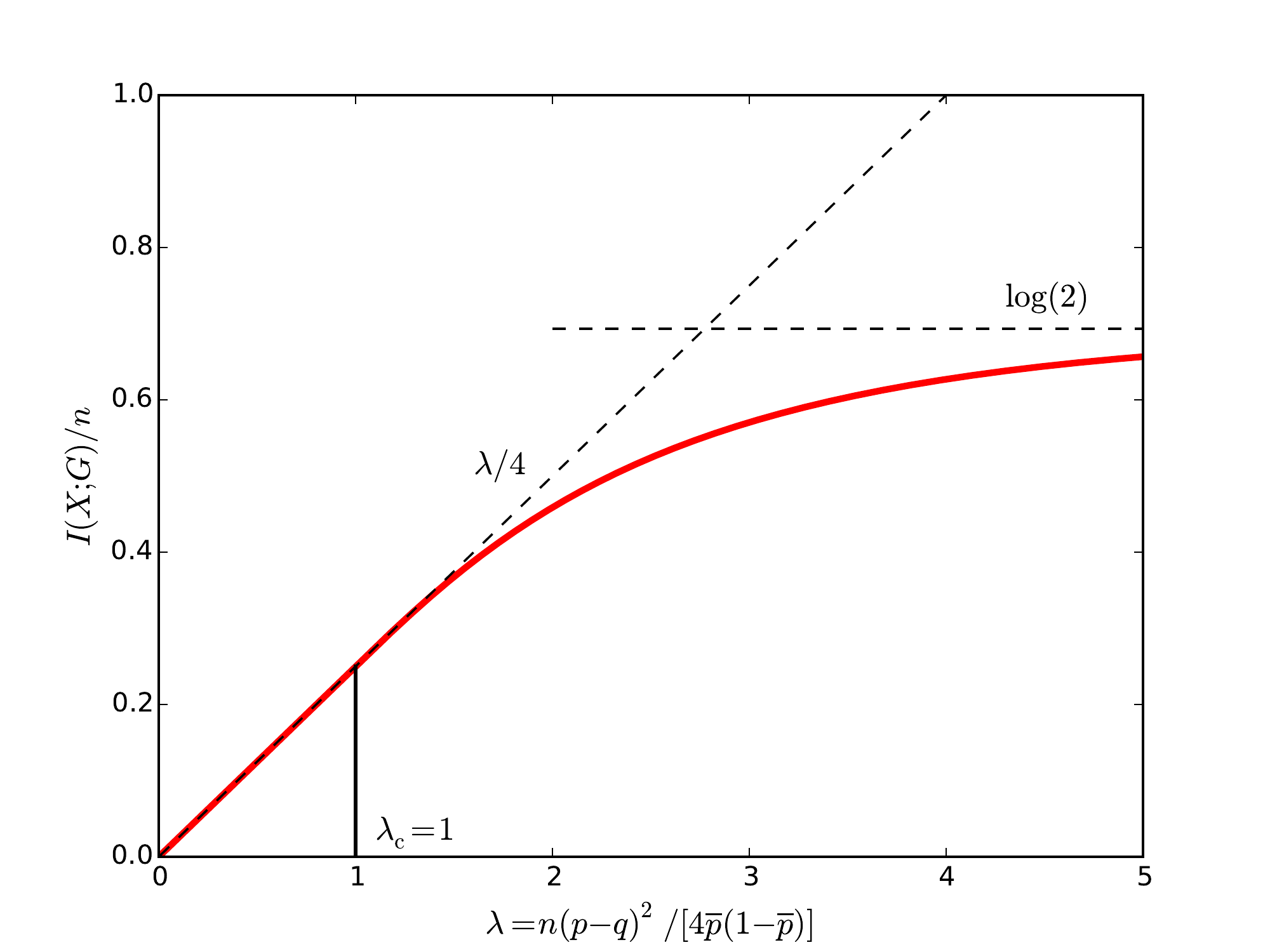}
 \includegraphics[scale=0.4]{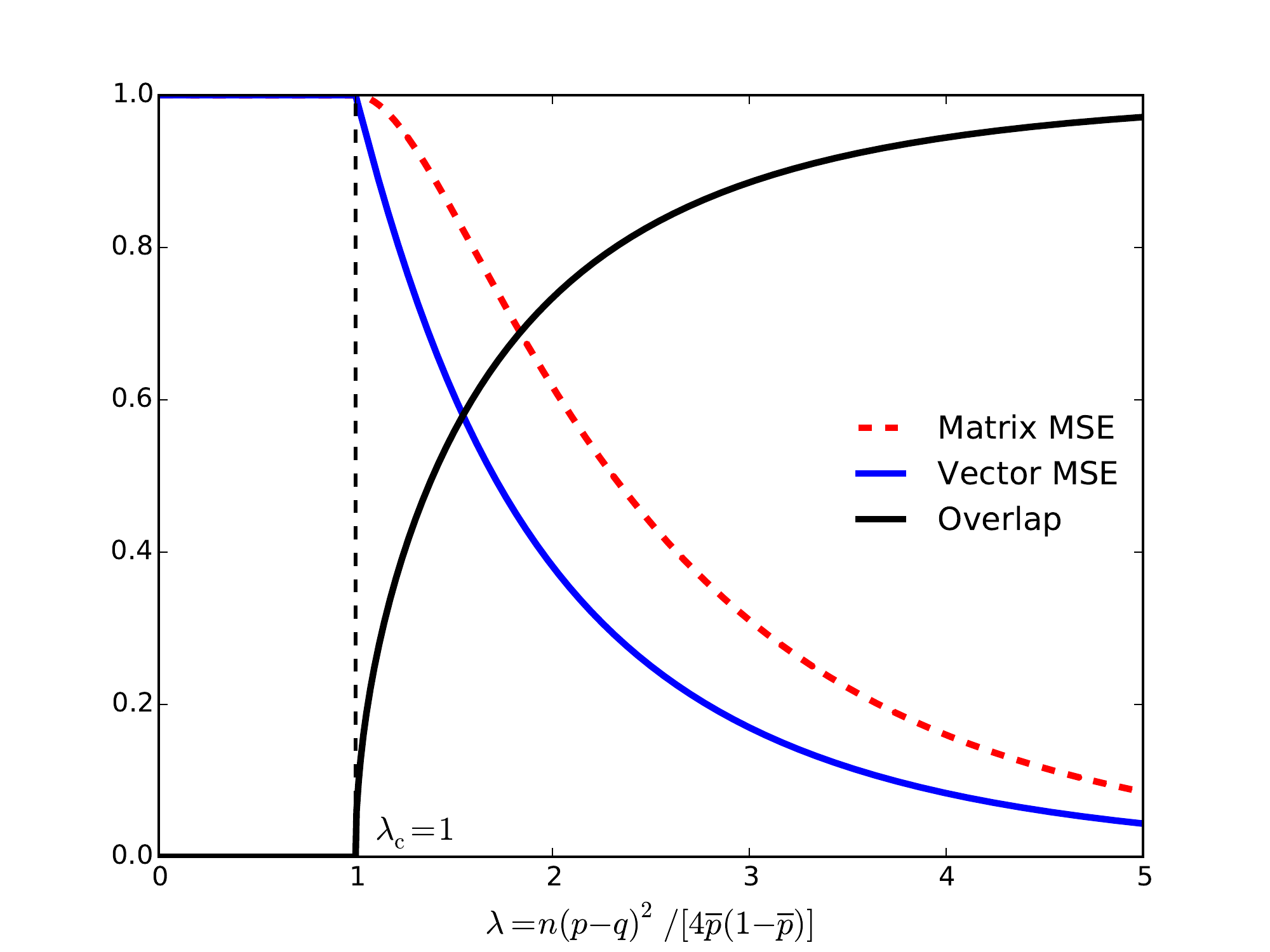}
 \caption{Left frame: Asymptotic mutual information per vertex of the
   two-groups stochastic block model, as a function of the
   signal-to-noise ratio $\lambda$. The dashed lines are simple upper
   bounds: $\lim_{n\to\infty} I(\bX;\bG)/n\le \lambda/4$ (cf. Lemma
   \ref{lemma:Elementary}) and $I(\bX;\bG)/n\le \log 2$. Right frame:
Asymptotic estimation error under different metrics (see Section
\ref{sec:Est}). Note the phase transition at $\lambda=1$ in both
frames.}
  \label{fig:InformationEstimation}
\end{figure}

Our  asymptotic expression for the mutual information, cf. Theorem
\ref{thm:main}, and for the estimation error, cf. Theorem
\ref{thm:MainEstimation}, depends on the solution of
Eq.~(\ref{eq:MainEquation}) which we copy here for the reader's
convenience:
  \begin{align}
      \gamma &= \lambda\big(1- \mmse(\gamma)\big)  \; \equiv \lambda
      \, G(\gamma)\, .\label{eq:MainEquation2}
  \end{align}
Here we defined
\begin{align}
G(\gamma) = 1- \mmse(\gamma) = \E\big\{\tanh(\gamma+\sqrt{\gamma}\,
Z)^2\big\}\, . \label{eq:Gdef}
\end{align}
The  effective signal-to-noise ratio $\gamma_*(\lambda)$ that enters Theorem
\ref{thm:main} and Theorem \ref{thm:MainEstimation} is the largest non-negative solution of
Eq.~(\ref{eq:MainEquation}). 
This equation is illustrated in Figure
\ref{fig:FormulaEvaluation}.

It is immediate to show from the definition (\ref{eq:Gdef}) that $G(\,\cdot\,)$ is continuous on
$[0,\infty)$ with $G(0) = 0$, and $\lim_{\gamma\to\infty}G(\gamma) =
1$. This in particular implies that $\gamma = 0$ is always a solution
of Eq.~(\ref{eq:MainEquation}).
Further, since $\mmse(\gamma)$ is monotone decreasing in the
signal-to-noise ratio $\gamma$, $G(\gamma)$ is monotone increasing. 
As shown in the proof of Remark \ref{rem:unique} (see Appendix
\ref{sec:ProofUnique}), $G(\,\cdot\,)$ is also strictly concave on
$[0,\infty)$. This implies that Eq.~(\ref{eq:MainEquation}) as at most
one solution in $(0,\infty)$, and a strictly positive solution only
exists if $\lambda G'(0) = \lambda >1$.

We summarize these remarks below, and refer to Figure
\ref{fig:InformationEstimation} for an illustration.
\begin{lemma}\label{lemma:FixedPoint}
The effective SNR, and the asymptotic expression for the per-vertex
mutual information in Theorem \ref{thm:main} have the following
properties:
\begin{itemize}
\item For $\lambda\le 1$, we have $\gamma_*(\lambda) = 0$ and $\Psi(\gamma_*(\lambda),
    \lambda) = \lambda/4$.
\item For $\lambda\le 1$, we have $\gamma_*(\lambda) \in (0,\lambda)$
  strictly with $\gamma_*(\lambda)/\lambda\to 1$ as
  $\lambda\to\infty$. 

Further, $\Psi(\gamma_*(\lambda),
    \lambda) < \lambda/4$ strictly with $\Psi(\gamma_*(\lambda),
    \lambda)\to\log 2$ as $\gamma\to\infty$.
\end{itemize}
\end{lemma}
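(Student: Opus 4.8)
The plan is to analyze the fixed-point equation $\gamma = \lambda G(\gamma)$ directly, using the properties of $G(\gamma) = 1 - \mmse(\gamma) = \E\{\tanh(\gamma + \sqrt{\gamma}Z)^2\}$ that have already been recorded in the surrounding text: $G$ is continuous on $[0,\infty)$, $G(0)=0$, $G(\gamma)\to 1$ as $\gamma\to\infty$, $G$ is strictly increasing, strictly concave (via Remark \ref{rem:unique}), and $G'(0)=1$. First I would dispatch the case $\lambda \le 1$: by strict concavity, $G(\gamma) < G(0) + G'(0)\gamma = \gamma$ for all $\gamma>0$, so $\lambda G(\gamma) \le G(\gamma) < \gamma$ and the only nonnegative solution is $\gamma_*(\lambda)=0$. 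Substituting $\gamma=0$ into $\Psi$, and noting $\Info(0)=0$ from \eqref{eq:InfoFormula}, gives $\Psi(0,\lambda) = \lambda/4$. This settles the first bullet.

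For $\lambda > 1$, I would first show existence and location of a strictly positive solution. Set $h(\gamma) = \lambda G(\gamma) - \gamma$. Then $h(0)=0$, $h'(0) = \lambda G'(0) - 1 = \lambda - 1 > 0$, so $h>0$ for small $\gamma>0$; meanwhile $h(\gamma) = \lambda G(\gamma) - \gamma \to -\infty$ as $\gamma\to\infty$ since $G$ is bounded. Continuity gives a root in $(0,\infty)$, and strict concavity of $G$ makes $h$ strictly concave, so the root in $(0,\infty)$ is unique; call it $\gamma_*(\lambda)$. Since $G(\gamma)<1$ for all finite $\gamma$, we get $\gamma_* = \lambda G(\gamma_*) < \lambda$, giving $\gamma_*(\lambda)\in(0,\lambda)$. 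For the limit $\gamma_*(\lambda)/\lambda \to 1$: from $\gamma_*/\lambda = G(\gamma_*)$ and the fact that $\gamma_*\to\infty$ as $\lambda\to\infty$ (which follows because $G$ concave increasing with $G(\gamma)\to1$ forces any fixed threshold to be eventually exceeded), $G(\gamma_*)\to 1$, hence $\gamma_*/\lambda\to 1$.

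It remains to show $\Psi(\gamma_*(\lambda),\lambda) < \lambda/4$ strictly and $\Psi(\gamma_*(\lambda),\lambda)\to\log 2$. For the strict inequality I would compare $\Psi(\gamma_*,\lambda)$ with $\Psi(0,\lambda)=\lambda/4$. The natural route is to view $\gamma\mapsto\Psi(\gamma,\lambda) = \frac{\lambda}{4} + \frac{\gamma^2}{4\lambda} - \frac{\gamma}{2} + \Info(\gamma)$ and differentiate: using the I-MMSE-type relation $\Info'(\gamma) = \tfrac12\,\mmse(\gamma) = \tfrac12(1-G(\gamma))$ (which is the scalar analogue one would verify from \eqref{eq:InfoFormula}), one gets $\partial_\gamma \Psi(\gamma,\lambda) = \frac{\gamma}{2\lambda} - \frac{1}{2} + \frac12(1-G(\gamma)) = \frac{1}{2\lambda}(\gamma - \lambda G(\gamma))$, which is negative exactly on $(0,\gamma_*(\lambda))$ and zero at $\gamma_*$. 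Hence $\gamma_*$ is the global minimum of $\Psi(\cdot,\lambda)$ on $[0,\infty)$ and $\Psi(\gamma_*,\lambda) < \Psi(0,\lambda) = \lambda/4$ strictly whenever $\gamma_*>0$, i.e. for $\lambda>1$. For the large-$\lambda$ limit, I would use $\gamma_*\to\infty$ together with $\Info(\gamma)\to\log 2$ as $\gamma\to\infty$ (immediate from \eqref{eq:InfoFormula}: $\E\log\cosh(\gamma+\sqrt\gamma Z) = \gamma - \log 2 + o(1)$) and the fact that $\frac{\gamma_*^2}{4\lambda} - \frac{\gamma_*}{2} = \frac{\gamma_*}{4}(\frac{\gamma_*}{\lambda} - 2) \to -\frac{\lambda}{4} + o(\lambda)$ from $\gamma_*/\lambda\to1$; more carefully, writing $\gamma_* = \lambda - \delta_\lambda$ with $\delta_\lambda = \lambda(1-G(\gamma_*)) = \lambda\,\mmse(\gamma_*) = o(\lambda)$, one finds $\frac{\lambda}{4} + \frac{\gamma_*^2}{4\lambda} - \frac{\gamma_*}{2} = \frac{\delta_\lambda^2}{4\lambda} \to 0$, so $\Psi(\gamma_*,\lambda)\to\log 2$.

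The main obstacle is making the limit $\gamma_*(\lambda)/\lambda\to1$ and the corresponding $\Psi\to\log 2$ genuinely rigorous, which requires quantitative control of the tail behavior of $\mmse(\gamma)$ as $\gamma\to\infty$ — one needs $\lambda\,\mmse(\gamma_*(\lambda)) = o(\lambda)$, i.e. $\mmse(\gamma_*(\lambda))\to 0$, and then that $\delta_\lambda^2/\lambda\to0$. This follows from the exponential decay $\mmse(\gamma) = O(e^{-c\gamma})$ readable off \eqref{eq:mmseFormula}, but that estimate has to be stated and the self-consistency $\gamma_*\to\infty$ invoked; everything else is a routine consequence of concavity of $G$ and the I-MMSE identity.
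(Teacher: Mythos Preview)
Your proposal is correct and follows essentially the same route as the paper: the paper's proof simply cites the preceding concavity remarks and then defers the strict inequality $\Psi(\gamma_*(\lambda),\lambda)<\lambda/4$ to the variational characterization $\Psi(\gamma_*(\lambda),\lambda)=\min_{\gamma\ge 0}\Psi(\gamma,\lambda)$ (stated as the next lemma), whose proof is exactly your derivative computation $\partial_\gamma\Psi(\gamma,\lambda)=\tfrac{1}{2\lambda}\big(\gamma-\lambda G(\gamma)\big)$. One small simplification: you do not need exponential decay of $\mmse$ for the $\lambda\to\infty$ limit---the linear-estimator bound $\mmse(\gamma)\le 1/(1+\gamma)$ (which the paper uses in Lemma~\ref{lem:limpoints}) already gives $\gamma_*(\lambda)\ge \lambda-1$, hence $\delta_\lambda=\lambda\,\mmse(\gamma_*)=O(1)$ and $\delta_\lambda^2/\lambda\to 0$ follows immediately.
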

\begin{proof}
All of the claims follow immediately form the previous remarks, and
simple calculus, except the claim $\Psi(\gamma_*(\lambda),\lambda) <
\lambda/4$ for
$\lambda>1$. This is direct consequence of the variational
characterization established below.
\end{proof}

We next give an alternative (variational) characterization of the asymptotic formula
which is useful for proving bounds.
\begin{lemma}
Under the assumptions and definitions of Theorem \ref{thm:main}, we
have
\begin{align}
\lim_{n\to\infty}  \frac{1}{n}\, I(\bX;\bG)= \Psi(\gamma_*(\lambda),
    \lambda) =\min_{\gamma\in [0,\infty)} \, \Psi(\gamma,\lambda)\, .
\end{align}
\end{lemma}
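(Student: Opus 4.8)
The first equality is exactly the content of Theorem~\ref{thm:main}, which I may assume; so the real task is the variational identity $\Psi(\gamma_*(\lambda),\lambda)=\min_{\gamma\in[0,\infty)}\Psi(\gamma,\lambda)$, i.e. that the effective SNR $\gamma_*(\lambda)$ is a \emph{global} minimizer of $\gamma\mapsto\Psi(\gamma,\lambda)$ on $[0,\infty)$. The plan is a one-dimensional first-order analysis: I would first identify the critical points of $\Psi(\cdot,\lambda)$ with the solutions of the fixed-point equation~\eqref{eq:MainEquation}, and then use the strict concavity of $G$ (already established, via Appendix~\ref{sec:ProofUnique}) to determine which critical point realizes the global minimum.

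First I would compute the derivative. Writing $\Psi(\gamma,\lambda)=\lambda/4+\gamma^2/(4\lambda)-\gamma/2+\Info(\gamma)$, this reduces to knowing $\Info'(\gamma)$. I would use the I-MMSE relation $\Info'(\gamma)=\tfrac12\,\mmse(\gamma)$, which in the present case follows directly by differentiating the explicit formula~\eqref{eq:InfoFormula} and a Gaussian integration by parts: with $f(z)=\tanh(\gamma+\sqrt\gamma\,z)$, Stein's identity $\E[Z_0 f(Z_0)]=\E[f'(Z_0)]$ gives $\E[Z_0\tanh(\gamma+\sqrt\gamma\,Z_0)]=\sqrt\gamma\,\E[\sech^2(\gamma+\sqrt\gamma\,Z_0)]=\sqrt\gamma\,\mmse(\gamma)$, and combined with the posterior-mean identity $\E[\tanh(\gamma+\sqrt\gamma\,Z_0)]=\E[\tanh^2(\gamma+\sqrt\gamma\,Z_0)]=G(\gamma)$ for the channel~\eqref{eq:scalarprob_1} this yields $\Info'(\gamma)=\tfrac12\,\mmse(\gamma)$. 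Hence
\begin{align}
\partial_\gamma\Psi(\gamma,\lambda)=\frac{\gamma}{2\lambda}-\frac12+\frac12\,\mmse(\gamma)=\frac{1}{2\lambda}\big(\gamma-\lambda\,G(\gamma)\big)=:\frac{1}{2\lambda}\,h(\gamma)\,,
\end{align}
so the critical points of $\Psi(\cdot,\lambda)$ on $(0,\infty)$ are precisely the positive solutions of~\eqref{eq:MainEquation}, while $\gamma=0$ is always a boundary critical point.

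Next I would analyze the sign of $h(\gamma)=\gamma-\lambda\,G(\gamma)$ using the recorded properties of $G$: continuity, $G(0)=0$, $G'(0)=1$, $\lim_{\gamma\to\infty}G(\gamma)=1$, and strict concavity on $[0,\infty)$. These give $h(0)=0$, $h'(0)=1-\lambda$, $h''=-\lambda G''>0$ (so $h$ is strictly convex), and $h(\gamma)\to\infty$. If $\lambda\le1$, strict concavity forces $G(\gamma)<\gamma$ for $\gamma>0$, hence $h(\gamma)>(1-\lambda)\gamma\ge0$ for $\gamma>0$; thus $\Psi(\cdot,\lambda)$ is strictly increasing on $(0,\infty)$ and its minimum on $[0,\infty)$ is at $\gamma=0=\gamma_*(\lambda)$. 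If $\lambda>1$, then $h'(0)<0$, so $h<0$ just to the right of $0$; being strictly convex with $h(\gamma)\to+\infty$, $h$ has exactly one further zero in $(0,\infty)$, which is therefore the unique positive solution of~\eqref{eq:MainEquation}, i.e. $\gamma_*(\lambda)$, with $h<0$ on $(0,\gamma_*(\lambda))$ and $h>0$ on $(\gamma_*(\lambda),\infty)$. Consequently $\Psi(\cdot,\lambda)$ strictly decreases on $[0,\gamma_*(\lambda)]$ and strictly increases afterward, so $\gamma_*(\lambda)$ is the unique global minimizer. In either case $\min_{\gamma\ge0}\Psi(\gamma,\lambda)=\Psi(\gamma_*(\lambda),\lambda)$, which is the claim; as a byproduct, for $\lambda>1$ one gets $\Psi(\gamma_*(\lambda),\lambda)<\Psi(0,\lambda)=\lambda/4$, which also supplies the missing step in the proof of Lemma~\ref{lemma:FixedPoint}. (That a minimum exists is clear anyway from coercivity: $\Psi(\gamma,\lambda)\ge\lambda/4+\gamma^2/(4\lambda)-\gamma/2$ together with $\Info\ge0$.)

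There is no deep obstacle here: the two nontrivial inputs --- the I-MMSE relation and the strict concavity of $G$ --- are both available (the former an elementary computation on~\eqref{eq:InfoFormula}--\eqref{eq:mmseFormula}, the latter proved in Appendix~\ref{sec:ProofUnique}), and everything else is one-variable calculus. The only point that deserves a little care is the bookkeeping in the case $\lambda>1$: one must check that the second zero of the strictly convex function $h$ is indeed the \emph{largest} solution $\gamma_*(\lambda)$ of~\eqref{eq:MainEquation} (true, since the solution set is $\{0,\gamma_*(\lambda)\}$) and that this critical point is a \emph{global} minimum on all of $[0,\infty)$ rather than merely a local one --- which is exactly what the sign pattern of $h$ (negative then positive) delivers.
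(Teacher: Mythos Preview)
Your proof is correct and follows essentially the same approach as the paper: compute $\partial_\gamma\Psi$ via the I-MMSE relation, identify the critical points with the solutions of Eq.~\eqref{eq:MainEquation}, and use coercivity plus the structure of the solution set to locate the global minimum. The only (minor) difference is in the $\lambda>1$ endgame: the paper discards $\gamma=0$ by a local Taylor expansion $\Psi(\gamma,\lambda)=\Psi(0,\lambda)-(1-\lambda^{-1})\gamma^2/4+o(\gamma^2)$ showing $0$ is a local maximum, whereas you track the sign of $h(\gamma)=\gamma-\lambda G(\gamma)$ globally via its strict convexity to see directly that $\Psi$ decreases on $[0,\gamma_*(\lambda)]$ and increases afterward. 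Your variant is slightly more informative (it gives the full monotonicity profile), but the two arguments are interchangeable.
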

\begin{proof}
The function $\gamma\mapsto\Psi(\gamma,\lambda)$ is differentiable on
$[0,\infty)$
with $\Psi(\gamma,\lambda) = \gamma^2/(4\lambda) +O(\gamma)\to\infty$
as $\gamma\to\infty$. Hence, the $\min_{\gamma\in [0,\infty)} \,
\Psi(\gamma,\lambda)$
is achieved at a point where the first derivative vanishes (or,
eventually, at $0$). 
Using the I-MMSE relation \cite{guo2005mutual}, we get
\begin{align}
\frac{\partial\Psi}{\partial\gamma}(\gamma,\lambda) =
\frac{\gamma}{2\lambda}-\frac{1}{2}+ \frac{1}{2}\, \mmse(\gamma)\, .\label{eq:DerPsi}
\end{align}
Hence the minimizer is a solution of Eq.~(\ref{eq:MainEquation}). 
As shown above, for $\lambda\le 1$, the only solution is
$\gamma_*(\lambda) = 0$, which therefore yields $\Psi(\gamma_*(\lambda),
    \lambda) =\min_{\gamma\in [0,\infty)} \, \Psi(\gamma,\lambda)$ as
    claimed.

For $\lambda>1$, Eq.~(\ref{eq:MainEquation}) admits the two solutions:
$0$ and $\gamma_*(\lambda)>0$. However, by expanding
Eq.~(\ref{eq:DerPsi}) for small $\gamma$, we obtain
$\Psi(\gamma,\lambda) = \Psi(0,\lambda) - (1-\lambda^{-1})\gamma^2/4 +
o(\gamma)$ and hence $\gamma=0$ is a local maximum, which implies the
claim for $\lambda>1$ as well.
\end{proof}

We conclude by noting that Eq.~(\ref{eq:MainEquation}) can be solved
numerically rather efficiently. The simplest method consists is by
iteration. Namely, we initialize $\gamma^0=\lambda$ and then iterate
$\gamma^{t+1} = \lambda\,G(\gamma^t)$. This approach was used for
Figure \ref{fig:InformationEstimation}.

\subsection{Consequences for estimation}
\label{sec:Est}

Theorem \ref{thm:MainEstimation} establishes that a phase transition
takes place at $\lambda=1$ for the matrix minimum mean square error $\MMSE_n(\lambda_n)$
defined in Eq.~(\ref{eq:MMSEdef}). 
Throughout this section, we will omit the subscript $n$ to denote the
$n\to\infty$ limit (for instance, we write $\MMSE(\lambda) \equiv
\lim_{n\to\infty}\MMSE_n(\lambda_n)$). 

Figure \ref{fig:InformationEstimation} reports the asymptotic
prediction  for $\MMSE(\lambda)$ stated in Theorem
\ref{thm:MainEstimation},
and evaluated as discussed above. The error decreases rapidly to $0$
for $\lambda>1$. 

In this section we discuss two other estimation metrics. In both cases
we define these metrics by optimizing a suitable risk over a class of 
estimators: it is understood that randomized estimators are admitted
as well.
\begin{itemize}
\item The first metric is the \emph{vector minimum mean square error}:
\begin{align}
\vmse_n(\lambda_n)  &= \frac{1}{n}\, \inf_{\hbx: \cG_n\to \reals^n}
                      \E\Big\{\min_{s\in\{+1,-1\}}\big\|\bX-s\,
                      \hbx(\bG)\big\|_2^2\Big\}\, . \label{eq:vmseDef}
\end{align}
Note the minimization over the sign $s$: this is necessary because the
vertex labels can be estimated only up to an overall flip.
Of course  $\vmse_n(\lambda_n)\in [0,1]$, since it is always possible
to achieve vector mean square error equal to one by returning
$\hbx(\bG) = 0$.   
\item The second metric is the \emph{overlap}:
\begin{align}
\overlap_n(\lambda_n)= \frac{1}{n}\sup_{\hbs: \cG_n\to
  \{+1,-1\}^n}\E\big\{|\<\bX,\hbs(\bG)\>|\big\}\, .\label{eq:OverlapDef}
\end{align}
Again $\overlap_n(\lambda_n)\in [0,1]$ (but now large overlap
corresponds to good estimation). Indeed by returning $\hs_i(\bG)
\in\{+1,-1\}$ uniformly at random, we obtain
$\E\big\{|\<\bX,\hbs(\bG)\>|\big\}/n = O(n^{-1/2})\to 0$.

Note that the main difference between overlap and vector minimum mean
square error is that in the latter case we consider estimators
$\hbx:\cG_n\to\reals^n$ taking arbitrary real values, while in the
former we assume estimators $\hbs:\cG_n\to\{+1,-1\}^n$ taking binary values.
\end{itemize}

In order to clarify the relation between various metrics, we begin by
proving the alternative characterization of the matrix minimum mean
square error  in Eqs.~(\ref{eq:AlternativeMMSE1}),
(\ref{eq:AlternativeMMSE2}).
\begin{lemma}
Letting $\MMSE_n(\lambda)$ be defined as per Eq.~(\ref{eq:MMSEdef}),
we have
\begin{align}
\MMSE_n(\lambda)& =
\E\big\{\big[X_2-\E\{X_2|X_1=+1,\bG\}\big]^2\big|X_1=+1\big\} \label{eq:AlternativeMMSE1_bis}\\
& =  \min_{\hx_{2|1}: \cG_n\to\reals}
\E\big\{\big[X_2-\hx_{2|1}(\bG)\big]^2|X_1=+1\big\}\, . \label{eq:AlternativeMMSE2_bis}
\end{align}
\end{lemma}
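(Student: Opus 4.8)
The plan is to establish the two claimed identities by exploiting the symmetry of the stochastic block model under global label flips, together with the exchangeability of the vertex indices. I would begin from the representation already derived in \eqref{eq:MMSE_2vert}, namely $\MMSE_n(\lambda) = \E\big\{[X_1X_2 - \E\{X_1X_2|\bG\}]^2\big\}$, and show it equals the right-hand side of \eqref{eq:AlternativeMMSE1_bis}. The key structural fact is that the joint law of $(\bX,\bG)$ is invariant under $\bX \mapsto -\bX$ (flipping all labels leaves $\bG$ unchanged in distribution and indeed leaves the conditional law $\bG\mid\bX$ literally unchanged). Consequently, conditioning on the event $\{X_1 = +1\}$, which has probability $1/2$, does not change the conditional distribution of quantities that are themselves invariant under the global flip; in particular $X_1X_2$ and $\E\{X_1X_2\mid\bG\}$ are flip-invariant, so the unconditional expectation in \eqref{eq:MMSE_2vert} equals the expectation conditioned on $\{X_1=+1\}$.

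Next I would replace $X_1X_2$ by $X_2$ and $\E\{X_1X_2\mid\bG\}$ by $\E\{X_2\mid X_1=+1,\bG\}$ inside that conditional expectation. On the event $\{X_1=+1\}$ we trivially have $X_1X_2 = X_2$, which handles the first term. For the conditional mean term, I would argue that $\E\{X_1X_2\mid\bG\} = \E\{X_2\mid X_1=+1,\bG\}$ almost surely: writing $\E\{X_1X_2\mid\bG\} = \E\{X_1X_2\,\ind_{X_1=+1}\mid\bG\} + \E\{X_1X_2\,\ind_{X_1=-1}\mid\bG\} = \E\{X_2\,\ind_{X_1=+1}\mid\bG\} - \E\{X_2\,\ind_{X_1=-1}\mid\bG\}$, and then using the flip symmetry of the conditional law of $\bX$ given $\bG$ (which sends $X_1\mapsto -X_1$, $X_2\mapsto -X_2$) to see that $\E\{X_2\,\ind_{X_1=-1}\mid\bG\} = -\E\{X_2\,\ind_{X_1=+1}\mid\bG\}$, so the whole expression equals $2\,\E\{X_2\,\ind_{X_1=+1}\mid\bG\} = \E\{X_2\mid X_1=+1,\bG\}\cdot\prob\{X_1=+1\mid\bG\}\cdot 2 = \E\{X_2\mid X_1=+1,\bG\}$, using $\prob\{X_1=+1\mid\bG\}=1/2$ by the same symmetry. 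Assembling these pieces gives \eqref{eq:AlternativeMMSE1_bis}.

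Finally, \eqref{eq:AlternativeMMSE2_bis} follows from \eqref{eq:AlternativeMMSE1_bis} by the standard fact that the conditional expectation is the $L^2$-projection: for any fixed $\sigma$-algebra the minimizer of $\E\{(X_2 - f)^2\}$ over measurable $f$ is $f = \E\{X_2\mid\cdot\}$, applied here under the conditional measure $\prob\{\,\cdot\mid X_1=+1\}$ with $f$ ranging over functions of $\bG$. I do not expect a genuine obstacle here; the only point requiring care is the bookkeeping with the flip symmetry — making precise that the conditional law of $(X_1,X_2)$ given $\bG$ is invariant under simultaneous sign change, and that $\prob\{X_1=+1\mid\bG\} = 1/2$ almost surely — and verifying that conditioning on a positive-probability event commutes appropriately with the conditional expectations involved. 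This is routine once the symmetry is stated cleanly, so the proof is short.
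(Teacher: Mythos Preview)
Your proposal is correct and follows essentially the same approach as the paper: both arguments start from \eqref{eq:MMSE_2vert}, use the global sign-flip symmetry of the posterior to show $\E\{X_1X_2\mid\bG\}=\E\{X_2\mid X_1=+1,\bG\}$, and then reduce \eqref{eq:AlternativeMMSE2_bis} to \eqref{eq:AlternativeMMSE1_bis} via the $L^2$-optimality of conditional expectation. Your version is in fact slightly more careful in that you explicitly invoke $\prob\{X_1=+1\mid\bG\}=1/2$ from the flip symmetry, whereas the paper appeals to the uniform prior; otherwise the two proofs are the same.
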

\begin{proof}
First note that Eq.~(\ref{eq:AlternativeMMSE2_bis}) follows
immediately from Eq.~(\ref{eq:AlternativeMMSE1_bis}) since conditional
expectation minimizes the mean square error (the conditioning only
changes the prior on $\bX$).

In order to prove Eq.~(\ref{eq:AlternativeMMSE1_bis}),
we start from Eq.~(\ref{eq:MMSE_2vert}). Since the prior distribution
on $X_1$ is uniform, we have
\begin{align}
\E\{X_1X_2|\bG\} &= \frac{1}{2} \E\{X_1X_2|X_1 = +1, \bG \}
+\frac{1}{2} \E\{X_1X_2|X_1 = +1, \bG \}  \\
& = \E\{X_2|X_1 = +1, \bG \}\, .
\end{align}
where in the second line we used the fact that, conditional to $\bG$,
$\bX$ is distributed as $-\bX$. 
Continuing from Eq.~(\ref{eq:MMSE_2vert}), we get
\begin{align}
\MMSE_n(\lambda) &=
\E\big\{\big[X_1X_2-\E\{X_2|X_1=+1,\bG\}\big]^2\big\}\\
&=
\frac{1}{2}\, \E\big\{\big[X_1X_2-\E\{X_2|X_1=+1,\bG\}\big]^2\big|X_1
= +1\big\}\nonumber\\
&\phantom{aaaa}+ 
\frac{1}{2}\, \E\big\{\big[X_1X_2-\E\{X_2|X_1=+1,\bG\}\big]^2\big|X_1
= -1\big\}\\
& =  \E\big\{\big[X_2-\E\{X_2|X_1=+1,\bG\}\big]^2\big|X_1
= +1\big\}\, ,
\end{align}
which proves the claim.
\end{proof}

The next lemma clarifies the relationship between matrix and vector
minimum mean square error. Its proof is deferred to Appendix \ref{sec:VectorVsMatrix}.
\begin{lemma}\label{lemma:VectorVsMatrix}
With the above definitions, we have
\begin{align}
1-\sqrt{1-(1-n^{-1})\MMSE_n(\lambda)} \le \vmse_n(\lambda) \le\MMSE_n(\lambda)\, .\label{eq:VmseBounds}
\end{align}
\end{lemma}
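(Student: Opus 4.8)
The plan is to connect the two quantities through the scalar posterior estimator. Write $\hx_2(\bG) \equiv \E\{X_2 \mid X_1 = +1, \bG\}$ for the Bayes-optimal estimator of the relabelled vertex, so that by the alternative characterization \eqref{eq:AlternativeMMSE2_bis} we have $\MMSE_n(\lambda) = \E\big\{[X_2 - \hx_2(\bG)]^2 \mid X_1 = +1\big\}$. For the \emph{upper bound} $\vmse_n(\lambda) \le \MMSE_n(\lambda)$, I would build an explicit candidate estimator for $\vmse_n$ out of $\hx_2$: given $\bG$, pick a uniformly random vertex $k$, condition on $X_k = +1$ (which, since the law of $\bX$ given $\bG$ is symmetric under global sign flip, costs nothing in expectation over $k$), and set $\hbx(\bG)_i = \E\{X_i \mid X_k = +1, \bG\}$. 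By exchangeability of the vertices, the resulting per-coordinate error is exactly $\E\{[X_i - \E\{X_i \mid X_k=+1,\bG\}]^2 \mid X_k = +1\} = \MMSE_n(\lambda)$ for $i \ne k$, and the single coordinate $i = k$ contributes $0$; averaging over the $n$ coordinates and taking the infimum in the definition \eqref{eq:vmseDef} gives $\vmse_n(\lambda) \le \tfrac{n-1}{n}\MMSE_n(\lambda) \le \MMSE_n(\lambda)$. One must be slightly careful that the sign-flip $s$ in \eqref{eq:vmseDef} is handled by this conditioning; I would argue that $\min_{s}\|\bX - s\hbx\|_2^2 \le \|\bX - \hbx\|_2^2$ pointwise after the conditioning trick aligns the reference sign.

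For the \emph{lower bound} I would go the other direction: relate $\MMSE_n$ to the vector error of the optimal $\hbx$. Let $\hbx^\star$ achieve (or nearly achieve) the infimum in \eqref{eq:vmseDef}, and WLOG normalize and align signs so that $\tfrac1n \E\|\bX - \hbx^\star\|_2^2 = \vmse_n(\lambda)$. The product estimator $\hx_{ij}(\bG) = \hx^\star_i(\bG)\hx^\star_j(\bG)$ is a legitimate estimator of $X_iX_j$, so by the optimality characterization \eqref{eq:MMSE_2vert_opt}, $\MMSE_n(\lambda) \le \E\{[X_1X_2 - \hx^\star_1\hx^\star_2]^2\}$. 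Expanding the square and using exchangeability, $\E\{[X_iX_j - \hx^\star_i\hx^\star_j]^2\}$ can be bounded in terms of the first and second moments of the coordinatewise residual; the key algebraic step is to write $X_iX_j - \hx^\star_i\hx^\star_j = X_i(X_j - \hx^\star_j) + \hx^\star_j(X_i - \hx^\star_i)$ and use $|X_i| = 1$, $|\hx^\star_j| \le 1$, together with Cauchy--Schwarz across the $\binom n2$ pairs, to convert the pairwise error into $(\text{average coordinate error})$-type quantities. Carrying this through carefully — keeping track of the $\tfrac1{n(n-1)}$ versus $\tfrac1n$ normalizations and the diagonal terms — should yield $\MMSE_n(\lambda) \le 1 - (1 - \vmse_n(\lambda))^2 + O(1/n)$ after using the constraint that the optimal $\hbx^\star$ is itself a conditional-expectation-type estimator with $\E\{\hx^\star_i X_i\} = 1 - \vmse_n$ (this requires first reducing to estimators of the form $\hbx = \E\{\bX \mid \bG\}$, which minimize $\vmse_n$); solving the quadratic inequality for $\vmse_n$ gives exactly $\vmse_n(\lambda) \ge 1 - \sqrt{1 - (1-n^{-1})\MMSE_n(\lambda)}$.

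The main obstacle I anticipate is the lower bound: the clean inequality $\MMSE_n \le 1 - (1-\vmse_n)^2$ is \emph{not} true for arbitrary estimators $\hbx$ (one can have good vector error but the products still fluctuate), so the argument genuinely needs to use that the vector-optimal estimator is $\hbx_i = \E\{X_i \mid \bG\}$, for which $\E\{\hx_i\hx_j X_iX_j\}$, $\E\{\hx_i^2\}$ and $\E\{\hx_i X_i\}$ are all tied together by the tower property (e.g. $\E\{\hx_i X_i\} = \E\{\hx_i^2\}$ and, by a nesting/independence-under-the-posterior argument for $i \ne j$, $\E\{\hx_i\hx_j X_i X_j\} = \E\{\hx_i^2\hx_j^2\} \le \E\{\hx_i^2\}\E\{\hx_j^2\}$ after another exchangeability/concentration step). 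Pinning down exactly which second-moment identities hold for the true posterior mean — as opposed to inequalities that go the wrong way — and making the $O(1/n)$ corrections match the stated $(1-n^{-1})$ factor is where the real care is needed; everything else is bookkeeping with Cauchy--Schwarz and exchangeability.
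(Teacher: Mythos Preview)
Your upper bound is fine and matches the paper's argument (the paper pivots on vertex $1$ rather than a random vertex, but that is cosmetic).

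The lower bound, however, has a genuine gap. Your plan relies on reducing to the estimator $\hbx_i = \E\{X_i\mid\bG\}$ and then exploiting tower-property identities such as $\E\{\hx_i X_i\} = \E\{\hx_i^2\}$. But in this model the posterior is invariant under $\bX\mapsto -\bX$, so $\E\{X_i\mid\bG\}\equiv 0$; this estimator gives $\vmse_n=1$ and none of the second-moment identities you want are usable. Conditioning on $X_1=+1$ would break the symmetry, but then the identities you list (in particular $\E\{\hx_i\hx_j X_iX_j\}=\E\{\hx_i^2\hx_j^2\}$) no longer follow from a simple tower argument. Separately, the decomposition $X_iX_j-\hx_i\hx_j = X_i(X_j-\hx_j)+\hx_j(X_i-\hx_i)$ with Cauchy--Schwarz naturally produces a \emph{linear} bound of the form $\MMSE_n\lesssim C\,\vmse_n$, not the quadratic relation $\MMSE_n\le 1-(1-\vmse_n)^2$ that you need; you acknowledge this but do not supply the missing mechanism.

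The paper's proof avoids all of this with two variational tricks that work for an \emph{arbitrary} near-optimal $\hbx$. First, in the definition of $\vmse_n$ it relaxes $\min_{s\in\{\pm1\}}$ to $\min_{\alpha\in\reals}$, which after optimizing yields
\[
\vmse_n(\lambda)\;\ge\;1-\E\!\left\{\frac{\<\bX,\hbx(\bG)\>^2}{n\,\|\hbx(\bG)\|_2^2}\right\}.
\]
Second, for the matrix side it does not use the bare product $\hx_i\hx_j$ but a \emph{scaled} version $\hx_{ij}=\beta(\bG)\,\hx_i\hx_j$; optimizing over the scalar $\beta$ gives
\[
(1-n^{-1})\,\MMSE_n(\lambda)\;\le\;1-\E\!\left\{\frac{\<\bX,\hbx(\bG)\>^2}{n\,\|\hbx(\bG)\|_2^2}\right\}^{2}.
\]
Comparing the two displays immediately yields $(1-n^{-1})\MMSE_n\le 1-(1-\vmse_n)^2$, which rearranges to the stated lower bound. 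The scaling $\beta$ is exactly the missing idea that converts a linear relation into the quadratic one, and it removes any need to know that $\hbx$ is a posterior mean.
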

Finally, a lemma that relates overlap and vector minimum mean square error,
whose proof can be found in  Appendix \ref{sec:OverlapVsVector}.
\begin{lemma}\label{lemma:OverlapVsVector}
With the above definitions, we have
\begin{align}
\overlap_n(\lambda)\ge 1-\vmse_n(\lambda) - O(n^{-1/2})\, .\label{eq:OverlapBounds}
\end{align}
\end{lemma}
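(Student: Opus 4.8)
The plan is to prove the bound $\overlap_n(\lambda)\ge 1-\vmse_n(\lambda)-O(n^{-1/2})$ by constructing an explicit binary estimator $\hbs$ from a near-optimal real-valued estimator $\hbx$ for the vector MMSE, and controlling the loss incurred by the rounding step. First I would fix $\eps>0$ and let $\hbx:\cG_n\to\reals^n$ be an estimator achieving $\frac{1}{n}\E\{\min_{s\in\{+1,-1\}}\|\bX-s\,\hbx(\bG)\|_2^2\}\le \vmse_n(\lambda)+\eps$; since $\eps$ is arbitrary it suffices to work with a fixed near-optimizer. The natural choice is $\hbs(\bG) = \sign(\hbx(\bG))$ componentwise (breaking ties arbitrarily, e.g. setting $\hs_i=+1$ when $\hx_i=0$).

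The key computation is to relate $\langle \bX,\hbs\rangle$ to $\|\bX-s\,\hbx\|_2^2$. For each coordinate $i$, observe that if $\sign(\hx_i)\ne X_i$ (i.e.\ the rounded estimate disagrees with the truth, after the optimal global flip $s$ is applied to $\hbx$), then $|X_i - s\hx_i|\ge 1$, since $X_i\in\{\pm1\}$ and $s\hx_i$ has the wrong sign or is zero. Hence the number of sign errors $N_{\rm err} \equiv \#\{i: s\,\hs_i \ne X_i\}$ satisfies $N_{\rm err}\le \|\bX - s\,\hbx\|_2^2$, where I use that $s\,\hs_i = s\,\sign(\hx_i) = \sign(s\hx_i)$. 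On the other hand, $|\langle \bX, \hbs\rangle| \ge |\langle \bX, s\,\hbs\rangle| = |\sum_i X_i s\hs_i| = |(n - N_{\rm err}) - N_{\rm err}| = |n - 2N_{\rm err}|\ge n - 2N_{\rm err}$. Taking expectations and combining, $\frac1n\E|\langle\bX,\hbs\rangle| \ge 1 - \frac{2}{n}\E\{N_{\rm err}\}\ge 1 - \frac{2}{n}\E\{\min_s\|\bX-s\hbx\|_2^2\} \ge 1 - 2\vmse_n(\lambda) - 2\eps$. This already gives a bound of the right shape but with an extra factor of $2$ on the $\vmse$ term, which is weaker than the claimed inequality.

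To remove the factor of $2$ I would instead exploit the structure of the optimal vector estimator. The point is that for the Bayes-optimal estimator one has $\hx_i = \E\{X_i\mid \bG\}$ (up to the sign ambiguity), so $|\hx_i|\le 1$, and then $|X_i - s\hx_i|^2 = (1 - s X_i\hx_i)^2$; when the signs agree this is $(1-|\hx_i|)^2\le 1-|\hx_i|$, and when they disagree it is $(1+|\hx_i|)^2$, which is bounded below by $1+|\hx_i|$ but crucially also satisfies $(1+|\hx_i|)^2 \ge 2$ only weakly. The cleaner route: restrict without loss of generality to estimators with $|\hx_i|\le 1$ (truncating $\hbx$ to $[-1,1]^n$ coordinatewise only decreases $\|\bX - s\hbx\|_2^2$ since $\bX\in\{\pm1\}^n$), and note that $1 - |X_i - s\hx_i| \le s X_i \hx_i$ pointwise when $|\hx_i|\le 1$, wait — more directly, $|\langle\bX,\hbs\rangle|\ge \langle\bX,s\hbx\rangle = \sum_i(1 - \tfrac12|X_i-s\hx_i|^2 + \tfrac12(|\hx_i|^2-1))\ge n - \tfrac12\|\bX-s\hbx\|_2^2 - \tfrac12\sum_i(1-|\hx_i|^2)$; the last sum can be absorbed since $1-|\hx_i|^2\le 2(1-|\hx_i|)\le 2|X_i-s\hx_i|\le $ controlled by the MSE via Cauchy--Schwarz, giving the $O(n^{-1/2})$ correction. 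I expect the main obstacle to be exactly this bookkeeping: choosing the right normalization/truncation of $\hbx$ so that the bound comes out with coefficient $1$ rather than $2$ on $\vmse_n$, and verifying that the residual terms genuinely contribute only $O(n^{-1/2})$ (via Cauchy--Schwarz, $\sum_i |X_i - s\hx_i| \le \sqrt{n}\,\|\bX-s\hbx\|_2$, and $\E\sqrt{\cdot}\le\sqrt{\E\cdot}$ by Jensen) uniformly in $\bG$ after taking expectations; the sign-rounding inequality itself is elementary.
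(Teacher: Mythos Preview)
Your proposal has two genuine gaps, both in the ``cleaner route'' paragraph.

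\textbf{First gap: the sign-rounding inequality is false.} You assert $|\langle\bX,\hbs\rangle|\ge \langle\bX,s\hbx\rangle$ with $\hbs=\sign(\hbx)$. Take $n=2$, $\bX=(1,1)$, $\hbx=(0.9,-0.1)$, $s=1$: then $\hbs=(1,-1)$, so $|\langle\bX,\hbs\rangle|=0$ while $\langle\bX,s\hbx\rangle=0.8$. Deterministic sign rounding can \emph{destroy} correlation when many coordinates have $|\hx_i|$ small but correlated signs; this is exactly why your first attempt picked up the factor~$2$, and no amount of bookkeeping on the same estimator fixes it.

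\textbf{Second gap: the residual is not $O(n^{-1/2})$.} Even granting the false inequality, your Cauchy--Schwarz step gives
\[
\frac{1}{2n}\sum_i\bigl(1-|\hx_i|^2\bigr)\;\le\;\frac{1}{n}\sum_i|X_i-s\hx_i|\;\le\;\frac{1}{\sqrt n}\,\|\bX-s\hbx\|_2,
\]
whose expectation is at most $\sqrt{\vmse_n}$ by Jensen, not $O(n^{-1/2})$. Since $\vmse_n$ is typically order one, this term swamps the main term rather than being a small correction.

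\textbf{What the paper does instead.} The paper abandons deterministic rounding. It first normalizes the near-optimal vector estimator to $\bxi=\hbx\sqrt n/\|\hbx\|_2$, so $\|\bxi\|_2=\sqrt n$, and recalls from the proof of Lemma~\ref{lemma:VectorVsMatrix} (specifically the step leading to the lower bound on $\vmse_n$) that $\vmse_n\ge 1-\E\{\langle\bX,\bxi\rangle^2/n^2\}$. Since $|\langle\bX,\bxi\rangle|\le n$, this yields $\E\{|\langle\bX,\bxi\rangle|/n\}\ge 1-\vmse_n$. Then it uses \emph{randomized} rounding: set $\hs_i=+1$ with probability $(1+\xi_i)/2$ and $-1$ otherwise, independently across $i$. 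By construction $\E\{\hs_i\mid\bG\}=\xi_i$, so conditionally $\langle\bX,\hbs\rangle$ is a sum of independent bounded variables with mean $\langle\bX,\bxi\rangle$; a CLT/concentration argument then gives $\E\{|\langle\bX,\hbs\rangle|/n\mid\bX,\bG\}=|\langle\bX,\bxi\rangle|/n+O(n^{-1/2})$ uniformly, and taking expectations finishes the proof. The $O(n^{-1/2})$ here is the genuine fluctuation of the randomized rounding, not a Cauchy--Schwarz slack.
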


As an immediate corollary of these lemmas (together with Theorem
\ref{thm:MainEstimation}
and Lemma \ref{lemma:FixedPoint}), we obtain that $\lambda=1$ is the
critical point for other estimation metrics as well.
\begin{corollary}\label{coro:Metrics}
The vector minimum mean square error  and the overlap exhibit a \emph{phase transition
  at $\lambda=1$}. Namely, under the assumptions of Theorem \ref{thm:main}
(in particular, $\lambda_n\to\lambda$ and
$n\op_n(1-\op_n)\to\infty$), we have
\begin{itemize}
\item If $\lambda\le 1$, then estimation cannot be performed
  asymptotically better than without any information:
\begin{align}
\lim_{n\to\infty}\vmse_n(\lambda_n) &= 1\, ,\\
\lim_{n\to\infty}\overlap_n(\lambda_n) &= 0\, .
\end{align}
\item If $\lambda>1$, then estimation can be performed better 
than without any information, even in the limit $n\to\infty$:
\begin{align}
&0<1-\frac{\gamma_*(\lambda)}{\lambda}\le 
\liminf_{n\to\infty}\vmse_n(\lambda_n) \le
\limsup_{n\to\infty}\vmse_n(\lambda_n) \le
 1-\frac{\gamma_*(\lambda)^2}{\lambda^2} <1\, ,\\
&0<\frac{\gamma_*(\lambda)^2}{\lambda^2}\le 
\liminf_{n\to\infty}\overlap_n(\lambda_n) \, .
\end{align}
\end{itemize}
\end{corollary}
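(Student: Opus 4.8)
The plan is to read off Corollary~\ref{coro:Metrics} directly from Theorem~\ref{thm:MainEstimation}, the comparison inequalities of Lemmas~\ref{lemma:VectorVsMatrix} and~\ref{lemma:OverlapVsVector}, and the dichotomy for $\gamma_*(\lambda)$ recorded in Lemma~\ref{lemma:FixedPoint} (namely $\gamma_*(\lambda)=0$ when $\lambda\le1$, and $0<\gamma_*(\lambda)<\lambda$ when $\lambda>1$). The only ingredient not already available as a stated result is an \emph{upper} bound on the overlap in the subcritical phase; everything else is bookkeeping, so that one estimate is where I expect all of the (mild) work to be.

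\emph{Subcritical phase $\lambda\le1$.} Theorem~\ref{thm:MainEstimation} and Lemma~\ref{lemma:FixedPoint} give $\MMSE_n(\lambda_n)\to1$. The right inequality of Lemma~\ref{lemma:VectorVsMatrix} then forces $\limsup_n\vmse_n(\lambda_n)\le1$, while its left inequality forces $\liminf_n\vmse_n(\lambda_n)\ge1-\sqrt{1-(1-n^{-1})\MMSE_n(\lambda_n)}\to1$, so $\vmse_n(\lambda_n)\to1$. For the overlap I would argue directly, since Lemma~\ref{lemma:OverlapVsVector} only yields a lower bound (and the easy bound $\overlap_n\le1-\tfrac{1}{2}\vmse_n$, from restricting~\eqref{eq:vmseDef} to $\{\pm1\}$-valued estimators, gives only $\overlap_n\le\tfrac{1}{2}$, not $0$). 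Writing $m_{ij}=\E\{X_iX_j\mid\bG\}$, so that $m_{ii}=1$ and $\E\{m_{12}^2\}=1-\MMSE_n(\lambda_n)$ by~\eqref{eq:MMSE_2vert} and the tower property, Cauchy--Schwarz gives for every $\hbs:\cG_n\to\{+1,-1\}^n$
\begin{align}
\E\big\{|\<\bX,\hbs(\bG)\>|\big\}^2\le\E\Big\{\sum_{i,j}\hs_i\hs_j m_{ij}\Big\}\le n+n(n-1)\,\E\{|m_{12}|\}\le n+n^2\sqrt{1-\MMSE_n(\lambda_n)}\, ,
\end{align}
and dividing by $n$ and taking the supremum over $\hbs$ gives $\overlap_n(\lambda_n)\le\big(n^{-1}+\sqrt{1-\MMSE_n(\lambda_n)}\big)^{1/2}\to0$.

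\emph{Supercritical phase $\lambda>1$.} Now $\MMSE_n(\lambda_n)\to1-\gamma_*(\lambda)^2/\lambda^2$, a number strictly in $(0,1)$ because $0<\gamma_*(\lambda)<\lambda$. Letting $n\to\infty$ in Lemma~\ref{lemma:VectorVsMatrix} yields
\begin{align}
0<1-\frac{\gamma_*(\lambda)}{\lambda}=1-\sqrt{1-\Big(1-\frac{\gamma_*(\lambda)^2}{\lambda^2}\Big)}\le\liminf_n\vmse_n(\lambda_n)\le\limsup_n\vmse_n(\lambda_n)\le1-\frac{\gamma_*(\lambda)^2}{\lambda^2}<1\, ,
\end{align}
which is the claimed two-sided bound on $\vmse_n(\lambda_n)$; feeding $\limsup_n\vmse_n(\lambda_n)\le1-\gamma_*(\lambda)^2/\lambda^2$ into Lemma~\ref{lemma:OverlapVsVector} then gives $\liminf_n\overlap_n(\lambda_n)\ge1-\limsup_n\vmse_n(\lambda_n)\ge\gamma_*(\lambda)^2/\lambda^2>0$.

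There is no genuine obstacle here: all the analytic content sits in Theorems~\ref{thm:main} and~\ref{thm:MainEstimation} and in the two comparison lemmas, and the one step outside that toolbox --- the subcritical overlap upper bound --- is settled by the elementary second-moment estimate displayed above.
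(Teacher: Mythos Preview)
Your proof is correct and follows exactly the route the paper indicates (combine Theorem~\ref{thm:MainEstimation} with Lemmas~\ref{lemma:VectorVsMatrix}, \ref{lemma:OverlapVsVector}, and~\ref{lemma:FixedPoint}); the paper itself gives no details beyond calling it an ``immediate corollary.'' You in fact go further than the paper: you correctly observe that Lemma~\ref{lemma:OverlapVsVector} supplies only a \emph{lower} bound on the overlap, so the claim $\overlap_n(\lambda_n)\to 0$ for $\lambda\le 1$ is not literally a consequence of the stated lemmas, and your second-moment/Cauchy--Schwarz bound via $\E\{m_{12}^2\}=1-\MMSE_n$ cleanly fills that gap.
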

%
%
\section{Proof strategy: Theorem \ref{thm:main}}
\label{sec:Strategy}

In this section we describe the main elements used in the proof
of Theorem \ref{thm:main}:
\begin{itemize}
\item We describe a Gaussian observation model which has
  asymptotically the same mutual information as the SBM introduced
  above.
\item We state an asymptotic characterization of the mutual
  information of this Gaussian model.
\item We describe an approximate message passing (AMP) estimation
  algorithm that plays a key role in the last characterization.
\end{itemize}
We then use these technical results (proved in later sections) to
prove Theorem \ref{thm:main} in Section \ref{sec:ProofTheoremMain}.

We recall that $\op_n = (p_n + q_n)/2$. Define the gap
$\Delta_n \equiv (p_n - q_n)/2 = \sqrt{\lambda_n \op_n
  (1-\op_n)/n}$. 
We will assume for the proofs that $\lim_{n\to \infty}\lambda_n = \lambda>0$ (i.e. the assortative
model) but the results
also hold for $\lambda<0$ in an analogous fashion.

%
%
\subsection{Gaussian model}

The edges $\{G_{ij}\}_{i<j}$ are conditionally independent given
the vertex labels $\bX$, with distribution:
\begin{align}
G_{ij} &= \begin{cases}
    1 &\text{ with probability } \op_n + \Delta_n X_iX_j \, ,\\
    0 &\text{ with probability } 1-\op_n - \Delta_n X_i X_j\, .
  \end{cases}
\end{align}
As a first step, we compare the SBM with an alternate Gaussian observation
model defined as follows. 
Let $\bZ$ be a Gaussian random symmetric matrix
generated with independent entries $Z_{ij}\sim\normal(0, 1)$ and
$Z_{ii}\sim \normal(0, 2)$, independent of $\bX$. Consider the noisy observations $\bY =
\bY(\lambda)$ defined by
\begin{align}
  \bY(\lambda) &= \sqrt\frac{\lambda}{n} \bX\bX^\sT + \bZ\, .
  \label{eq:gaussobs}
\end{align}
Note that this model matches the first two moments of the original
model. More precisely, if we define the rescaled adjacency matrix
$G^{\rm res}_{ij}\equiv (G_{ij}-\op_n)/\sqrt{\op_n(1-\op_n)}$, then
$\E\{G^{\rm res}_{ij}|\bX\} = \E\{Y_{ij}|\bX\}$ and $\Var(G^{\rm res}_{ij}|\bX) = \Var(Y_{ij}|\bX) +O(n^{-1/2})$.

 Our first proposition proves that the mutual 
 information between  the vertex labels $\bX$ and the observations 
agrees  to leading order across the two models.
\begin{propo}\label{prop:gausseq}
Assume that, as $n\to\infty$, $(i)$ $\lambda_n\to\lambda$ and $(ii)$
$n\op_n(1-\op_n)\to\infty$.
Then there is a constant $C$ independent of $n$ such that
  \begin{align}
    \frac{1}{n} \big|I(\bX;\bG) - I(\bX; \bY)\big| \le C\left(
    \frac{\lambda^{3/2}}{\sqrt{n\op_n(1-\op_n)}} + \abs{\lambda_n-\lambda}
    \right).
  \end{align}
\end{propo}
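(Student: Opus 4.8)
The plan is to interpolate between the two observation models and control the derivative of the mutual information along the interpolation using the Lindeberg exchange strategy. Concretely, I would introduce a parameter $t\in[0,1]$ and an interpolating family in which the ordered pairs $\{(i,j): i<j\}$ are processed one at a time: for the first $m$ pairs one uses the Gaussian observation $Y_{ij} = \sqrt{\lambda/n}\,X_iX_j + Z_{ij}$, and for the remaining pairs one uses the (rescaled) Bernoulli observation $G^{\rm res}_{ij} = (G_{ij}-\op_n)/\sqrt{\op_n(1-\op_n)}$, which, conditionally on $\bX$, has mean $\Delta_n X_iX_j/\sqrt{\op_n(1-\op_n)} = \sqrt{\lambda_n/n}\,X_iX_j$ and variance $1+O(n^{-1/2})$. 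Since both the mutual information $I(\bX;\cdot)$ and the conditional entropy are symmetric functionals of the collection of observations (the edges are exchangeable and conditionally independent given $\bX$), it suffices to bound the change in $I(\bX;\cdot)$ when a single pair is switched from the Bernoulli channel to the Gaussian channel, and then sum the $\binom{n}{2}$ contributions. To make the single-swap bound clean I would actually interpolate each coordinate continuously as well (a second, inner interpolation $s\mapsto \sqrt{s}\cdot(\text{Gaussian part}) + \sqrt{1-s}\cdot(\text{Bernoulli part})$ is the standard device), so that the increment can be expressed through an integral of a derivative.

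The key computation is then a Taylor expansion to third order of the relevant single-coordinate functional. Writing the mutual information as $I(\bX;\text{obs}) = H(\text{obs}) - H(\text{obs}\mid\bX)$ and noting that the conditional law of each observation given $\bX$ depends only on the scalar $X_iX_j\in\{\pm1\}$, the sensitivity of $I$ to the law of coordinate $(i,j)$ is governed by a smooth function $\varphi$ of that coordinate (essentially a log-likelihood-type function whose first three derivatives are bounded uniformly, using $n\op_n(1-\op_n)\to\infty$ to keep the Bernoulli variable from degenerating). Because the Gaussian and Bernoulli variables are constructed to have matching conditional first and second moments, the order-$t^0$, order-$t^1$ and order-$t^2$ terms in the expansion cancel exactly, and one is left with a remainder controlled by the third absolute moments of the two variables. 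The Gaussian contributes an $O((\lambda/n)^{3/2})$ term and the centered Bernoulli contributes a term of size $O\big((\lambda_n/n)^{3/2}\big/\sqrt{\op_n(1-\op_n)}\big) = O\big(\lambda_n^{3/2} (n\op_n(1-\op_n))^{-1/2} n^{-1}\big)$ per coordinate — this is exactly where the factor $1/\sqrt{n\op_n(1-\op_n)}$ in the statement comes from. Multiplying by $\binom{n}{2}$ pairs and dividing by $n$ yields the claimed bound $C\lambda^{3/2}/\sqrt{n\op_n(1-\op_n)}$, and the $|\lambda_n-\lambda|$ term is the elementary cost of replacing $\sqrt{\lambda_n/n}$ by $\sqrt{\lambda/n}$ in the signal strength, bounded by a Lipschitz estimate on $I(\bX;\bY(\cdot))$ in the SNR (e.g.\ via the I-MMSE relation, since $\mmse\le 1$).

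I would organize the write-up as follows: (1) set up the continuous Gaussian-Bernoulli interpolation and reduce, by exchangeability, to a sum of single-coordinate increments; (2) express each increment via the fundamental theorem of calculus and identify the derivative as an expectation of $\varphi'''$ evaluated at an intermediate point, against the third moment of the perturbation; (3) bound $\|\varphi'''\|_\infty$ uniformly, using that the relevant densities and their logarithms are smooth with derivatives controlled by $(\op_n(1-\op_n))^{-O(1)}$ times powers of the small signal $\sqrt{\lambda_n/n}$, and verify the first- and second-moment matching that kills the lower-order terms; (4) collect the $\binom{n}{2}$ contributions and add the SNR-mismatch term. The main obstacle I anticipate is step (3): getting the dependence on $\op_n(1-\op_n)$ exactly right in the bound on the third derivative of the log-likelihood functional — the naive bound loses too many factors of $(\op_n(1-\op_n))^{-1}$, and one has to exploit the cancellation between the unconditional and conditional entropies (equivalently, work with the increment of $I$ rather than of $H(\text{obs})$ alone) and the fact that the perturbation itself already carries a factor $n^{-1/2}$, so that only one net factor of $(\op_n(1-\op_n))^{-1/2}$ survives. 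A secondary technical point is handling the diagonal terms $Z_{ii}\sim\normal(0,2)$ versus $G_{ii}=0$, but these contribute only $O(n)$ coordinates of bounded individual effect, hence $O(1/n)$ after normalization, and can be absorbed into $C$.
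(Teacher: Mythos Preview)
Your high-level strategy (Lindeberg swap between Bernoulli and Gaussian observations) matches the paper, but the paper organizes the computation differently in a way that dissolves exactly the obstacle you anticipate in step~(3). Rather than applying Lindeberg directly to the mutual information as a functional of the observation law, the paper first expresses \emph{both} $I(\bX;\bY)$ and $I(\bX;\bG)$ in the form
\[
n\log 2 + \tfrac{(n-1)\lambda}{2} - \E\big\{\phi(\bX,\bV,\lambda,n)\big\},
\]
where $\phi$ is the log-partition function for the \emph{same} Hamiltonian $\sum_{i<j} V_{ij}(x_ix_j-X_iX_j)+(\lambda/n)x_ix_jX_iX_j$, and $\bV$ is either $\bZ\sqrt{\lambda/n}$ (exactly, for the Gaussian model) or $\tG_{ij}=\Delta_n(G_{ij}-\op_n-\Delta_nX_iX_j)/[\op_n(1-\op_n)]$ (for the SBM, after Taylor-expanding the Bernoulli log-likelihood $\log(\op_n+\Delta_n x_ix_j)$ to first order). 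The Taylor step is where the $(n\op_n(1-\op_n))^{-1/2}$ factor is isolated: the cubic remainder involves $\Delta_n^3|\langle\bx,\bG\bx\rangle|/\op_n^3$ and similar terms, controlled \emph{uniformly over $\bx\in\{\pm1\}^n$} by a Bernstein bound on $\sup_\bx|\langle\bx,\bG\bx\rangle|$. Only then is Lindeberg applied, to the fixed smooth function $\phi$, whose first three partial derivatives in each $V_{ij}$ are bounded by absolute constants simply because $|x_ix_j-X_iX_j|\le 2$.

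This decomposition is what buys the clean derivative bound: you never differentiate anything carrying inverse powers of $\op_n(1-\op_n)$; all the delicate dependence on $\op_n$ sits either in the preliminary Taylor-approximation error or in the third moment of $\tG_{ij}$, for which one computes directly $|\E\{\tG_{ij}^3\mid\bX\}|\le C\lambda_n^{3/2}/[n\sqrt{n\op_n(1-\op_n)}]$. Your proposal to bound third derivatives of the full log-likelihood functional would, as you suspect, pick up extra factors of $(\op_n(1-\op_n))^{-1}$ from differentiating $\log(\op_n+\Delta_n x_ix_j)$; the paper sidesteps this by linearizing first. The $|\lambda_n-\lambda|$ term is handled exactly as you say, by a one-line Lipschitz bound $|\partial_\lambda\E\phi|\le n/2$. (Your worry about the diagonal $Z_{ii}$ is a non-issue: since $X_i^2=1$, those entries are pure noise and drop out of both calculations.)
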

The proof of this result is presented in Section \ref{sec:Gausseq}.

The next step consists in analyzing the Gaussian model
(\ref{eq:gaussobs}), which is of independent interest.
It turns out to be convenient to embed this in a more general model
whereby, in addition to the observations $\bY$, we are also given
observations of $\bX$ through a binary erasure channel with erasure
probability $\beps=1-\eps$, $\BEC(\beps)$. We will denote by $\bX(\eps) = (X_1(\eps),\dots,X_n(\eps))$
the output of this channel, where we set $X_i(\eps) = 0$ every time
the symbol is erased. Formally we have
\begin{align}
X_{i}(\eps) = B_i\, X_i\, ,
\end{align}
where $B_i\sim \Ber(\eps)$ are independent random variables,
independent of $\bX$, $\bG$. In the special case $\eps=0$, all of
these observations are trivial, and we recover the original model. 

The reason for introducing the additional observations $\bX(\eps)$ is the following.
The graph $\bG$ has the same distribution conditional on $\bX$ or $-\bX$, hence
it is impossible to recover the sign of $\bX$. As we will see, the extra observations 
$\bX(\eps)$ allow to break this trivial symmetry and we will recover the required 
results by continuity in $\eps$ as the extra information vanishes.

Indeed, our next result establishes a
 single letter characterization of $I(\bX; \bY, \bX(\eps))$
in terms of a recalibrated \emph{scalar} observation problem.
Namely, we define the following observation model for $X_0\sim
\Unif(\{+1,-1\})$ a Rademacher random variable:
\begin{align}
  Y_0&= \sqrt{\gamma} X_0 + Z_0, \label{eq:scalarprob}\\
X_0(\eps) & = B_0X_0\, .
\end{align}
Here $X_0$, $B_0\sim \Ber(\eps)$,  $Z_0\sim\normal(0, 1)$, 
are mutually independent.  We denote by $\mmse( \gamma,\eps)$, the minimum
mean squared error of estimating $X_0$ from $X_0(\eps)$, $Y_0$, conditional
on $B_0$. 
Recall the definitions (\ref{eq:InfoDef}), (\ref{eq:mmseDef}) of
$\Info(\gamma)$, $\mmse(\gamma)$, and the
expressions (\ref{eq:InfoFormula}), (\ref{eq:mmseFormula}).
A simple calculation yields
\begin{align}
  \mmse( \gamma , \eps) &= \E\left\{ ( X_0 - \E\left\{ X_0 | X_0(\eps),Y_{0} \right\})^2  \right\}\\
  &= (1-\eps)\, \mmse(\gamma)\, .
\end{align}
\begin{propo}\label{prop:singleletter}
  For any $\lambda>0$, $\eps\in (0,1]$, let $\gamma_*(\lambda, \eps)$
  be the largest non-negative  solution of the equation:
 \begin{align}
      \gamma &= \lambda\,\big(1- (1-\eps)\mmse(\gamma)\big)\, . \label{eq:EpsFixedPoint}
\end{align}
Further, define $\Psi(\gamma, \lambda, \eps)$ by:
\begin{align}
\Psi(\gamma, \lambda, \eps) &=  
\frac{\lambda}{4}+\frac{\gamma^2}{4\lambda}-\frac{\gamma}{2}+\eps\,\log
2+(1-\eps)\,\Info(\gamma)\, . \label{eq:PsiEps}
\end{align}
  Then, we have
\begin{align}
    \lim_{n\to\infty}\frac{1}{n} I(\bX;\bX(\eps),\bY) &=\Psi\big(\gamma_*(\lambda,\eps),\lambda,\eps\big)\, .
\end{align}
\end{propo}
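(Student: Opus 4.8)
The plan is to prove matching upper and lower bounds on $n^{-1}I(\bX;\bX(\eps),\bY)$. The strategy has two independent pieces: (i) an \emph{algorithmic upper bound} on the matrix MMSE of the Gaussian model (with erasure side information), obtained by analyzing an AMP iteration, and (ii) an \emph{area-theorem} argument that converts the MMSE bound into a matching bound on the mutual information, exploiting the I-MMSE relation in the scalar SNR-like parameter. First I would set up the AMP algorithm for estimating $\bX\bX^\sT$ from $\bY=\sqrt{\lambda/n}\,\bX\bX^\sT+\bZ$, initialized using the erasure observations $\bX(\eps)$ (which provide a nontrivial positive correlation with $\bX$, thereby escaping the uninformative fixed point at $\gamma=0$). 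Standard state-evolution analysis for AMP on rank-one matrices shows that the overlap of the AMP iterate converges to a deterministic trajectory governed by the recursion $\gamma^{t+1}=\lambda\big(1-(1-\eps)\mmse(\gamma^t)\big)$; since $G(\gamma)=1-\mmse(\gamma)$ is monotone increasing and (by the concavity established in the proof of Remark \ref{rem:unique}) well-behaved, this recursion converges to the largest fixed point $\gamma_*(\lambda,\eps)$ of \eqref{eq:EpsFixedPoint}. This yields
\begin{align}
\limsup_{n\to\infty}\MMSE_n(\lambda;\eps)\le 1-\frac{\gamma_*(\lambda,\eps)^2}{\lambda^2}\, ,
\end{align}
where $\MMSE_n(\lambda;\eps)$ is the matrix MMSE given $(\bY,\bX(\eps))$.

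Next I would invoke an area theorem. Viewing $n^{-1}I(\bX;\bX(\eps),\bY(\lambda))$ as a function of $\lambda$ (with $\eps$ fixed), the I-MMSE identity gives $\frac{\d}{\d\lambda}\big[\frac1n I(\bX;\bX(\eps),\bY(\lambda))\big]=\frac14\MMSE_n(\lambda;\eps)+o(1)$, up to the usual $O(1/n)$ corrections coming from the diagonal of $\bZ$ and from $\|\bX\bX^\sT\|_F^2/n^2$ concentration. At $\lambda=0$ the mutual information is exactly $\eps\log 2$ (only the erasure channel contributes), giving a boundary condition. Integrating the derivative from $0$ to $\lambda$ and substituting the AMP upper bound on $\MMSE_n$ produces an \emph{upper} bound on $n^{-1}I(\bX;\bX(\eps),\bY)$; after the change of variables $\gamma=\lambda G(\gamma)$-type manipulations and using the I-MMSE relation $\Info'(\gamma)=\frac12\mmse(\gamma)$ for the scalar channel, this upper bound evaluates exactly to $\Psi(\gamma_*(\lambda,\eps),\lambda,\eps)$. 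For the matching \emph{lower} bound, I would use the variational/convexity structure: since $\Psi(\cdot,\lambda,\eps)$ is minimized at $\gamma_*(\lambda,\eps)$ (the same computation as in the unerased case, using \eqref{eq:DerPsi} adapted with the factor $(1-\eps)$), any other candidate $\gamma$ gives $\Psi(\gamma,\lambda,\eps)\ge\Psi(\gamma_*,\lambda,\eps)$, and a standard second-moment / conditional-entropy lower bound (or a guessing argument showing $\MMSE_n\ge 1-\gamma_*^2/\lambda^2 - o(1)$ via the fixed-point being stable) closes the gap. Concretely, the lower bound on mutual information follows by noting the true $\MMSE_n(\lambda;\eps)$, integrated, cannot drop below what the fixed-point value dictates without violating the area theorem together with monotonicity of $\MMSE_n$ in $\lambda$.

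The main obstacle I expect is the \emph{tightness} half — showing the AMP upper bound on $\MMSE_n$ is not merely an upper bound but asymptotically exact, i.e. that no estimator beats AMP in this regime. This is where the area theorem does real work: one shows that if $\MMSE_n$ were strictly below the AMP prediction on a set of $\lambda$ of positive measure, then integrating would force $n^{-1}I(\bX;\bX(\eps),\bY)$ strictly below $\Psi(\gamma_*,\lambda,\eps)$ at some $\lambda$, while a direct lower bound on the mutual information (e.g. via the scalar-channel decomposition and subadditivity, or via the second-moment method exploiting $\eps>0$ to pin the sign) contradicts this. Making this rigorous requires care with (a) the $o(1)$ error terms in the I-MMSE relation for the finite-$n$ matrix model, (b) continuity of all quantities in $\lambda$ uniformly in $n$, and (c) the possibility of multiple fixed points of \eqref{eq:EpsFixedPoint} when $\lambda$ is just above threshold — here the choice of the \emph{largest} solution $\gamma_*(\lambda,\eps)$ is forced precisely by the AMP initialization via $\bX(\eps)$, and one must verify the state-evolution recursion started from the erasure-informed point indeed converges there rather than to a smaller fixed point. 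The role of $\eps>0$ is essential throughout: it regularizes the problem by breaking the $\bX\leftrightarrow-\bX$ symmetry, and the final Proposition \ref{prop:singleletter} with general $\eps$ is then passed to the $\eps\downarrow 0$ limit elsewhere (in the proof of Theorem \ref{thm:main}) by continuity.
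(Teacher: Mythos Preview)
Your high-level architecture is right and matches the paper: AMP gives an upper bound on the matrix MMSE, the I-MMSE identity converts this into a differential inequality for $n^{-1}I(\bX;\bX(\eps),\bY(\lambda))$, and an area-theorem argument closes the gap. The boundary condition at $\lambda=0$ and the role of $\eps>0$ in breaking the sign symmetry (so that state evolution converges to the unique positive fixed point $\gamma_*(\lambda,\eps)$) are also correctly identified.

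Where your proposal diverges from the paper, and where it is genuinely incomplete, is the \emph{tightness} step. You propose to obtain the matching lower bound on $n^{-1}I$ by some ``direct'' argument --- second moment, subadditivity, scalar-channel decomposition, or a guessing bound $\MMSE_n\ge 1-\gamma_*^2/\lambda^2$. None of these is worked out, and in fact none is needed. The paper instead exploits a \emph{second boundary condition}, at $\lambda\to\infty$: there one has the elementary limit $\lim_{\lambda\to\infty}\liminf_{n\to\infty}n^{-1}I(\bX;\bX(\eps),\bY(\lambda))=\log 2$ (perfect recovery), and separately $\lim_{\lambda\to\infty}\Psi(\gamma_*(\lambda,\eps),\lambda,\eps)=\log 2$. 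Thus
\[
\lim_{\lambda\to\infty}\liminf_{n\to\infty}\frac{1}{4}\int_0^\lambda \MMSE(\lambda',\eps,n)\,\d\lambda' \;=\; (1-\eps)\log 2 \;=\; \frac{1}{4}\int_0^\infty \MSEAMP(\lambda',\eps)\,\d\lambda'\, .
\]
Since $\MMSE(\lambda',\eps,n)\le \MSEAMP(t;\lambda',\eps,n)$ pointwise for every $t$, the two integrands must agree almost everywhere in the limit; monotonicity of $\MMSE$ in $\lambda$ and continuity of $\MSEAMP$ then upgrade this to all $\lambda$. Integrating over $[0,\lambda]$ finishes the proof. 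This ``sandwich at both endpoints'' is the actual mechanism of the area theorem here, and it replaces entirely the direct lower bound you were reaching for. Without it, your proposal has a real gap: a second-moment lower bound on $I$ matching $\Psi(\gamma_*,\lambda,\eps)$ is not obviously available, and the variational characterization of $\Psi$ only tells you about $\Psi$, not about $I$.
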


Using continuity in $\eps$, the last result implies directly 
a limit result for the mutual information under the Gaussian model,
which we single out since it is of independent interest.
\begin{thm}\label{thm:singleletterGauss}
  For any $\lambda>0$,  let $\gamma_*(\lambda)$
  be the largest non-negative solution of the equation:
 \begin{align}
      \gamma &= \lambda\,\big(1- \mmse(\gamma)\big)\, .
\end{align}
Further, define $\Psi(\gamma, \lambda)$ by:
\begin{align}
\Psi(\gamma, \lambda) &=  
\frac{\lambda}{4}+\frac{\gamma^2}{4\lambda}-\frac{\gamma}{2}+\Info(\gamma)\, .
\end{align}
  Then, we have
\begin{align}
    \lim_{n\to\infty}\frac{1}{n} I(\bX;\bY) &=\Psi\big(\gamma_*(\lambda),\lambda\big)\, .
\end{align}
\end{thm}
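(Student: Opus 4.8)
The plan is to deduce the statement from Proposition~\ref{prop:singleletter} by sending the erasure probability to one, i.e.\ letting $\eps\to 0$. The first step is an information-theoretic sandwich. The side information $\bX(\eps)$ has coordinates $X_i(\eps)\in\{-1,0,+1\}$ that are i.i.d.\ with law $(\eps/2,\,1-\eps,\,\eps/2)$, so by the chain rule $I(\bX;\bX(\eps),\bY)=I(\bX;\bY)+I(\bX;\bX(\eps)\,|\,\bY)$ together with $0\le I(\bX;\bX(\eps)\,|\,\bY)\le H(\bX(\eps))$ we get, for every $\eps\in(0,1]$ and every $n$,
\[
0\;\le\;\frac1n I(\bX;\bX(\eps),\bY)-\frac1n I(\bX;\bY)\;\le\;\frac1n H(\bX(\eps))\;=\;\delta(\eps),
\]
where $\delta(\eps)=-(1-\eps)\log(1-\eps)-\eps\log\eps+\eps\log 2\to 0$ as $\eps\to 0$. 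Combining this with Proposition~\ref{prop:singleletter} gives, for each fixed $\eps\in(0,1]$,
\[
\limsup_{n\to\infty}\Big|\tfrac1n I(\bX;\bY)-\Psi\big(\gamma_*(\lambda,\eps),\lambda,\eps\big)\Big|\;\le\;\delta(\eps).
\]
So the theorem follows once I show $\Psi\big(\gamma_*(\lambda,\eps),\lambda,\eps\big)\to\Psi(\gamma_*(\lambda),\lambda)$ as $\eps\to 0$ and then let $\eps\to 0$ in the last display.

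Comparing \eqref{eq:PsiEps} with the definition of $\Psi(\gamma,\lambda)$, and using continuity of $\gamma\mapsto\Info(\gamma)$, the convergence of $\Psi$ reduces to showing $\gamma_*(\lambda,\eps)\to\gamma_*(\lambda)$. This is the delicate point, since $\gamma_*(\cdot,\cdot)$ is selected as the \emph{largest} root of \eqref{eq:EpsFixedPoint} and such a selection need not be continuous. Here I would use the strict concavity of $G(\gamma):=1-\mmse(\gamma)$ on $[0,\infty)$ proved in Remark~\ref{rem:unique} (recall also $G(0)=0$, $G'(0)=1$, and $G<1$). Writing $G_\eps(\gamma)=1-(1-\eps)\mmse(\gamma)=G(\gamma)+\eps(1-G(\gamma))\ge G(\gamma)$, the map $\gamma\mapsto\lambda G_\eps(\gamma)-\gamma$ is strictly concave, equals $\lambda\eps>0$ at $\gamma=0$, and diverges to $-\infty$; hence it has a \emph{unique} positive zero $\gamma_*(\lambda,\eps)\in(0,\lambda]$, and (from $G_\eps\ge G$, $G<1$, together with the sign analysis of $\lambda G(\gamma)-\gamma$) this zero exceeds $\gamma_*(\lambda)$. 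Any subsequential limit $\gamma^\infty$ of $\gamma_*(\lambda,\eps)$ as $\eps\to 0$ is bounded, satisfies $\gamma^\infty=\lambda G(\gamma^\infty)$ by continuity, and obeys $\gamma^\infty\ge\gamma_*(\lambda)$; since $\gamma_*(\lambda)$ is by definition the largest root of $\gamma=\lambda G(\gamma)$, necessarily $\gamma^\infty=\gamma_*(\lambda)$, which gives $\gamma_*(\lambda,\eps)\to\gamma_*(\lambda)$.

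The main obstacle is exactly this last step --- preventing the largest fixed point from jumping as $\eps\downarrow 0$ --- and the structural fact that resolves it is the strict concavity of $G$ (equivalently of $\Info$), which both makes the positive root of $\gamma=\lambda G_\eps(\gamma)$ unique for $\eps>0$ and traps its $\eps\to 0$ limit at $\gamma_*(\lambda)$. The remaining ingredients --- the mutual-information sandwich, interchanging $\limsup_n$ with the $\eps$-limit, and the joint continuity of $\Psi$ --- are routine, and one only needs the elementary check that $\delta(\eps)\to 0$.
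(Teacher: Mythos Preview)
Your proof is correct and follows essentially the same route as the paper: bound the gap $\frac1n I(\bX;\bX(\eps),\bY)-\frac1n I(\bX;\bY)$ by a quantity vanishing as $\eps\to 0$, invoke Proposition~\ref{prop:singleletter}, and then pass to the limit using continuity of $\Psi$ together with $\gamma_*(\lambda,\eps)\to\gamma_*(\lambda)$. The paper uses the slightly cruder bound $\eps\log 2$ (implicitly conditioning on the erasure pattern) and simply cites Lemma~\ref{rem:unique} for the fixed-point convergence, whereas you spell out the concavity-based argument that traps the limit of $\gamma_*(\lambda,\eps)$ at the largest root---this is exactly the content the paper defers, so your writeup is a bit more self-contained but not a different approach.
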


\subsection{Approximate Message Passing (AMP)}

To analyze the Gaussian model \myeqref{eq:gaussobs} we introduce 
an approximate message passing (AMP)  algorithm that computes estimates $\bx^t\in\reals^n$ at time $t$, which are functions
of the observations $\bY, \bX(\eps)$. This construction follows the 
general scheme of AMP algorithms developed in  \cite{DMM09,BM-MPCS-2011,javanmard2013state}. 
Given a sequence of functions
$f_t:\reals \times \left\{ -1, 0 , +1 \right\} \to\reals$, we set
$\bx^0 = 0$ and compute
\begin{align}
  \bx^{t+1} &= \frac{\bY(\lambda)}{\sqrt{n}} f_t(\bx^t , \bX(\eps)) -
  \ons_t f_{t-1}(\bx^{t-1}, \bX(\eps))\, , \label{eq:ampiter}\\
  \ons_t &= \frac{1}{n} \sum_{i=1}^n f_t'(\bx^t_i, \bX(\eps)_i)\, .
\end{align}
Above (and in the sequel) we extend the function $f_t$ to vectors by
applying it component-wise, i.e. $f_t(\bx^t, \bX(\eps)) = 
(f_t(x^t_1, X(\eps)_1), f_t(x^t_2, X(\eps)_2), \dots f_t(x^t_n,
X(\eps)_n))$. 

The AMP iteration above proceeds analogously to the usual power iteration to compute
principal eigenvectors, but has an additional memory term $-\ons_t f_{t-1}(\bx^{t-1})$. This
additional term changes the behavior of the iterates in an important way: unlike the usual power
iteration, there is an explicit distributional characterization of the iterates $\bx^t$
in the limit of large dimension. 
Namely, for each time $t$ we will show that, approximately $x^t_i$ is a scaled version of
the truth $X_i$ observed through Gaussian noise of a certain variance. We define the following
two-parameters recursion, with initialization $\mu_0=\sigma_0=0$, which will be referred to as  \emph{state evolution}:
\begin{align}
  \mu_{t+1} &= 
  \sqrt{\lambda}\, \E\left\{ X_0 f_t(\mu_t X_0 + \sigma_t Z_0 ,
    X_0(\eps) )\right\}\, ,\label{eq:stateevol1} \\
  \sigma^2_{t+1} &= \E\left\{ f_t(\mu_t X_0 + \sigma_t Z_0,  X_0(\eps) )^2
  \right\}\, ,\label{eq:stateevol2}
\end{align}
where expectation is with respect to the independent random variables
$X_0\sim\Unif(\{+1,-1\})$, $Z_0\sim\normal(0,1)$ and
$B_0\sim\Ber(1-\eps)$, setting $X_0(\eps) = B_0X_0$.

The following lemma makes this distributional characterization precise.
It follows from the more general result of \cite{javanmard2013state} and we provide a 
proof in Appendix \ref{app:addproofs}.
\begin{lemma}[State Evolution]
  \label{lem:stateevollem}
  Let $f_t:\reals\times\{-1, 0, 1\}\to\reals$ be a sequence of functions such that
  $f_t, f'_t$ are Lipschitz continuous in
  their first argument (where $f'_t$ denotes the derivative of $f_t$ with respect to the
  first argument).

  Let $\psi: \reals\times\{+,1-1\}\times\{+1,0,-1\} \to\reals$ 
be a test function such that $|\psi(x_1,s,r)-\psi(x_2,s,r)|\le
C(1+|x_1|+|x_2|)\, |x_1-x_2|$ for all $x_1,x_2,s,r$. 
Then the following limit holds almost surely for $(X_0,Z_0,X_0(\eps))$
random variables distributed as above
\begin{align}
    \lim_{n\to\infty}\frac{1}{n}\sum_{i=1}^n \psi(x^t_i, X_i,X(\eps)_i) &= 
    \E\left\{ \psi\big(\mu_t X_0 + \sigma_t Z_0, X_0,X_0(\eps)\big)
    \right\}\, ,
  \end{align}
\end{lemma}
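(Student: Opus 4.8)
The plan is to deduce this from the general state evolution theorem for AMP iterations with matrix-valued (here, symmetric Gaussian) sensing matrices, as developed in \cite{javanmard2013state} (building on \cite{BM-MPCS-2011}). First I would recast the iteration \eqref{eq:ampiter} in the standard form to which that theorem applies: the relevant random matrix is $\bY(\lambda)/\sqrt{n} = \sqrt{\lambda}/n\,\bX\bX^\sT + \bZ/\sqrt{n}$, i.e. a Wigner-type matrix with an additive rank-one ``signal'' term plus noise of variance $1/n$ per off-diagonal entry. The nonlinearities $f_t(\,\cdot\,,\bX(\eps))$ carry the side information $\bX(\eps)$ as a fixed (conditionally, given $\bX$ and $\bB$) parameter attached to each coordinate; since $X_i(\eps)=B_iX_i$ with $B_i\sim\Ber(\eps)$ independent of everything, the empirical distribution of the pairs $(X_i, X_i(\eps))$ converges almost surely to the law of $(X_0,X_0(\eps))$, which is exactly the input distribution the general theorem requires.

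The key steps, in order, are: (1) verify the hypotheses of the general AMP state evolution theorem — the Onsager-corrected form of the iteration, the Lipschitz continuity of $f_t$ and $f_t'$ in the first argument (given), the Gaussianity of $\bZ$ with the stated variance profile (noting that the diagonal entries $Z_{ii}\sim\normal(0,2)$ contribute only an $O(1/n)$ correction and do not affect the limit), and the convergence of the empirical law of the disorder $(X_i,X_i(\eps))$; (2) apply the theorem to conclude that, jointly with $(\bX,\bX(\eps))$, the iterate $\bx^t$ is asymptotically distributed as $\mu_t X_0 + \sigma_t Z_0$ in the sense that empirical averages of pseudo-Lipschitz test functions converge to the corresponding Gaussian expectations, where $(\mu_t,\sigma_t)$ obey a recursion of the stated type; (3) identify the recursion produced by the general theorem with \eqref{eq:stateevol1}--\eqref{eq:stateevol2}: the mean parameter picks up the factor $\sqrt{\lambda}$ from the signal strength $\sqrt{\lambda/n}\,\bX\bX^\sT$ via the overlap $\langle\bX, f_t(\bx^t,\bX(\eps))\rangle/n \to \E\{X_0 f_t(\mu_tX_0+\sigma_tZ_0,X_0(\eps))\}$, and the variance parameter comes from $\|f_t(\bx^t,\bX(\eps))\|_2^2/n$, exactly matching the claimed formulas; (4) check the test-function class: the hypothesis $|\psi(x_1,s,r)-\psi(x_2,s,r)|\le C(1+|x_1|+|x_2|)|x_1-x_2|$ is the standard pseudo-Lipschitz (order 2) condition, so $\psi(x^t_i,X_i,X_i(\eps))$ is covered directly, with the discrete arguments $X_i,X_i(\eps)$ treated as part of the disorder.

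The main obstacle is bookkeeping rather than conceptual: one must carefully match the scaling conventions (the $\sqrt{\lambda/n}$ factored as signal versus the $1/\sqrt{n}$ noise normalization) and confirm that the side-information coordinates $\bX(\eps)$ fit the ``deterministic-parameter'' slot of the cited theorem, which typically allows the nonlinearity to depend on an auxiliary vector whose empirical distribution converges. A secondary point requiring care is the diagonal of $\bZ$ and of $\bX\bX^\sT$: the term $\sqrt{\lambda/n}\,X_i^2 = \sqrt{\lambda/n}$ on the diagonal, together with $Z_{ii}$, contributes $\sqrt{\lambda/n}\,f_t(x^t_i,\cdot) + (Z_{ii}/\sqrt n) f_t(x^t_i,\cdot)$ to coordinate $i$ of $\bx^{t+1}$, which is $o(1)$ in the relevant norms and so can be absorbed into the error terms of the general theorem; I would state this explicitly and defer the routine estimate to the appendix. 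Once these identifications are in place, the claimed almost-sure limit is an immediate instance of the general result, and no further argument is needed.
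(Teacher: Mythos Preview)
Your plan has a genuine gap at step~(2). The state evolution theorem of \cite{javanmard2013state} (and its predecessor \cite{BM-MPCS-2011}) applies to AMP iterations driven by a \emph{pure Gaussian} matrix, whereas $\bY(\lambda)/\sqrt{n}=(\sqrt{\lambda}/n)\,\bX\bX^{\sT}+\bZ/\sqrt{n}$ is a rank-one spike plus Gaussian. In that framework the auxiliary vector carried by the nonlinearity must be independent of the driving matrix; here $\bX$ sits in both places, so you cannot slot $(\bX,\bX(\eps))$ into the ``deterministic parameter'' role and invoke the theorem as a black box. In particular, the mean parameter $\mu_t$ in \eqref{eq:stateevol1} does \emph{not} fall out of the general theorem applied to a Gaussian matrix (that theorem produces a centered Gaussian limit); $\mu_t$ arises precisely from the spike, and extracting it is exactly the part that needs an argument.

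The paper handles this by decoupling the signal from the matrix. It introduces an auxiliary iteration $\bs^{t}$ of the same form as \eqref{eq:ampiter} but driven by $\bZ$ alone, with the signal absorbed into the nonlinearity via the shift $f_t(\bs^t+\mu_t\bX,\bX(\eps))$. Since $(\bX,\bX(\eps))$ are independent of $\bZ$, the theorem of \cite{javanmard2013state} now applies directly to $\bs^t$ and gives the centered description $s^t_i\approx\sigma_t Z_0$. The real content of the proof is then an induction on $t$ showing that $\|\bx^t-(\bs^t+\mu_t\bX)\|_2^2/n\to 0$ almost surely; this uses the Lipschitz hypotheses on $f_t,f_t'$, the state-evolution control of $\bs^t$ already established, and the operator-norm bound $\|\bZ\|_2=O(\sqrt{n})$. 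The pseudo-Lipschitz condition on $\psi$ then transfers the limit from $\bs^t+\mu_t\bX$ to $\bx^t$. Your bookkeeping remarks about scaling and the diagonal are fine but secondary; the decoupling-plus-induction step is the missing idea.
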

Although the above holds for a relatively broad class of functions $f_t$, we are interested
in the AMP algorithm for specific functions $f_t$. Specifically, we
following sequence of  functions
\begin{align}
  f_t(y, s) &= \E\big\{ X_0 | \mu_t X_0 + \sigma_t Z_0 = y,\, X_0(\eps) = s
    \big\}\, .  \label{eq:optfunc}
\end{align}
It is easy to see that $f_t$ satisfy the requirement of Lemma \ref{lem:stateevollem}.
We will refer to this version of AMP as \emph{Bayes-optimal AMP}. 

Note that the definition  (\ref{eq:optfunc}) depends itself on $\mu_t$ and $\sigma_t$
defined through Eqs.~(\ref{eq:stateevol1}),
(\ref{eq:stateevol1}). This recursive definition is perfectly well
defined and yields 
\begin{align}
  \mu_{t+1} &= \sqrt{\lambda} \E\big\{ X_0 \E\{X_0| \mu_t X_0 + \sigma_t Z_0, X(\eps)_0\} \big\}, \\
  \sigma_{t+1}^2 &= \E\big\{ \E\{X_0|\mu_t X_0 + \sigma_t Z_0,
    X(\eps)_0\}^2 \big\}\, .
\end{align}
Using the fact that  $f_t(y, s)= \E\{ X_0 | \mu_t X_0 + \sigma_t Z_0 =
y,\, X_0(\eps) = s\}$
is the minimum mean square error estimator, we obtain
\begin{align}
  \mu_{t+1} &= \sqrt{\lambda} \sigma_{t+1}^2\, , \label{eq:simplstateevol1}\\
  \sigma_{t+1}^2 
  &= 1 - (1-\eps)\,\mmse( \lambda \sigma_t ^2)\,
  , \label{eq:simplstateevol2}
\end{align}
 where $\mmse(\,\cdot\,)$ is given explicitly by
 Eq.~(\ref{eq:mmseFormula}).

In other words, the state evolution recursion reduces to a simple
one-dimensional recursion that we can write in terms of the variable
$\gamma_t \equiv \lambda\sigma_t^2$. We obtain
\begin{align}
  \gamma_{t+1} &= \lambda \big(1 - (1-\eps)\,\mmse( \gamma_t)\big)\,
  ,\label{eq:simplstateevol3}\\
& \sigma^2_t = \frac{\gamma_t}{\lambda}\, ,\;\;\;\; \mu_t = \frac{\gamma_t}{\sqrt{\lambda}}
\end{align}

Our
proof strategy uses the AMP algorithm to construct estimates that \emph{bound from above}
the minimum error of estimating $\bX$ from observations  $\bY,\bX(\eps)$. However, in the limit of a large
number of iterations, we show that the gap between this upper bound
and the minimum estimation error
vanishes via an area theorem. 

More explicitly, we develop an upper bound on the matrix mean square
error first introduced in Eq.~(\ref{eq:MMSEdef}). We generalize this
in the obvious way to the Gaussian observation model:
\begin{align}
\MMSE(\lambda,\eps,n) &\equiv
\frac{1}{n^2}\,\E\Big\{\big\|\bX\bX^{\sT}-\E\{\bX\bX^{\sT}|\bX(\eps),\bY\}\big\|_F^2\Big\}\,. 
\end{align}
(Note that we adopt here a slightly different normalization with
respect to Eq.~(\ref{eq:MMSEdef}). This change is immaterial in the large
$n$ limit.)

We then use AMP to construct the sequence of estimators $\hbx^{t} =
f_{t-1}(\bx^{t-1}, \bX(\eps))$,
indexed by $t\in \{1,2,\dots\}$,  where $f_{t-1}$ is defined as in
\myeqref{eq:optfunc}. 
The matrix mean squared error  of this
estimators will be denoted by
\begin{align}
  \MSEAMP(t;\lambda, \eps, n)&\equiv \frac{1}{n^2}\,\E\Big\{
  \big\|\bX\bX^\sT-\hbx^t (\hbx^t) ^\sT\big\|^2 \Big\}\, .
\end{align}

We also define the limits
\begin{align}
\MSEAMP(t;\lambda, \eps) &\equiv \lim_{n\to\infty} \MSEAMP(t;\lambda,\eps)\, ,\\
\MSEAMP(\lambda,\eps) &= \lim_{t\to\infty}\mseAMP(t;\lambda, \eps)\, .
\end{align}
 In the course of the proof,
we will also see that these limits are well-defined, using the state
evolution  Lemma \ref{lem:stateevollem}

%
%

\subsection{Proof of Theorem \ref{thm:main} and Theorem  \ref{thm:singleletterGauss}}
\label{sec:ProofTheoremMain}

The proof is almost immediate given Propositions
\ref{prop:gausseq} and \ref{prop:singleletter}.
Firstly, note that, for any $\eps\in (0,1]$, 
\begin{align}
\left|\frac{1}{n} I(\bX;\bX(\eps),\bY) -\frac{1}{n} I(\bX;
  \bY)\right|\le\frac{1}{n} I(\bX;\bX(\eps),\bY) \le \eps\log 2
\end{align}
Since,  by Proposition \ref{prop:singleletter} $I(\bX;\bX(\eps),\bY)/n$ has a well-defined limit as $n\to\infty$, 
and $\eps>0$ is arbitrary, we have that:
\begin{align}
  \lim_{n\to\infty} \frac{1}{n} I(\bX;\bY) &= \lim_{\eps\to 0}
  \Psi(\gamma_*(\lambda, \eps), \lambda, \eps)\, . 
\end{align}

It is immediate to check that  $\Psi(\gamma, \lambda, \eps)$ is
continuous in $\eps\ge 0$, $\gamma\ge 0$
and $\Psi(\gamma, \lambda, 0) = \Psi(\gamma, \lambda)$ as defined
in Theorem \ref{thm:main}.
Furthermore, as $\eps\to 0$, the unique positive solution
$\gamma_*(\lambda, \eps)$ of Eq.~(\ref{eq:EpsFixedPoint})
converges to $\gamma_*(\lambda)$, the largest non-negative solution
to of Eq.~(\ref{eq:MainEquation}), which we copy here for the readers'
convenience:
\begin{align}
  \gamma &= \lambda(1-\mmse(\gamma)).
\end{align}
This follows from the smoothness and concavity of the function
$1-\mmse(\gamma)$ (see Lemma \ref{rem:unique}).
It follows that $\lim_{\eps\to 0}\Psi(\gamma_*(\lambda, \eps),
\lambda, \eps) =\Psi(\gamma_*(\lambda),
\lambda)$  and therefore
\begin{align}
  \lim_{n\to\infty} \frac{1}{n}I(\bX; \bY) &= \Psi(\gamma_*(\lambda),
  \lambda), 
\end{align}
This proves Theorem \ref{thm:singleletterGauss}.
Theorem \ref{thm:main}
follows by applying Proposition \ref{prop:gausseq}. 

%
%

\section{Proof of Proposition \ref{prop:gausseq}}
\label{sec:Gausseq}

Given a collection $\bV = (V_{ij})_{i<j}$ of random variables defined on the same
probability space as $\bX$, and a non-negative real number $\lambda$,  
we define the following Hamiltonian and log-partition
function associated with it:
\begin{align}
  \cH(\bx, \bX, \bV, \lambda, n) &\equiv \sum_{i<j} V_{ij} (x_i x_j -
                                   X_i X_j) + \frac{\lambda}{n} x_ix_j
                                   X_iX_j \, ,\\
  \phi(X, V, \lambda, n) &\equiv \log \Big\{\sum_{x\in \hcube} \exp\big(\cH(\bx,
                           \bX, \bV, \lambda, n)\big)\Big\}\, .  
\end{align}
\begin{lemma}\label{lem:gaussentro}
  We have the identity:
  \begin{align}
    I(\bX; \bY) = n\log 2 + \frac{(n-1)\lambda}{2} - \E_{X, Z} \left\{
    \phi(\bX, \bZ\sqrt{\lambda/n}, \lambda, n) \right\}\, . 
\end{align}
\end{lemma}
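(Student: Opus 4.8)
The plan is to compute $I(\bX;\bY)$ directly from the definition $I(\bX;\bY)=\E\log\bigl(p_{\bY\mid\bX}(\bY\mid\bX)/p_{\bY}(\bY)\bigr)$, using the explicit Gaussian form of the channel, and then to bookkeep carefully which terms depend on the dummy spin configuration $\bx$ inside the partition function and which do not. The whole argument is a direct calculation; the role of the identity $x_i^2=X_i^2=1$ is what makes the quadratic pieces collapse so that only the Hamiltonian $\cH$ survives.

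First I would dispose of the diagonal entries. Since $X_i^2=1$, we have $Y_{ii}=\sqrt{\lambda/n}+Z_{ii}$, which is independent of $\bX$ and of $(Y_{ij})_{i<j}$, so $I(\bX;\bY)=I\bigl(\bX;(Y_{ij})_{i<j}\bigr)$ and from here on only the off-diagonal observations matter, with conditional density $p_{\bY\mid\bX}(\bY\mid\bx)=\prod_{i<j}(2\pi)^{-1/2}\exp\bigl(-\tfrac12(Y_{ij}-\sqrt{\lambda/n}\,x_ix_j)^2\bigr)$. Next, writing $p_{\bY}=2^{-n}\sum_{\bx\in\hcube}p_{\bY\mid\bX}(\,\cdot\mid\bx)$, I would expand each square via $-\tfrac12(Y_{ij}-\sqrt{\lambda/n}\,x_ix_j)^2=-\tfrac12Y_{ij}^2+\sqrt{\lambda/n}\,Y_{ij}x_ix_j-\tfrac{\lambda}{2n}$ (using $x_i^2=1$). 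The $\bx$-independent pieces $-\tfrac12\sum_{i<j}Y_{ij}^2$ and $-\binom n2\tfrac{\lambda}{2n}$, together with the Gaussian normalizations, factor out of $\log\sum_{\bx}$ and cancel against the identical pieces coming from $\log p_{\bY\mid\bX}(\bY\mid\bX)$, leaving
\[
I(\bX;\bY)=n\log 2+\E\Bigl[\sqrt{\tfrac\lambda n}\sum_{i<j}Y_{ij}X_iX_j-\log\!\!\sum_{\bx\in\hcube}\!\exp\Bigl(\sqrt{\tfrac\lambda n}\sum_{i<j}Y_{ij}x_ix_j\Bigr)\Bigr].
\]

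Finally I would substitute $Y_{ij}=\sqrt{\lambda/n}\,X_iX_j+Z_{ij}$. Setting $c\equiv\sqrt{\lambda/n}\sum_{i<j}Z_{ij}X_iX_j$, the first term becomes $\binom n2\tfrac\lambda n+c=\tfrac{(n-1)\lambda}{2}+c$, and inside the logarithm the exponent becomes $\tfrac\lambda n\sum_{i<j}x_ix_jX_iX_j+\sqrt{\lambda/n}\sum_{i<j}Z_{ij}x_ix_j$, which by the definition of $\cH$ with $V_{ij}=Z_{ij}\sqrt{\lambda/n}$ equals exactly $\cH(\bx,\bX,\bZ\sqrt{\lambda/n},\lambda,n)+c$; since $c$ is $\bx$-independent, $\log\sum_{\bx}\exp(\cH+c)=c+\phi(\bX,\bZ\sqrt{\lambda/n},\lambda,n)$. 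The two copies of $c$ cancel, yielding precisely $I(\bX;\bY)=n\log 2+\tfrac{(n-1)\lambda}{2}-\E_{X,Z}\{\phi(\bX,\bZ\sqrt{\lambda/n},\lambda,n)\}$. I do not expect any real obstacle here — it is pure algebra — and the only two points that deserve an explicit line of justification are the reduction to off-diagonal entries in the first step, and the verification that the $\bx$-independent quadratic and normalization terms genuinely cancel between $\log p_{\bY\mid\bX}(\bY\mid\bX)$ and $\log\sum_{\bx}p_{\bY\mid\bX}(\bY\mid\bx)$, which holds because $x_i^2=X_i^2=1$.
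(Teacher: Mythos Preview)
Your proposal is correct and follows essentially the same direct computation as the paper: both start from $I(\bX;\bY)=\E\log\bigl(p_{\bY\mid\bX}/p_{\bY}\bigr)$, expand the Gaussian quadratic forms, use $x_i^2=X_i^2=1$ to collapse the $\bx$-independent pieces, and identify the remaining exponent with $\cH$. The paper organizes the algebra slightly differently---it writes the density via $\exp(-\|\bY-\sqrt{\lambda/n}\,\bx\bx^{\sT}\|_F^2/4)$ and extracts the $(n-1)\lambda/2$ term from the identity $(x_ix_j-X_iX_j)^2=2-2x_ix_jX_iX_j$---whereas you first isolate the $Y$-linear term and then substitute $Y_{ij}=\sqrt{\lambda/n}\,X_iX_j+Z_{ij}$, but the content is identical.
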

\begin{proof}
  By definition:
  \begin{align}
    I(\bX; \bY) &= \E\left\{\log \frac{\d p_{\bY|\bX} (\bY | \bX)}{\d p_\bY (\bY(\lambda))} \right\}.
  \end{align}
  Since the two distributions $p_{\bY|\bX}$ and  $p_{\bY}$ are absolutely continuous with respect to each
  other, we can write the above simply in terms of the ratio of (Lebesgue) densities,
  and we obtain:
  \begin{align}
  I(X; Y) &= \E\log \left\{\frac{\exp \big(-\big\|\bY - \sqrt{\lambda/n}\bX\bX^\sT\big\|_F^2/4\big)}
  {\sum_{x\in\hcube} 2^{-n}\exp\big(-\big\|\bY - \sqrt{\lambda/n} \bx\bx^\sT\big\|_F^2/4\big) } \right\} \\
    &= n\log 2- \E_{\bX, \bZ} \log \left\{ \sum_{x\in \hcube} 
\exp\left( -\frac{1}{2}\sum_{i<j}\Big(Z_{ij} + \sqrt{\frac{\lambda}{n}}(X_iX_j - x_i x_j)\Big)^2 +\frac{1}{2} Z_{ij}^2)   \right) \right\} \\
    &= n\log 2-\E_{\bX, \bZ} \log \left\{ \sum_{x\in \hcube}
      \exp\left(\sum_{i<j} \sqrt{\frac{\lambda}{n}} Z_{ij}(x_ix_j -
      X_i X_j) - \frac{\lambda}{2n} (x_ix_j - X_i X_j)^2 \right)
      \right\}\, .\label{eq:gaussmutinfexpr}
  \end{align}
  We modify the final term as follows:
  \begin{align}
    \sum_{i<j} (x_i x_j - X_i X_j)^2 &= \sum_{i<j} 2 -2 x_i x_j X_i X_j \\
    &= n(n-1) - 2\sum_{i<j} x_i x_j X_i X_j.
  \end{align}
  Substituting this in \myeqref{eq:gaussmutinfexpr} we have
  \begin{align}
    I(\bX;\bY) &= n\log 2-\E_{\bX, \bZ} \log \left\{\sum_{x\in \hcube} %
    \exp\Big(\sum_{i<j} \sqrt{\frac{\lambda}{n}} Z_{ij}(x_ix_j - X_i X_j)%
    + \frac{\lambda}{n} x_ix_jX_iX_j \Big) \right\} + \frac{1}{2}\lambda(n-1),
  \end{align}
  as required. 
\end{proof}

\begin{lemma}\label{lem:sbmentro}
   Define the (random) Hamiltonian $\sbmH(x, \lambda, n)$ by:
   \begin{align}\label{eq:sbmentrodef}
     \sbmH(\bx, \bX, \bG, n) &\equiv \sum_{i<j}
                               \left\{G_{ij}\log\left( \frac{\op_n +
                               \Delta_n x_i x_j}{\op_n + \Delta_n
                               X_iX_j} \right) + (1-G_{ij})\log \left(
                               \frac{1-\op_n - \Delta_n x_i
                               x_j}{1-\op_n - \Delta_n X_iX_j}
                               \right)\right\}\, .
   \end{align}
Then we have that:
\begin{align}
     I(\bX;\bG) &=  n\log 2-\E_{\bX, \bG}\log  \left\{
                  \sum_{\bx\in\hcube} \exp(\sbmH(\bx,\bX, \bG, n))
                  \right\}\, .
   \end{align}
\end{lemma}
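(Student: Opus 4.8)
The plan is to mirror the computation in the proof of Lemma~\ref{lem:gaussentro}, replacing the Gaussian densities by the Bernoulli likelihoods of the SBM; since $\bG$ is a discrete random variable supported on the finite set $\cG_n$, there is no dominating-measure subtlety and $\d p_{\bG|\bX}/\d p_\bG$ is literally a ratio of probability mass functions. I would start from the definition
\begin{align}
I(\bX;\bG) = \E\left\{\log\frac{\d p_{\bG|\bX}(\bG|\bX)}{\d p_\bG(\bG)}\right\}\, ,
\end{align}
and use that the prior on $\bX$ is uniform on $\hcube$, so $p_\bG(\bG) = 2^{-n}\sum_{\bx\in\hcube} p_{\bG|\bX}(\bG|\bx)$. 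Dividing numerator and denominator by $p_{\bG|\bX}(\bG|\bX)$ then gives
\begin{align}
I(\bX;\bG) = n\log 2 - \E_{\bX,\bG}\log\left\{\sum_{\bx\in\hcube}\frac{p_{\bG|\bX}(\bG|\bx)}{p_{\bG|\bX}(\bG|\bX)}\right\}\, .
\end{align}

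Next I would expand the conditional likelihoods using the product structure over edges. Conditional on $\bX$, the variables $\{G_{ij}\}_{i<j}$ are independent with $\prob(G_{ij}=1\,|\,\bX) = \op_n + \Delta_n X_iX_j$, hence $p_{\bG|\bX}(\bG|\bx) = \prod_{i<j}(\op_n+\Delta_n x_ix_j)^{G_{ij}}(1-\op_n-\Delta_n x_ix_j)^{1-G_{ij}}$. Forming the ratio with the same product evaluated at $\bx=\bX$ and taking the logarithm yields exactly $\sbmH(\bx,\bX,\bG,n)$ as defined in~\eqref{eq:sbmentrodef}. Substituting this back into the previous display gives the claimed identity.

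The only points to verify along the way are bookkeeping: under the hypotheses $\op_n\pm\Delta_n\in(0,1)$ for all $n$ large, so every factor above is strictly positive and the logarithms are well defined; and the denominator $p_{\bG|\bX}(\bG|\bX)$ is almost surely positive for each fixed $n$, so the manipulation makes sense $\bX$-a.s. There is no genuine obstacle here — the lemma is a direct application of Bayes' rule, and the sole piece of content is recognizing that the summed log-likelihood-ratio over edges coincides with the definition of $\sbmH$.
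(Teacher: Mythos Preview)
Your proposal is correct and follows essentially the same route as the paper: both start from $I(\bX;\bG)=\E\log\big(p_{\bG|\bX}/p_\bG\big)$, expand $p_\bG$ as the uniform mixture $2^{-n}\sum_{\bx}p_{\bG|\bX}(\bG|\bx)$, and identify the log-likelihood ratio over the product of Bernoulli edge factors with $\sbmH(\bx,\bX,\bG,n)$. Your extra remarks on positivity of the factors are accurate and simply make explicit what the paper leaves implicit.
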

\begin{proof}
    This follows directly from the definition of mutual
    information: 
    \begin{align}\label{eq:sbmentrdef}
      I(\bX;\bG) &= \E_{\bX, \bG} \left\{ \frac{\d
                   p_{\bG|\bX}(\bG|\bX)}{\d p_{\bG}(\bG)} \right\}\, .
    \end{align}
    As in Lemma \ref{lem:gaussentro} we can write this in terms of densities as:
    \begin{align}
      \frac{\d p_{\bG|\bX}(\bG|\bX)}{\d p_\bG (\bG)} &= \frac{\prod_{i<j}(\op_n + \Delta_n X_i X_j)^{G_{ij}} (1- \op_n - \Delta_nX_i X_j)^{1-G_{ij}}}%
      {\sum_{x\in \hcube} 2^{-n}\prod_{i<j}(\op_n + \Delta_n x_i x_j)^{G_{ij}} (\op_n + \Delta_n X_i X_j)^{1-G_{ij}}  }
    \end{align}
    Substituting this in the mutual information formula \myeqref{eq:sbmentrdef} yields the lemma.
\end{proof}

Define the random variables $\btG = (\tG_{ij})_{i<j}$ as follows:
\begin{align}
  \tG_{ij} &\equiv \frac{\Delta_n}{\op_n(1-\op_n)}(G_{ij} - \op_n -\Delta_n X_iX_j).
\end{align}

  The following lemma shows that, to compute $I(\bX;\bG)$ it suffices to compute the 
  log-partition function with respect to the approximating Hamiltonian.
  \begin{lemma}
Assume that, as $n\to\infty$, $(i)$ $\lambda_n\to\lambda$ and $(ii)$  $n\op_n(1-\op_n)\to\infty$.
    Then, we have
    \begin{align}
      I(\bX;\bG) &=n\log 2 +\frac{(n-1)\lambda_n}{2} -  
      \E_{\bX, \btG}\left\{ \phi(\bX, \btG, \lambda_n, n) \right\}+
                   O\left(\frac{n\lambda^{3/2}}{\sqrt{n\op_n(1-\op_n)}}\right)\,
                   .
    \end{align}
    \label{lem:sbmapproxrel}
  \end{lemma}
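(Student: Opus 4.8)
The plan is to start from Lemma~\ref{lem:sbmentro}, which expresses $I(\bX;\bG) = n\log 2 - \E\log\sum_{\bx}\exp(\sbmH(\bx,\bX,\bG,n))$, and to show that the SBM Hamiltonian $\sbmH(\bx,\bX,\bG,n)$ can be replaced, up to a controlled error in expectation, by the quadratic Hamiltonian $\cH(\bx,\bX,\btG,\lambda_n,n)$ appearing in the definition of $\phi$. Concretely, I would Taylor-expand each term $G_{ij}\log\big((\op_n+\Delta_n x_ix_j)/(\op_n+\Delta_n X_iX_j)\big) + (1-G_{ij})\log\big((1-\op_n-\Delta_n x_ix_j)/(1-\op_n-\Delta_n X_iX_j)\big)$ to second order in $\Delta_n$ (recalling $\Delta_n = \sqrt{\lambda_n\op_n(1-\op_n)/n}$ is small). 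The zeroth order vanishes, the first-order term is exactly $\tG_{ij}(x_ix_j - X_iX_j)$ by the definition of $\tG_{ij}$, the second-order term contributes $-\tfrac{\lambda_n}{2n}(x_ix_j X_iX_j$-type terms$)$ which, after using $\sum_{i<j}(x_ix_j - X_iX_j)^2 = n(n-1) - 2\sum_{i<j}x_ix_jX_iX_j$ exactly as in Lemma~\ref{lem:gaussentro}, produces the $\frac{(n-1)\lambda_n}{2}$ constant plus the $\frac{\lambda_n}{n}x_ix_jX_iX_j$ term inside $\cH$. This identification of the leading two orders is the source of the main formula; the remainder is a third-order (and higher) Taylor tail.

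The key quantitative step is then to bound the error incurred by truncating the expansion. I would write $\sbmH(\bx,\bX,\bG,n) = \cH(\bx,\bX,\btG,\lambda_n,n) - \tfrac{(n-1)\lambda_n}{2} + \sum_{i<j} R_{ij}(\bx,\bX,\bG)$ where $R_{ij}$ is the Taylor remainder, of order $\Delta_n^3/\op_n^2$ in magnitude when $G_{ij}\in\{0,1\}$ and $x_ix_j, X_iX_j\in\{\pm1\}$; using $\Delta_n^3/\op_n(1-\op_n)^2 \asymp (\lambda_n\op_n(1-\op_n)/n)^{3/2}/(\op_n(1-\op_n))^2 = \lambda_n^{3/2}/(n^{3/2}\sqrt{\op_n(1-\op_n)})$, summing over the $\binom n2$ pairs gives a total of order $\sqrt n\,\lambda_n^{3/2}/\sqrt{\op_n(1-\op_n)} = n\lambda^{3/2}/\sqrt{n\op_n(1-\op_n)}$, which is exactly the claimed error. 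To convert a pointwise (in $\bx$) bound on the Hamiltonian into a bound on $\E\log\sum_{\bx}\exp(\cdot)$, I would use the elementary fact that $|\log\sum_\bx e^{a_\bx} - \log\sum_\bx e^{b_\bx}| \le \max_\bx|a_\bx - b_\bx|$, so it suffices to bound $\sup_{\bx}\big|\sum_{i<j}R_{ij}\big|$ and take expectations (the bound above is already deterministic given the ranges of the variables, so no concentration is needed — though one should be slightly careful that $\op_n\pm\Delta_n$ stays bounded away from $0$ and $1$, which holds since $\Delta_n/\op_n \to 0$ under assumption $(ii)$).

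The main obstacle I anticipate is bookkeeping the Taylor remainder uniformly and checking that the denominator $\op_n + \Delta_n x_ix_j$ (and $1-\op_n-\Delta_nx_ix_j$) is bounded below by, say, $\op_n/2$ (resp. $(1-\op_n)/2$) so that the third derivative of $t\mapsto \log(\op_n + t)$ is controlled by $C/\op_n^3$ on the relevant interval; combined with $|t| = \Delta_n \le \sqrt{\op_n/n}\cdot C$ this is where assumption $(ii)$, $n\op_n(1-\op_n)\to\infty$, enters to guarantee $\Delta_n \ll \op_n(1-\op_n)$. A secondary, purely cosmetic point is that the lemma states the error with $\lambda^{3/2}$ (the limit) rather than $\lambda_n^{3/2}$; since $\lambda_n\to\lambda$ these differ by a lower-order amount absorbed in the $O(\cdot)$, and the discrepancy $\frac{(n-1)}{2}|\lambda_n-\lambda|$ between $\frac{(n-1)\lambda_n}{2}$ here and any $\lambda$-version is kept honest by writing $\lambda_n$ throughout this lemma (the passage to $\lambda$ and the $|\lambda_n-\lambda|$ term happens in Proposition~\ref{prop:gausseq} when comparing with $I(\bX;\bY)$ via Lemma~\ref{lem:gaussentro}). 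I do not expect to need the Lindeberg swap yet at this stage — this lemma is purely a Taylor-approximation of the SBM log-partition function by the quadratic one with the \emph{same} noise variables $\btG$; the Lindeberg replacement of $\btG$ by the Gaussian $\bZ\sqrt{\lambda_n/n}$ is the separate remaining ingredient.
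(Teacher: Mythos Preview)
Your overall strategy matches the paper's: start from Lemma~\ref{lem:sbmentro}, Taylor-expand $\sbmH$, identify the leading terms with $\cH(\bx,\bX,\btG,\lambda_n,n)-\tfrac{(n-1)\lambda_n}{2}$, and then pass the remainder through $|\log\sum e^{a_\bx}-\log\sum e^{b_\bx}|\le \max_\bx|a_\bx-b_\bx|$. One minor point: the $\tfrac{\lambda_n}{n}x_ix_jX_iX_j$ piece is not a second-order Taylor term; it arises at \emph{first} order once you rewrite $\tfrac{\Delta_n(G_{ij}-\op_n)}{\op_n(1-\op_n)}=\tG_{ij}+\tfrac{\lambda_n}{n}X_iX_j$ using the definition of $\tG_{ij}$. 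The genuine second-order term vanishes because $(x_ix_j)^2=(X_iX_j)^2=1$. This does not affect the formula, only the narration.

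There is, however, a real gap in your remainder estimate. The per-pair Taylor tail is \emph{not} uniformly $O(\Delta_n^3/(\op_n(1-\op_n))^2)$: when $G_{ij}=1$ the remainder of $\log(\op_n+\Delta_n\,\cdot\,)$ is of order $\Delta_n^3/\op_n^3$, and when $G_{ij}=0$ it is $\Delta_n^3/(1-\op_n)^3$. In the sparse regime $\op_n\to 0$, a purely deterministic sum over $\binom{n}{2}$ pairs therefore gives at best
\[
C\,\Delta_n^3\,\frac{n^2}{\op_n^3}\;\asymp\; \frac{n^{1/2}\lambda_n^{3/2}}{\op_n^{3/2}},
\]
which is larger than the target $n^{1/2}\lambda^{3/2}/(\op_n(1-\op_n))^{1/2}$ by a factor $1/\op_n\to\infty$. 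Your assertion that ``no concentration is needed'' is thus incorrect: the factor of $\op_n$ you are missing is exactly what the paper recovers by invoking Bernstein's inequality (Remark~\ref{rem:berns}) to show $\sup_{\bx}\lvert\langle\bx,\bG\bx\rangle\rvert\le Cn^2\op_n$ with probability $1-e^{-cn^2\op_n}$. This converts the effective number of terms carrying the $\Delta_n^3/\op_n^3$ cost from $\binom{n}{2}$ down to $O(n^2\op_n)$, yielding the stated $O\big(n\lambda^{3/2}/\sqrt{n\op_n(1-\op_n)}\big)$; the complementary exponentially small event is absorbed since the worst-case error is only polynomial. So your plan goes through, but you must insert this concentration step.
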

  \begin{proof}
    We concentrate on the log-partition function for the hamiltonian
    $\sbmH(\bx, \bX, \bG, n)$. First, using the fact that $\log (c + dx) = \frac{1}{2}\log((c+d)(c-d))
    + \frac{x}{2}\log((c+d)/(c-d))$ when $x\in \{\pm 1\}$:
    \begin{align}\label{eq:sbmapprox1}
      \sbmH(\bx, \bX, \bG, n) &= (x_ix_j -X_iX_j)%
      \left( \frac{G_{ij}}{2} \log\left( \frac{1+\Delta_n/\op_n}{1-\Delta_n/\op_n}\right) %
      + \frac{(1-G_{ij})}{2} \log\left( \frac{1-\Delta_n/(1-\op_n)}{1+
                                \Delta_n/(1-\op_n)} \right) \right).
    \end{align}
    Now when $\max(\Delta_n/\op_n, \Delta_n/(1-\op_n))\le c_0$, for small enough $c_0$, we have
    by Taylor expansion the following approximation for $z\in [0, c_0]$:
    \begin{align}
      \abs{\frac{1}{2}\log\left( \frac{1+z}{1-z} \right) - z}
      &\le z^3,
    \end{align}
    which implies, by triangle inequality:
    \begin{align}\label{eq:sbmapprox2}
      \sbmH(\bx, \bX, \bG, n) &=\sum_{i<j}(x_ix_j - X_iX_j) 
      \left( \frac{\Delta_n G_{ij}}{\op_n} -
                                \frac{\Delta_n(1-G_{ij})}{1-\op_n}
                                \right) + \text{err}_n\, ,
\end{align}
where
\begin{align}
 \abs{\text{err}_n} &\le C\Delta_n^3 \left( \frac{\abs{\<\bx, \bG\bx\>} +\abs{\<\bX, \bG\bX\>} }{\op_n^3} %
      + \frac{\abs{\<\bx, (\one\one^\sT-\bG)\bx\>}+ \abs{\<\bX,
                      (\one\one^\sT-\bG)\bX\>}}{(1-\op_n)^3}\right).\label{eq:errndef}
    \end{align}

    We first simplify the RHS in \myeqref{eq:sbmapprox2}. Recalling the definition
    of $\tG_{ij}$: 
    \begin{align}
      \sum_{i<j} (x_ix_j - X_iX_j)\left( \frac{\Delta_n G_{ij}}{\op_n} - \frac{\Delta_n(1-G_{ij})}{(1-\op_n)} \right) 
      &= \sum_{i<j}(x_ix_j -X_iX_j) \left( \tG_{ij}+ \frac{ \Delta^2_{n} X_iX_j}{\op_n(1-\op_n)} \right) \\
      &= -\frac{(n-1)\lambda_n}{2} + \cH(\bx, \bX, \btG, \lambda_n, n). 
    \end{align}
    This implies that:
    \begin{align}
      \sbmH(\bx, \bX, \bG, n) &= -\frac{(n-1)\lambda_n}{2} + \cH(\bx, \bX, \btG, \lambda_n, n) + \text{err}_n, \label{eq:RHSsimple}
    \end{align}
    where $\text{err}_n$ satisfies \myeqref{eq:errndef}. 
    We now use the following remark, which is a simple
    application of Bernstein inequality (the proof is deferred
    to Appendix \ref{app:addproofs}).
    \begin{remark}\label{rem:berns}
    There exists a constant $C$ such that for every $n$ large enough:
      \begin{align}
	\P\left\{ \sup_{\bx\in\hcube} \abs{\<\bx, \bG\bx\>} \ge C n^2
        \op_n \right\} &\le \exp(-n^2\op_n/2)/2\, , \\
	\P\left\{ \sup_{\bx\in\hcube} \abs{\<\bx, (\one\one^\sT -
        \bG)\bx\>} \ge C n^2(1- \op_n) \right\} &\le
                                                  \exp(-n^2(1-\op_n)/2)/2\,
                                                  .
    \end{align}
    \end{remark}

    Using this Remark, the error bound \myeqref{eq:errndef} and
    \myeqref{eq:RHSsimple} in Lemma \ref{lem:sbmentro} yields 
    \begin{align}
      I(\bX;\bG)&= n\log2 + \frac{(n-1)\lambda_n}{2} - \E_{\bX, \bG}\left\{ \phi(\bX, \btG, \lambda_n, n) \right\}
      + O\left(\Delta_n^3\bigg( \frac{n^2}{\op_n^2} + \frac{n^2}{(1-\op_n)^2} \bigg)\right) \\
      &\quad+ O\left(\Delta_n^3\bigg( \frac{n^2 \exp(-n^2\op_n)}{\op_n^3} + \frac{n^2 \exp(-n^2(1-\op_n))}{(1-\op_n)^3}\bigg) \right)\nonumber\\
      &= n\log2 + \frac{(n-1)\lambda_n}{2} - \E_{X, G}\left\{ \phi(\bX, \btG, \lambda_n, n) \right\} 
      + O\left(\frac{n^2\Delta_n^3}{\op_n^2(1-\op_n)^2}\right).
  \end{align}
  Substituting $\Delta_n = (\lambda_n\op_n(1-\op_n)/n)^{1/2}$ gives the lemma.
  \end{proof}

  We now control the deviations that occur when replacing the variables $\tG_{ij}$ with Gaussian variables
  $Z_{ij}\sqrt{\lambda/n}$.
\begin{lemma}\label{lem:approx}
Assume that, as $n\to\infty$, $(i)$ $\lambda_n\to\lambda$ and $(ii)$
$n\op_n(1-\op_n)\to\infty$. Then we have:
    \begin{align}
      \E_{\bX, \btG} \left\{ \phi(\bX, \btG, \lambda_n, n) \right\} &= 
      \E_{\bX, \bZ}\left\{ \phi(\bX, \bZ\sqrt{\lambda/n}, \lambda, n)
                                                                      \right\} + O\left( \frac{n\lambda^{3/2}}{\sqrt{n\op_n(1-\op_n)}} + n\abs{\lambda_n-\lambda} \right).
    \end{align}
\end{lemma}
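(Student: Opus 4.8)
The plan is to prove Lemma~\ref{lem:approx} via the Lindeberg replacement (swapping) method: we interpolate between the law of $\btG$ and the Gaussian law $\bZ\sqrt{\lambda/n}$ one coordinate at a time, bounding the change in $\E\phi$ at each swap by a third-order Taylor remainder. Concretely, order the pairs $\{i<j\}$ arbitrarily as $e_1,\dots,e_m$ with $m=\binom n2$, and for $k=0,\dots,m$ let $\bW^{(k)}$ be the vector whose first $k$ coordinates are the Gaussian variables $Z_{e}\sqrt{\lambda/n}$ and whose remaining coordinates are the $\tG_e$'s; then $\E\phi(\bX,\btG,\lambda_n,n)-\E\phi(\bX,\bZ\sqrt{\lambda/n},\lambda,n)$ telescopes into $\sum_{k=1}^m\big(\E\phi(\bX,\bW^{(k-1)},\cdot)-\E\phi(\bX,\bW^{(k)},\cdot)\big)$, where within each difference only the single coordinate $e_k$ is being replaced. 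For each such replacement I would condition on all other coordinates (call the fixed part $\bw$), write $\phi$ as a function $g$ of the scalar variable in slot $e_k$, and Taylor-expand $g$ to second order around $0$. The first two moments of $\tG_{e_k}$ and of $Z_{e_k}\sqrt{\lambda/n}$ are matched up to $O(n^{-3/2})$ per coordinate (this is exactly the first/second-moment matching noted after Eq.~\eqref{eq:gaussobs}, together with the fact that $\lambda_n\to\lambda$ contributes the $n|\lambda_n-\lambda|$ term once summed over the $\Theta(n^2)$ coordinates divided by... — more precisely the mean-matching is exact up to the $\lambda$ vs $\lambda_n$ discrepancy, which I track separately), so those terms cancel up to the stated error, and what remains is the third-order remainder.

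The key technical input is therefore a uniform bound on the third derivative of $g$, i.e. on $\partial^3_{v_{e}}\phi(\bX,\bw,\lambda,n)$ where $v_e$ is the coordinate in slot $e=\{i,j\}$. Since $\phi=\log\sum_{\bx\in\hcube}\exp(\cH)$ is a log-partition function and $\cH$ is \emph{linear} in each $v_e$ with coefficient $(x_ix_j-X_iX_j)\in\{-2,0,2\}$, the derivatives of $\phi$ in $v_e$ are the cumulants of the bounded random variable $(x_ix_j-X_iX_j)$ under the tilted (Gibbs) measure; hence $|\partial^3_{v_e}\phi|\le C$ uniformly, with an absolute constant. This is the crucial point that makes the Lindeberg scheme work here and is much cleaner than in generic settings. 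Multiplying the per-coordinate third-moment bound $\E|\tG_e|^3 = O\big((\op_n(1-\op_n))^{-1/2}(\lambda/n)^{3/2}\cdot \op_n(1-\op_n)^{1/2}\cdot\ldots\big)$ — which after the substitution $\Delta_n=(\lambda_n\op_n(1-\op_n)/n)^{1/2}$ works out to $O\big(\lambda^{3/2} n^{-3/2}(\op_n(1-\op_n))^{-1/2}\big)$ — and similarly $\E|Z_e\sqrt{\lambda/n}|^3=O(\lambda^{3/2}n^{-3/2})$, by the number of coordinates $m=\Theta(n^2)$ gives the total error $O\big(n^{1/2}\lambda^{3/2}(\op_n(1-\op_n))^{-1/2}\big)=O\big(n\lambda^{3/2}/\sqrt{n\op_n(1-\op_n)}\big)$, as claimed. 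The $n|\lambda_n-\lambda|$ term arises separately: replacing $\lambda_n$ by $\lambda$ in the $\frac\lambda n x_ix_jX_iX_j$ part of the Hamiltonian changes $\phi$ by at most $\frac{|\lambda_n-\lambda|}{n}\sum_{i<j}|x_ix_jX_iX_j|\le \frac{n|\lambda_n-\lambda|}{2}$ uniformly over $\bx$, hence changes $\E\phi$ by $O(n|\lambda_n-\lambda|)$; I would handle this as a separate first step, reducing to comparing $\phi(\bX,\btG,\lambda,n)$ with $\phi(\bX,\bZ\sqrt{\lambda/n},\lambda,n)$.

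One subtlety I would be careful about: the Taylor expansion of $g$ in $v_e$ about $0$ requires controlling $g$ on the whole range of $\tG_e$, which for the binary model is bounded, $|\tG_e|\le \Delta_n/(\op_n(1-\op_n))\le C(n\op_n(1-\op_n))^{-1/2}\to 0$ under assumption $(ii)$, and for the Gaussian part is unbounded but has Gaussian tails; since the third derivative bound above is \emph{global} (an absolute constant, valid for all real values of the coordinate), the remainder form of Taylor's theorem applies without any truncation argument, and the Gaussian case needs only a finite third absolute moment. This is where assumption $(ii)$ $n\op_n(1-\op_n)\to\infty$ is used — it guarantees $\max(\Delta_n/\op_n,\Delta_n/(1-\op_n))$ is small so the objects are well-defined and the earlier Taylor step in Lemma~\ref{lem:sbmapproxrel} was legitimate; here it also ensures the error term genuinely vanishes. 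I expect the main obstacle to be purely bookkeeping: carefully organizing the three-term (mean, variance, third-moment) matching so that the contributions that should cancel do cancel exactly, and separately that the $\lambda$ vs. $\lambda_n$ discrepancy is isolated cleanly rather than entangled with the Gaussian comparison; the analytic content (the absolute bound on the third cumulant of a bounded spin observable) is elementary once the log-partition-function structure is exploited.
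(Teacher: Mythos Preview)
Your proposal is correct and follows essentially the same approach as the paper: Lindeberg swapping one coordinate at a time, with the key analytic input being the uniform $O(1)$ bound on $\partial_{ij}^r\phi$ for $r\le 3$ obtained from the Gibbs-cumulant representation of the log-partition function, combined with the moment-matching computation for $\tG_{ij}$ versus $Z_{ij}\sqrt{\lambda/n}$, and a separate easy step to replace $\lambda_n$ by $\lambda$. The paper packages the swap via Theorem~2 of \cite{korada2011applications} after conditioning on $\bX$, and handles the $\lambda_n\to\lambda$ step last (via $|\partial_\lambda\E\phi|\le n/2$) rather than first, but these are organizational rather than substantive differences.
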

\begin{proof}
  This proof follows the Lindeberg strategy \cite{chatterjee2006generalization,korada2011applications}.
  We will show that:
  \begin{align}\label{eq:conditionalclaim}
    \E\Big\{ \phi(\bX, \btG, \lambda_n, n) \Big|\bX \Big\} &= 
    \E\Big\{ \phi(\bX, \bZ\sqrt{\lambda/n}, \lambda, n) \Big|\bX \Big\} 
    + O\left( \frac{n\lambda^{3/2}}{\sqrt{n\op_n(1-\op_n)}} +
                                                             n\abs{\lambda_n-\lambda} \right). 
  \end{align}
(with the $O(\cdots)$ term uniform in $\bX$).
  The claim then follows by taking expectations on both sides. 
  Note that, by construction:
  \begin{align}
    \E\big\{ \tG_{ij}\big|\bX \big\} &= \E \big\{
                                     Z_{ij}\sqrt{\lambda_n/n}\big|\bX
                                     \big\} = 0 \, ,\\
    \abs{ \E\big\{ \tG^2_{ij} \big|\bX \big\} - \E\big\{ Z_{ij}^2(\lambda_n/n) \big|X \big\}} 
    &= \abs{\frac{\Delta_n^2}{\op_n^2(1-\op_n)^2}(\op_n + \Delta_n
      X_iX_j)(1-\op_n -\Delta_nX_iX_j) - \frac{\lambda_n}{n}} \\
    &\le \frac{\lambda_n}{n}\left(
      \sqrt{\frac{\lambda_n}{n\op_n(1-\op_n)}}+\frac{\lambda_n}{n}\right)
      \, ,
\end{align}
and
\begin{align}
    \abs{\E\big\{ \tG_{ij}^3 \big|\bX \big\}} &=
                                                \bigg\lvert\frac{\Delta_n^3}{\op_n^3(1-\op_n)^3}\Big( (\op_n+\Delta_nX_iX_j) (1-\op_n-\Delta_nX_iX_j)^3
    \\
&\phantom{aaa}
  + (1-\op_n-\Delta_nX_iX_j)(-\op_n-\Delta_nX_iX_j)^3
  \Big)\bigg\rvert \nonumber\\
&\le
  \bigg\lvert\frac{\Delta_n^3}{\op_n^3(1-\op_n)^3}(\op_n+\Delta_nX_iX_j)
  (1-\op_n-\Delta_nX_iX_j)\bigg\rvert\\
&\le \frac{\Delta^3}{\op_n^2(1-\op_n)^2}+\frac{\Delta_n^4}{\op_n^3(1-\op_n)^3}+\frac{\Delta_n^5}{\op_n^3(1-\op_n)^3}\\
    &\le\frac{3\lambda_n^{3/2}}{n(n\op_n(1-\op_n))^{1/2}}\, .
  \end{align}
  We now derive the following estimates:
  \begin{align}
    \abs{\partial_{ij}^r \phi(\bX, \bz, \lambda_n, n)} &\le C, \text{ for } r = 1, 2, 3.
  \end{align}
  Here $\partial_{ij}^r \phi(\bX, \bz, \lambda_n, n)$ is the $r$-fold derivative of $\phi(\bX, \bz, \lambda_n, n)$ in 
  the entry $z_{ij}$ of the matrix $\bz$. To write explicitly
  the derivatives $\partial_{ij}\phi(\bX, \bz, \lambda_n, n)$ we introduce some notation. 
  For a function $f:\hcube\to\reals$, we write
  $\<f\>_\bz$ to denote its expectation with respect to the measure defined by the hamiltonian 
  $\cH(\bx, \bX, \bz, \lambda_n, n)$. Explicitly:
  \begin{align}
    \<f\>_\bz &\equiv e^{-\phi(\bX, \bz, \lambda,
                n)}\; \sum_{\bx\in\hcube} f(\bx) \exp\big(\cH(\bx, \bX, \bz,
                \lambda, n)\big) \, . 
  \end{align}
  Then the partial derivatives above can be expressed as
  \begin{align}
    \partial_{ij}\phi(\bX, \bz, \lambda, n) &=  \< x_ix_j - X_iX_j\>_z \\
    \partial^2_{ij}\phi(\bX, \bz, \lambda, n) &= \left( \<(x_i x_j - X_iX_j)^2\>_z - \<x_i x_j - X_iX_j\>_z^2 \right) \\
    \partial^3_{ij}\phi(\bX, \bz, \lambda, n) &= \big( \< (x_i x_j - X_iX_j)^3\>_z - 3 \<x_ix_j - X_iX_j\>_z\<(x_ix_j - X_iX_j)^2\>_z \\%
    &\quad+ 2 \<x_ix_j - X_iX_j\>_z^3  \big).
  \end{align}
  However since $\abs{x_ix_j - X_iX_j}\le 2$, we obtain:
  \begin{align}
    \abs{\partial^r_{ij}\phi(\bX, \bz, \lambda_n, n)} &\le C.
  \end{align}
  Applying Theorem 2 of \cite{korada2011applications} (stated below as
  Theorem \ref{thm:lindegen}) gives:
  \begin{align}
    \E\left\{ \phi(\bX, \btG, \lambda_n, n) \vert \bX \right\} &= 
    \E\left\{ \phi(\bX, \bZ\sqrt{\lambda/n}, \lambda_n, n) | \bX \right\} 
    +O\bigg(
                                                                 \frac{n\lambda_n^{3/2}}{\sqrt{n\op_n(1-\op_n)}}\bigg). \label{eq:linde1}
  \end{align}
  Further, we have:
  \begin{align}
    \big\lvert\partial_\lambda \E\left\{ \phi(\bX, \bZ\sqrt{\lambda'/n}), \lambda, n) \lvert \bX \right\}\big\rvert &=
    \frac{1}{n} \bigg\lvert\E\bigg\{ \<\sum_{i<j} x_ix_j X_iX_j\> \lvert \bX \bigg\}\bigg\rvert \\
    &\le \frac{n}{2}.
  \end{align}
  Here $\partial_{\lambda}$ denotes the derivative with respect to the variable $\lambda$.
  Thus,
  \begin{align}
    \E\left\{ \phi(\bX, \bZ\sqrt{\lambda/n}, \lambda_n, n) \lvert \bX \right\} 
    &= \E\left\{ \phi(\bX, \bZ\sqrt{\lambda/n}, \lambda, n) \lvert \bX \right\} 
    + O(n \abs{\lambda_n-\lambda}). \label{eq:linde2}
  \end{align}
  Combining Eqs. \eqref{eq:linde1}, \eqref{eq:linde2} gives 
  \myeqref{eq:conditionalclaim}, and the lemma follows by taking expectations on either
  side. 
\end{proof}
We state below the Lindeberg generalization theorem for
convenience:
\begin{thm}[Theorem 2 in \cite{korada2011applications}]\label{thm:lindegen}
  Suppose we are given two collections of random variables $(U_i)_{i\in [N]}$, $(V_{i})_{i\in[N]}$ with
  independent components and a function $f:\reals^N\to\reals$. Let $a_i = \abs{\E\{U_i\}- \E\{V_i\}}$
  and $b_i = \abs{\E\{U_i^2\} -\E\{V_i^2\}}$. Then:
  \begin{align}
    \abs{\E\left\{ f(U)  \right\} - \E\left\{ f(V) \right\}} &\le  
    \sum_{i=1}^N \bigg(  
    a_i \E\left\{ \abs{\partial_i f(U_1^{i-1}, 0, V_{i+1}^N)} \right\} 
    + \frac{b_i}{2} \E\left\{ \abs{\partial^2_if(U_1^{i-1}, 0, V_{i+1}^N)} \right\}  \nonumber\\
    &\quad+ \frac{1}{2} \E\int_{0}^{U_i}\big\lvert{\partial^3_i f(U_1^{i-1}, s, V_{i+1}^N)}\big\rvert(U_i - s)^2\d s \nonumber\\
    &\quad+ \frac{1}{2} \E\int_{0}^{V_i}\big\lvert{\partial^3_i f(U_1^{i-1}, s, V_{i+1}^N)}\big\rvert(V_i - s)^2\d s
    \bigg)\, .
  \end{align}
\end{thm}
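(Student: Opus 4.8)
The plan is to prove this by the classical Lindeberg replacement (swapping) argument: exchange the coordinates of $U$ for those of $V$ one at a time, Taylor-expand in the single coordinate that is being changed, and exploit the independence across coordinates so that only the first- and second-moment mismatches $a_i,b_i$ survive in the leading terms, with the rest absorbed into a third-order remainder.

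First I would introduce the hybrid (interpolating) vectors $W^{(i)} = (U_1,\dots,U_i,V_{i+1},\dots,V_N)$ for $i=0,1,\dots,N$, so that $W^{(0)}=V$ and $W^{(N)}=U$, and write the telescoping identity
\begin{align}
\E\{f(U)\}-\E\{f(V)\} = \sum_{i=1}^N \Big(\E\{f(W^{(i)})\}-\E\{f(W^{(i-1)})\}\Big)\, .
\end{align}
The $i$-th summand involves two vectors that differ only in coordinate $i$ (namely $U_i$ versus $V_i$), all other coordinates being $U_1,\dots,U_{i-1},V_{i+1},\dots,V_N$; setting $g_i(s) \equiv f(U_1^{i-1},s,V_{i+1}^N)$, this summand equals $\E\{g_i(U_i)\}-\E\{g_i(V_i)\}$.

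Next, for each $i$ I would apply Taylor's theorem with integral remainder to $g_i$ around $s=0$ up to third order, once at $s=U_i$ and once at $s=V_i$, and subtract; the constant terms cancel, leaving
\begin{align}
g_i(U_i)-g_i(V_i) &= (U_i-V_i)\, g_i'(0) + \tfrac12 (U_i^2-V_i^2)\, g_i''(0) \nonumber\\
&\quad + \tfrac12 \int_0^{U_i} g_i'''(s)(U_i-s)^2\, \d s - \tfrac12 \int_0^{V_i} g_i'''(s)(V_i-s)^2\, \d s\, .
\end{align}
Then I would take expectations. The crucial point is that $g_i'(0) = \partial_i f(U_1^{i-1},0,V_{i+1}^N)$ and $g_i''(0) = \partial_i^2 f(U_1^{i-1},0,V_{i+1}^N)$ depend only on $(U_1,\dots,U_{i-1},V_{i+1},\dots,V_N)$, hence are independent of the pair $(U_i,V_i)$; consequently $\E\{(U_i-V_i)g_i'(0)\} = (\E\{U_i\}-\E\{V_i\})\,\E\{g_i'(0)\}$, which is bounded in absolute value by $a_i\,\E\{|\partial_i f(U_1^{i-1},0,V_{i+1}^N)|\}$, and similarly the second term is at most $\tfrac{b_i}{2}\,\E\{|\partial_i^2 f(U_1^{i-1},0,V_{i+1}^N)|\}$ in absolute value. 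The two remainder terms are bounded directly by moving the absolute value inside the integral. Summing over $i$ and applying the triangle inequality to the telescoping sum then yields the claimed bound.

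The step I expect to require the most care — and which I regard as the real content — is the independence bookkeeping: one must check that everything multiplying $g_i'(0)$ and $g_i''(0)$ is measurable with respect to $(U_1,\dots,U_{i-1},V_{i+1},\dots,V_N)$ alone, since this is exactly what forces the moment \emph{gaps} $a_i,b_i$ (rather than absolute moments of $U_i$ and $V_i$ themselves) to appear in the first two terms. Secondary, routine points are the sign convention $\int_0^{U_i}(\cdot)\,\d s = -\int_{U_i}^0(\cdot)\,\d s$ when $U_i<0$, and the implicit finiteness of $\E\{|U_i|^3\}$, $\E\{|V_i|^3\}$ and of the relevant expectations of $|\partial_i^r f|$, which makes every term well-defined.
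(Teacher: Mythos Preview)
Your proof is correct and is precisely the standard Lindeberg swapping argument underlying this result. Note that the paper does not actually prove this theorem: it is merely quoted ``for convenience'' from \cite{korada2011applications}, so there is no paper proof to compare against; your hybrid-vector telescoping plus third-order Taylor expansion with integral remainder is exactly the argument given in that reference.
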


With these in hand, we can now prove Proposition \ref{prop:gausseq}.
\begin{proof}[Proof of Proposition \ref{prop:gausseq}]

The proposition follows simply by combining the formulae for
$I(\bX; \bG), I(\bX;\bY)$ in Lemmas \ref{lem:gaussentro}, \ref{lem:sbmapproxrel}
with the approximation guarantee of Lemma \ref{lem:approx}.
\end{proof}
%
%
\section{Proof of Proposition \ref{prop:singleletter}} 
\label{sec:singleletterproof}

Throughout this section we will write $\bY(\lambda)$ whenever we want
to emphasize the dependence of the law of $\bY$ on the signal to noise 
parameter $\lambda$.

The proof of Proposition \ref{prop:singleletter} follows essentially from
a few preliminary lemmas.

\subsection{Auxiliary lemmas}

We begin with some properties of the fixed
point equation (\ref{eq:EpsFixedPoint}). The proof of this lemma can
be found in Appendix \ref{sec:ProofUnique}.
\begin{lemma}\label{rem:unique}
For any $\eps\in [0,1]$, the following properties 
hold for the function $\gamma\mapsto (1-\mmse(\gamma,\eps)) =
  (1-(1-\eps)\mmse(\gamma))$:
\begin{enumerate}
\item[$(a)$]  It is continuous, monotone increasing and concave in
  $\gamma\in\reals_{\ge 0}$. 
\item[$(b)$] It satisfies the following limit behaviors
  \begin{align}
    1-\mmse(0,\eps) &= \eps \, ,\\
    \lim_{\gamma\to\infty}\big[1- \mmse(\gamma,\eps)\big] &= 1\, . 
  \end{align}
\end{enumerate}
As a consequence we have the following for all $\eps\in(0,1]$:
\begin{enumerate}
\item[$(c)$] A non-negative solution
$\gamma_*(\lambda,\eps)$ of Eq.~(\ref{eq:EpsFixedPoint}) exists and is
unique for all $\eps>0$.
\item[$(d)$] For any $\eps>0$, the function $\lambda\mapsto
  \gamma_*(\lambda,\eps)$ is differentiable in $\lambda$.
\item[$(e)$] Let $\{\gamma_t\}_{t\ge 0}$ be defined recursively by
Eq.~(\ref{eq:simplstateevol3}), with initialization $\gamma_0=0$. Then
$\lim_{t\to\infty}\gamma_t = \gamma_*(\lambda,\eps)$.
\end{enumerate}
\end{lemma}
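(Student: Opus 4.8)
\emph{Proof strategy.} Since $1-\mmse(\gamma,\eps)=\eps+(1-\eps)\,G(\gamma)$ with $G(\gamma):=1-\mmse(\gamma)=\E\{\tanh(\gamma+\sqrt\gamma\,Z)\}$, $Z\sim\normal(0,1)$ (the identity $\E\{\tanh(\gamma+\sqrt\gamma Z)\}=\E\{\tanh^2(\gamma+\sqrt\gamma Z)\}$ being the Nishimori identity, cf.\ (\ref{eq:Gdef})), all five claims reduce to structural properties of the single scalar function $G$, which I would establish essentially in the stated order.

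\emph{Parts $(a)$ and $(b)$.} Continuity of $G$ and the boundary values $G(0)=0$ and $\lim_{\gamma\to\infty}G(\gamma)=1$ (dominated convergence, using $\gamma+\sqrt\gamma Z\to+\infty$ almost surely) follow at once from the explicit formula (\ref{eq:mmseFormula}); plugging into $1-\mmse(\gamma,\eps)=\eps+(1-\eps)G(\gamma)$ yields the two limits of $(b)$. For monotonicity and --- the real content of $(a)$ --- strict concavity, the plan is to differentiate twice under the expectation. Writing $v(t,x):=\E\{\tanh(x+\sqrt t\,Z)\}$, which solves the heat equation $\partial_t v=\tfrac12\partial_x^2 v$, one has $G(\gamma)=v(\gamma,\gamma)$, so $G'(\gamma)=\big(\partial_x+\tfrac12\partial_x^2\big)v(\gamma,\gamma)$ and $G''(\gamma)=\big(\tfrac14\partial_x^4+\partial_x^3+\partial_x^2\big)v(\gamma,\gamma)$; evaluating the derivatives of $\tanh$, with $h:=\tanh(\gamma+\sqrt\gamma Z)$, collapses these to
\begin{align}
G'(\gamma)=\E\big\{(1-h)^2(1+h)\big\}>0,\qquad G''(\gamma)=2\,\E\big\{(1-h)^2(1+h)(3h^2-1)\big\}.
\end{align}
The $G''$ integrand is not sign-definite, so I would then symmetrize: expressing $\normal(\gamma,\gamma)$ as a standard Gaussian exponentially tilted by $e^{u}$ and pairing $u$ with $-u$, elementary hyperbolic identities reduce it to $G''(\gamma)=c(\gamma)\,I(\gamma)$ with $c(\gamma)>0$ and $I(\gamma):=\int_0^\infty\sech^3(u)\big(3\tanh^2u-1\big)e^{-u^2/(2\gamma)}\,\d u$. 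Finally, since $u\mapsto e^{-u^2/(2\gamma)}$ is positive, decreasing, and vanishes at infinity, one integration by parts gives $I(\gamma)=\tfrac1\gamma\int_0^\infty R(u)\,u\,e^{-u^2/(2\gamma)}\,\d u$ with $R(u):=\int_0^u\sech^3(s)(3\tanh^2s-1)\,\d s$; the substitution $v=\tanh s$ turns $R(u)$ into $\int_0^{\tanh u}\sqrt{1-v^2}(3v^2-1)\,\d v=:p(\tanh u)$, where $p(0)=0$, $p$ decreases on $(0,1/\sqrt3)$ and increases on $(1/\sqrt3,1)$, and $p(1)=3\!\int_0^1 v^2\sqrt{1-v^2}\,\d v-\!\int_0^1\sqrt{1-v^2}\,\d v=\tfrac{3\pi}{16}-\tfrac{\pi}{4}=-\tfrac{\pi}{16}<0$, so $R(u)<0$ for $u>0$, whence $I(\gamma)<0$ and $G''(\gamma)<0$ for all $\gamma>0$. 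This last chain --- passing from the sign-indefinite pointwise integrand to a genuinely negative integral by exploiting the Gaussian weight --- is the step I expect to be the main obstacle; everything else is bookkeeping. The function $1-\mmse(\gamma,\eps)$, being an affine nondecreasing image of $G$, then inherits continuity, monotonicity and concavity (strict concavity for $\eps<1$; for $\eps=1$ it is the constant $1$).

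\emph{Part $(c)$.} Set $F(\gamma):=\lambda\big(1-\mmse(\gamma,\eps)\big)-\gamma$. By $(b)$, $F(0)=\lambda\eps>0$, while $1-\mmse(\gamma,\eps)\le1$ forces $F(\gamma)\le\lambda-\gamma\to-\infty$; continuity and the intermediate value theorem then produce a zero of $F$ in $(0,\infty)$, and $0$ is not one since $\lambda\eps>0$. For uniqueness, note $F$ is strictly concave on $(0,\infty)$ by $(a)$ (the case $\eps=1$ is trivial, $F$ being affine with unique zero $\gamma=\lambda$): if there were two zeros $\gamma_1<\gamma_2$, then $F>0$ just to the right of $\gamma_1$, so $F(\gamma')>0$ for some $\gamma'\in(\gamma_1,\gamma_2)$; concavity would then force $F(\gamma_1)$ to lie at or above the chord through $(0,F(0))$ and $(\gamma',F(\gamma'))$ --- which is strictly positive at $\gamma_1$ --- contradicting $F(\gamma_1)=0$.

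\emph{Parts $(d)$ and $(e)$.} For $(d)$, apply the implicit function theorem to $F(\gamma,\lambda)=\lambda(1-\mmse(\gamma,\eps))-\gamma=0$ at $(\gamma_*(\lambda,\eps),\lambda)$; the only thing to verify is $\partial_\gamma F\ne0$ there, and if $\partial_\gamma F=0$ then $\gamma_*$ would be the global maximizer of the concave map $\gamma\mapsto F(\gamma,\lambda)$, giving $0=F(\gamma_*,\lambda)\ge F(0,\lambda)=\lambda\eps>0$, absurd; smoothness of $\mmse(\cdot)$ (a Gaussian integral in $\gamma$) then gives differentiability --- in fact real-analyticity --- of $\lambda\mapsto\gamma_*(\lambda,\eps)$. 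For $(e)$, let $T(\gamma):=\lambda\big(1-(1-\eps)\mmse(\gamma)\big)$, which by $(a)$--$(b)$ is continuous, nondecreasing, and bounded above by $\lambda$. Since $\gamma_1=T(0)=\lambda\eps\ge0=\gamma_0$, monotonicity of $T$ and induction give $\gamma_{t+1}=T(\gamma_t)\ge T(\gamma_{t-1})=\gamma_t$, so the sequence is nondecreasing and bounded, hence converges to some $\gamma_\infty\le\lambda$ with $\gamma_\infty=T(\gamma_\infty)$ by continuity; thus $\gamma_\infty$ is a non-negative solution of (\ref{eq:EpsFixedPoint}), and by the uniqueness in $(c)$, $\gamma_\infty=\gamma_*(\lambda,\eps)$.
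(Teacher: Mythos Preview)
Your proof is correct, and parts $(b)$--$(e)$ are handled essentially as one would expect (indeed more fully than the paper, which only writes out $(a)$ in detail and leaves the remaining parts as straightforward consequences). The substantive divergence is in how you establish concavity of $G$.

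The paper first shows $G'(\gamma)=\E\{\sech^4(\gamma+\sqrt\gamma Z)\}$ (via Stein's lemma and the Nishimori identity), then writes $G'(\gamma)=H(\sqrt\gamma,\sqrt\gamma)$ with $H(x,y)=\E\{\psi(x(Z+y))\}$, $\psi=\sech^4$, and proves $\partial_x H\le 0$ and $\partial_y H\le 0$ on $\{x,y\ge 0\}$ using only that $\psi$ is even with $z\psi'(z)\le 0$; this gives $G''\le 0$ without ever computing $G''$ explicitly. Your route instead computes $G''$ directly through the heat-equation structure, then uses the exponential-tilt symmetrization to reduce to the one-dimensional integral $I(\gamma)$, and finishes with an integration-by-parts plus the explicit evaluation $p(1)=-\pi/16$. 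Both arguments are valid; the paper's is more structural --- it would apply verbatim with $\sech^4$ replaced by any smooth even function decreasing on $[0,\infty)$ --- while yours is more hands-on and gives strict concavity directly from $R(u)<0$. The trade-off is that your proof depends on the specific numerical inequality $3\int_0^1 v^2\sqrt{1-v^2}\,\d v<\int_0^1\sqrt{1-v^2}\,\d v$, whereas the paper's argument avoids any such endpoint computation.
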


We then compute the value of $\Psi(\gamma_*(\lambda, \eps), \lambda, \eps)$
at $\lambda = \infty$ and $\lambda = 0$.
\begin{lemma}\label{lem:limpoints}
  For any $\eps>0$:
  \begin{align}
    \lim_{\lambda\to 0} \Psi(\gamma_*(\lambda, \eps), \lambda, \eps)
    &= \eps \log 2 \, ,\label{eq:PsiLamblda0}\\
    \lim_{\lambda\to\infty} \Psi(\gamma_*(\lambda, \eps), \lambda,
    \eps)&= \log 2\, .  \label{eq:PsiLambldaInfty}
  \end{align}
\end{lemma}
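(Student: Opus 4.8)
The plan is to evaluate the two limits of $\Psi(\gamma_*(\lambda,\eps),\lambda,\eps)$ by first pinning down the limiting behavior of the effective SNR $\gamma_*(\lambda,\eps)$ in each regime, and then plugging into the explicit formula \eqref{eq:PsiEps}. Recall from Lemma \ref{rem:unique} that for $\eps>0$ the fixed point $\gamma_*(\lambda,\eps)$ is the unique non-negative solution of $\gamma = \lambda(1-(1-\eps)\mmse(\gamma))$, that $G_\eps(\gamma) := 1-(1-\eps)\mmse(\gamma)$ is continuous, increasing and concave with $G_\eps(0)=\eps$ and $G_\eps(\infty)=1$.

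\medskip

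\noindent\textbf{The limit $\lambda\to 0$.} First I would show $\gamma_*(\lambda,\eps)\to 0$ as $\lambda\to 0$. Indeed $0 \le \gamma_*(\lambda,\eps) = \lambda\,G_\eps(\gamma_*(\lambda,\eps)) \le \lambda$ since $G_\eps(\gamma)\le 1$, so $\gamma_*(\lambda,\eps)\to 0$. Then in \eqref{eq:PsiEps},
\begin{align}
\Psi(\gamma_*(\lambda,\eps),\lambda,\eps) = \frac{\lambda}{4} + \frac{\gamma_*(\lambda,\eps)^2}{4\lambda} - \frac{\gamma_*(\lambda,\eps)}{2} + \eps\log 2 + (1-\eps)\Info(\gamma_*(\lambda,\eps)).
\end{align}
As $\lambda\to 0$: the first term vanishes; $\gamma_*/2\to 0$; $\Info(\gamma_*)\to\Info(0)=0$ by continuity of $\Info$ (see \eqref{eq:InfoFormula}); and the term $\gamma_*(\lambda,\eps)^2/(4\lambda) \le \lambda^2/(4\lambda) = \lambda/4 \to 0$ using the bound $\gamma_*\le\lambda$. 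Hence $\Psi \to \eps\log 2$, which is \eqref{eq:PsiLamblda0}.

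\medskip

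\noindent\textbf{The limit $\lambda\to\infty$.} Here I would first argue $\gamma_*(\lambda,\eps)\to\infty$ as $\lambda\to\infty$. Since $\gamma_*(\lambda,\eps) = \lambda\,G_\eps(\gamma_*(\lambda,\eps)) \ge \lambda\,G_\eps(0) = \lambda\eps$ (using monotonicity of $G_\eps$), we get $\gamma_*(\lambda,\eps)\ge\lambda\eps\to\infty$. I also need the ratio $\gamma_*(\lambda,\eps)/\lambda = G_\eps(\gamma_*(\lambda,\eps)) \to 1$, which follows from $\gamma_*\to\infty$ and $G_\eps(\infty)=1$. Now examine \eqref{eq:PsiEps}: write $r = \gamma_*(\lambda,\eps)/\lambda \in(0,1]$, so the first three terms combine to $\frac{\lambda}{4}(1 - r)^2 \cdot\ldots$ — more precisely $\frac{\lambda}{4} + \frac{\gamma_*^2}{4\lambda} - \frac{\gamma_*}{2} = \frac{\lambda}{4}(1-r)^2$. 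Since $1-r = 1-G_\eps(\gamma_*) = (1-\eps)\mmse(\gamma_*)$, and $\mmse(\gamma)$ decays exponentially in $\gamma$ (from \eqref{eq:mmseFormula}, $\mmse(\gamma)=1-\E\tanh(\gamma+\sqrt\gamma Z_0)^2 = O(e^{-\gamma/2})$ or similar Gaussian tail estimate), we have $\lambda(1-r)^2 = \lambda\,(1-\eps)^2\mmse(\gamma_*)^2 \le \lambda\,\mmse(\lambda\eps)^2 \to 0$. Finally $(1-\eps)\Info(\gamma_*)\to(1-\eps)\log 2$ since $\Info(\gamma)\to\log 2$ as $\gamma\to\infty$ (the scalar channel becomes noiseless; this follows from \eqref{eq:InfoFormula} or from $\Info = \log 2 - H(X_0|Y_0)$ with $H(X_0|Y_0)\to 0$). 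Adding $\eps\log 2$ gives $\Psi\to\log 2$, which is \eqref{eq:PsiLambldaInfty}.

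\medskip

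\noindent\textbf{Main obstacle.} The routine analytic facts (continuity of $\Info$ and $\mmse$ at $0$, their limits at $\infty$) are standard, so the only genuinely quantitative point is controlling $\frac{\lambda}{4}(1-r)^2$ in the $\lambda\to\infty$ case: one must verify that $\mmse(\gamma)$ decays fast enough that $\lambda\,\mmse(c\lambda)^2 \to 0$ for the relevant constant $c=\eps$. This is where I would be slightly careful — a crude polynomial bound would not suffice, but the Gaussian tail of $1-\tanh^2(\gamma+\sqrt\gamma Z_0)$ gives at least $\mmse(\gamma) = O(e^{-\gamma/4})$ (or any exponential rate), which makes $\lambda\,\mmse(\lambda\eps)^2$ vanish super-polynomially. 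Everything else is direct substitution into \eqref{eq:PsiEps} and taking limits term by term.
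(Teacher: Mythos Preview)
Your argument is correct and follows essentially the same path as the paper for the $\lambda\to 0$ limit: bound $0\le\gamma_*\le\lambda$ and let each term of \eqref{eq:PsiEps} tend to its limit.

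For $\lambda\to\infty$ the two arguments differ only in how the quadratic remainder $\frac{\lambda}{4}+\frac{\gamma_*^2}{4\lambda}-\frac{\gamma_*}{2}=\frac{(\lambda-\gamma_*)^2}{4\lambda}$ is controlled. The paper uses the linear-estimator bound $\mmse(\gamma)\le 1/(1+\gamma)$ to obtain an explicit lower bound $\gamma_{\rm LB}(\lambda,\eps)=\frac12\big[\lambda-1+\sqrt{(\lambda-1)^2+4\lambda\eps}\big]=\lambda-(1-\eps)+O(\lambda^{-1})$, hence $\gamma_*-\lambda=O(1)$ and the quadratic term is $O(\lambda^{-1})$. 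You instead bound $\gamma_*\ge\lambda\eps$ and invoke decay of $\mmse(\gamma_*)$. Your worry in the ``main obstacle'' paragraph is unfounded: a polynomial bound \emph{does} suffice, since with $\mmse(\gamma)\le 1/(1+\gamma)$ one gets
\[
\frac{\lambda}{4}(1-r)^2=\frac{\lambda}{4}(1-\eps)^2\,\mmse(\gamma_*)^2\le \frac{\lambda}{4(1+\lambda\eps)^2}=O(\lambda^{-1}),
\]
so no exponential estimate on $\mmse$ is needed. The paper's route via $\gamma_{\rm LB}$ is marginally cleaner because it avoids any separate decay estimate on $\mmse$, but the two arguments are equivalent in substance.
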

\begin{proof}
Recall the definition of $\mmse(\gamma)$,
cf. Eq.~(\ref{eq:mmseDef}). Upper bounding $\mmse(\gamma)$ by the
minimum error obtained by linear estimator yields, for any $\gamma\ge
0$, $0\le \mmse(\gamma) \le 1/(1+\gamma)$.
Substituting these bounds in Eq.~(\ref{eq:EpsFixedPoint}), we obtain
\begin{align}
\max(0,\gamma_{\rm LB}(\lambda,\eps))&\le\gamma_*(\lambda,\eps)\le  \lambda\,
                               ,\label{eq:BoundGamma}\\
\gamma_{\rm LB}(\lambda,\eps)& =
                               \frac{1}{2}\Big[\lambda-1+\sqrt{(\lambda-1)^2+4\lambda\eps}\Big]
                               = \lambda-(1-\eps)+O(\lambda^{-1})\, ,
\end{align}
where the last expansion holds as $\lambda\to\infty$.

Let us now consider the limit $\lambda\to 0$, cf. claim
(\ref{eq:PsiLamblda0}). Considering Eq.~(\ref{eq:PsiEps}), and using
$0\le \gamma_*(\lambda,\eps)\le \lambda$, we have
$(\lambda/4)+(\gamma_*(\lambda,\eps)^2/(4\lambda))-(\gamma/2) =
O(\lambda)\to 0$. Further from the definition (\ref{eq:InfoDef}) it
follows\footnote{This follows either by general information theoretic
  arguments, or else using dominated convergence in
  Eq.~(\ref{eq:InfoFormula}).} that $\lim_{\gamma\to 0}\Info(\gamma) =
0$ thus yielding Eq.~(\ref{eq:PsiLamblda0}).

Consider next the $\lambda\to\infty$ limit
of Eq.~(\ref{eq:PsiLambldaInfty}). In this limit
Eq.~(\ref{eq:BoundGamma})
implies $\gamma_*(\lambda,\eps) = \lambda +\delta_*\to \infty$, where
$\delta_*= \delta_*(\lambda,\eps) = O(1)$. Hence
$\lim_{\lambda\to\infty}\Info(\gamma_*(\lambda,\eps))
=\lim_{\gamma\to\infty}\Info(\gamma) =\log 2$ (this follows again from
the definition of $\Info(\gamma)$). Further
\begin{align}
\frac{\lambda}{4}+\frac{\gamma_*^2}{4\lambda}-\frac{\gamma_*}{2} =
  \frac{\delta_*^2}{4\lambda}= O(\lambda^{-1}) \, .
\end{align}
Substituting in Eq.~(\ref{eq:PsiEps}) we obtain the desired claim.
\end{proof}

The next lemma characterizes the limiting matrix mean squared
error of the AMP estimates. 
\begin{lemma} \label{lem:fixedpt}
Let $\{\gamma_t\}_{t\ge 0}$ be defined recursively by
Eq.~(\ref{eq:simplstateevol3}) with initialization $\gamma_0 = 0$,
and recall that $\gamma_*(\lambda,\eps)$ denotes the unique
non-negative solution of \myeqref{eq:EpsFixedPoint}.

Then the following limits hold for the AMP mean square error
\begin{align}
\MSEAMP(t;\lambda, \eps) &\equiv \lim_{n\to\infty}\MSEAMP(t;\lambda, \eps, n)
= 1-\frac{\gamma_t^2}{\lambda^2}\, ,\label{eq:MSEAMP_t}\\
\MSEAMP(\lambda, \eps) &\equiv \lim_{t\to\infty}\MSEAMP(t;\lambda, \eps)
= 1-\frac{\gamma_*^2}{\lambda^2}\, .\label{eq:MSEAMP_tinfty}
\end{align}
\end{lemma}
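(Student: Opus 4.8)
The plan is to reduce the matrix mean square error to scalar quantities controlled by state evolution and then to simplify using Bayes-optimality. First I would expand the Frobenius norm: since $\bX\in\hcube$ we have $\|\bX\bX^\sT\|_F^2 = n^2$, while for any $\bv\in\reals^n$, $\Tr(\bX\bX^\sT\bv\bv^\sT) = \langle\bX,\bv\rangle^2$ and $\|\bv\bv^\sT\|_F^2 = \|\bv\|_2^4$. Taking $\bv = \hbx^t = f_{t-1}(\bx^{t-1},\bX(\eps))$ yields the identity
\begin{align}
\frac{1}{n^2}\big\|\bX\bX^\sT-\hbx^t(\hbx^t)^\sT\big\|_F^2 = 1 - 2\Big(\frac{\langle\bX,\hbx^t\rangle}{n}\Big)^{2} + \Big(\frac{\|\hbx^t\|_2^2}{n}\Big)^{2}\, ,
\end{align}
so it suffices to track the overlap $\frac{1}{n}\langle\bX,\hbx^t\rangle = \frac{1}{n}\sum_i X_i f_{t-1}(x^{t-1}_i,X(\eps)_i)$ and the squared norm $\frac{1}{n}\|\hbx^t\|_2^2 = \frac{1}{n}\sum_i f_{t-1}(x^{t-1}_i,X(\eps)_i)^2$.

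I would then apply the state evolution Lemma \ref{lem:stateevollem} to the iterate $\bx^{t-1}$ with the test functions $\psi(x,X,s) = X f_{t-1}(x,s)$ and $\psi(x,X,s) = f_{t-1}(x,s)^2$. Both are pseudo-Lipschitz in their first argument, since $f_{t-1}$ is Lipschitz and bounded (a conditional expectation of a $\pm1$ variable), so that $|f_{t-1}(x_1,s)^2-f_{t-1}(x_2,s)^2|\le(|f_{t-1}(x_1,s)|+|f_{t-1}(x_2,s)|)\,L\,|x_1-x_2|\le C(1+|x_1|+|x_2|)|x_1-x_2|$. The lemma gives, almost surely,
\begin{align}
\frac{1}{n}\langle\bX,\hbx^t\rangle &\to \E\big\{X_0\,f_{t-1}(\mu_{t-1}X_0+\sigma_{t-1}Z_0,X_0(\eps))\big\}\, ,\\
\frac{1}{n}\|\hbx^t\|_2^2 &\to \E\big\{f_{t-1}(\mu_{t-1}X_0+\sigma_{t-1}Z_0,X_0(\eps))^2\big\}\, .
\end{align}
Because $f_{t-1}$ is the Bayes estimator in the scalar channel, cf.\ Eq.~(\ref{eq:optfunc}), the tower property makes these two limits equal, and by Eq.~(\ref{eq:stateevol2}) their common value is $\sigma_t^2$ (equivalently $\mu_t=\sqrt{\lambda}\,\sigma_t^2$, Eq.~(\ref{eq:simplstateevol1})). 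Writing $\gamma_t = \lambda\sigma_t^2$, so that $\gamma_t$ obeys the one-dimensional recursion (\ref{eq:simplstateevol3}) with $\gamma_0 = 0$, both quantities converge a.s.\ to $\sigma_t^2 = \gamma_t/\lambda$.

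Substituting into the identity above, $\frac{1}{n^2}\|\bX\bX^\sT-\hbx^t(\hbx^t)^\sT\|_F^2 \to 1-2(\gamma_t/\lambda)^2+(\gamma_t/\lambda)^2 = 1-\gamma_t^2/\lambda^2$ almost surely. To obtain the limit of $\MSEAMP(t;\lambda,\eps,n)$ (which also establishes that the defining limit exists), I would promote this to convergence in mean: since $|f_{t-1}|\le1$ the entries of $\hbx^t$ lie in $[-1,1]$, the left-hand side is bounded by an absolute constant, and bounded convergence gives $\MSEAMP(t;\lambda,\eps)=1-\gamma_t^2/\lambda^2$, i.e.\ Eq.~(\ref{eq:MSEAMP_t}). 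Finally, Eq.~(\ref{eq:MSEAMP_tinfty}) follows by letting $t\to\infty$ and using Lemma \ref{rem:unique}(e), which gives $\gamma_t\to\gamma_*(\lambda,\eps)$, hence $\MSEAMP(t;\lambda,\eps)\to1-\gamma_*^2/\lambda^2$.

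The essential input is the state evolution lemma together with the algebraic collapse coming from Bayes-optimality (overlap $=$ squared norm $=\sigma_t^2$, a Nishimori-type identity); the only genuinely technical point is passing from the almost-sure limit of Lemma \ref{lem:stateevollem} to a limit of expectations. Here this is immediate because the Bayes-optimal denoiser is uniformly bounded, but in general it would require a uniform-integrability estimate on the AMP iterates.
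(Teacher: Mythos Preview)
Your proof is correct and follows essentially the same route as the paper: expand the Frobenius norm into the overlap $\langle\bX,\hbx^t\rangle/n$ and the squared norm $\|\hbx^t\|_2^2/n$, apply the state-evolution Lemma~\ref{lem:stateevollem} to each, use Bayes-optimality to identify both limits with $\sigma_t^2=\gamma_t/\lambda$, pass from almost-sure convergence to convergence of expectations via bounded convergence (using $|f_{t-1}|\le 1$), and finally invoke Lemma~\ref{rem:unique}(e) for the $t\to\infty$ limit. Your write-up is in fact slightly more careful than the paper's, as you explicitly check the pseudo-Lipschitz condition on the test functions.
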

\begin{proof}
Note that Eq.~(\ref{eq:MSEAMP_tinfty}) follows from
Eq.~(\ref{eq:MSEAMP_t}) using Lemma \ref{rem:unique}, point $(d)$. We will
therefore focus on proving  Eq.~(\ref{eq:MSEAMP_t}).

First notice that:
\begin{align}
    \MSEAMP(t;\lambda, \eps, n) &= \frac{1}{n^2} \E\left\{ \norm{\bX \bX^\sT- \hbx^t (\hbx^t)^\sT }_F^2 \right\} \\
    &= \E\left\{  \frac{\norm{\bX}^4}{n^2}+
      \frac{\norm{\hbx^t}^2}{n^2}  - 2\frac{\<\bX, \hbx^t\>^2}{n^2}
    \right\}\, .
\end{align}
Since $\|\bX\|_2^2 = n $, the first term evaluates to 1. We use Lemma \ref{lem:stateevollem}
  to deal with the final two terms. Consider the last term $\<\bX, \hbx^t\>^2/n^2$. Using
  Lemma \ref{lem:stateevollem} with the $\psi(x, s,r) = f_{t-1}(x, r)
  s$ we have, almost surely
  \begin{align}
    \lim_{n\to \infty} \frac{1}{n} \sum_{i=1}^n
    f_{t-1}(x^{t-1}_i,X(\eps)_i) X_i &= 
\E\left\{ X_0 \E\left\{ X_0 | \mu_{t-1} X_0 + \sigma_{t-1} Z_0 ,
    X(\eps)_0\right\} \right\} \\ 
    &= \frac{\mu_{t}}{\sqrt{\lambda}} = \frac{\gamma_t}{\lambda}\, .
\end{align}
Note also that $\abs{X_i}$ and $\abs{\hx^t_i}$ are bounded by 1, hence so is
$\<\bX, \hbx^t_i\>/n$. It follows from the
bounded convergence theorem that
\begin{align}
  \lim_{n\to\infty} \E\left\{ \frac{\<\bX, \hbx^{t}\>^2}{n^2} \right\}
  &= \frac{\gamma_t^2}{\lambda^2}\, .
\end{align}
In a similar manner, we have that $\lim_{n\to\infty} \E\left\{
  \|\hbx^t\|_2^4/n^2 \right\} = \gamma_{t}^{2}/\lambda^2$,
whence the thesis follows.
\end{proof}

\begin{lemma} \label{lem:PsiIntegral}
  For every $\lambda\ge 0$ and $\eps >0$:
  \begin{align}
    \Psi(\gamma_*(\lambda,\eps), \lambda, \eps)&= \eps\log 2 +
    \frac{1}{4}\int_{0}^{\lambda}\MSEAMP(\tilde{\lambda},\eps)\,
    \mathrm{d}\tilde{\lambda}\, .
  \end{align}
\end{lemma}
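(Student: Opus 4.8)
The plan is to verify the identity by differentiating both sides with respect to $\lambda$ and checking the boundary value at $\lambda=0$. Since Lemma \ref{lem:limpoints} gives $\lim_{\lambda\to 0}\Psi(\gamma_*(\lambda,\eps),\lambda,\eps)=\eps\log 2$, and the integral on the right-hand side vanishes at $\lambda=0$, it suffices to show that both sides have the same derivative in $\lambda$ on $(0,\infty)$. The right-hand side differentiates trivially, by the fundamental theorem of calculus, to $\tfrac14\MSEAMP(\lambda,\eps)=\tfrac14\big(1-\gamma_*(\lambda,\eps)^2/\lambda^2\big)$, using Lemma \ref{lem:fixedpt}. So the whole content is to show
\begin{align}
\frac{\d}{\d\lambda}\,\Psi\big(\gamma_*(\lambda,\eps),\lambda,\eps\big) = \frac14\Big(1-\frac{\gamma_*(\lambda,\eps)^2}{\lambda^2}\Big)\, .
\end{align}

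First I would record that $\lambda\mapsto\gamma_*(\lambda,\eps)$ is differentiable for $\eps>0$ by Lemma \ref{rem:unique}(d), so the chain rule applies to $\Psi(\gamma_*(\lambda,\eps),\lambda,\eps)$. Writing $\gamma_*=\gamma_*(\lambda,\eps)$, the total derivative is
\begin{align}
\frac{\d}{\d\lambda}\Psi(\gamma_*,\lambda,\eps) = \frac{\partial\Psi}{\partial\gamma}(\gamma_*,\lambda,\eps)\,\frac{\d\gamma_*}{\d\lambda} + \frac{\partial\Psi}{\partial\lambda}(\gamma_*,\lambda,\eps)\, .
\end{align}
The key observation is that the first term vanishes: $\gamma_*$ is, by definition, a stationary point of $\gamma\mapsto\Psi(\gamma,\lambda,\eps)$. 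Indeed, differentiating $\Psi(\gamma,\lambda,\eps)=\tfrac{\lambda}{4}+\tfrac{\gamma^2}{4\lambda}-\tfrac{\gamma}{2}+\eps\log 2+(1-\eps)\Info(\gamma)$ in $\gamma$ and invoking the I-MMSE relation $\Info'(\gamma)=\tfrac12\mmse(\gamma)$ (as already used in the excerpt to derive Eq.~\eqref{eq:DerPsi}) gives
\begin{align}
\frac{\partial\Psi}{\partial\gamma}(\gamma,\lambda,\eps) = \frac{\gamma}{2\lambda}-\frac12+\frac{1-\eps}{2}\,\mmse(\gamma)\, ,
\end{align}
which is exactly $\tfrac{1}{2\lambda}\big(\gamma-\lambda(1-(1-\eps)\mmse(\gamma))\big)$, and this vanishes at $\gamma=\gamma_*$ precisely because $\gamma_*$ solves the fixed-point equation \eqref{eq:EpsFixedPoint}. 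Hence only the explicit partial derivative in $\lambda$ survives:
\begin{align}
\frac{\partial\Psi}{\partial\lambda}(\gamma,\lambda,\eps) = \frac14-\frac{\gamma^2}{4\lambda^2}\, ,
\end{align}
and evaluating at $\gamma=\gamma_*$ yields $\tfrac14\big(1-\gamma_*^2/\lambda^2\big)=\tfrac14\MSEAMP(\lambda,\eps)$, as desired.

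To finish, I would integrate this derivative identity from $0$ to $\lambda$. One needs mild justification that $\lambda\mapsto\Psi(\gamma_*(\lambda,\eps),\lambda,\eps)$ is absolutely continuous (or at least that the derivative is integrable near $0$); this follows since $\gamma_*(\lambda,\eps)$ is continuous with $0\le\gamma_*(\lambda,\eps)\le\lambda$ by Eq.~\eqref{eq:BoundGamma}, so the derivative $\tfrac14(1-\gamma_*^2/\lambda^2)$ is bounded on $(0,\lambda]$, and $\Info$ and the fixed point are smooth enough on the relevant range. Combining with the boundary value $\lim_{\lambda\to 0}\Psi(\gamma_*(\lambda,\eps),\lambda,\eps)=\eps\log 2$ from Lemma \ref{lem:limpoints} gives the claimed formula. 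The only place requiring a little care is the regularity/absolute-continuity step near $\lambda=0$ and the legitimacy of the chain rule there, but since everything is uniformly bounded this is routine; the genuinely substantive inputs—the stationarity of $\Psi$ at $\gamma_*$ and the identification of $\MSEAMP$ with $1-\gamma_*^2/\lambda^2$—are already in hand.
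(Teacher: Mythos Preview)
Your proposal is correct and follows essentially the same approach as the paper's proof: compute the partial derivatives of $\Psi$ in $\gamma$ and $\lambda$, use the fixed-point equation to see that $\partial_\gamma\Psi$ vanishes at $\gamma_*$, identify the remaining $\partial_\lambda\Psi$ with $\tfrac14\MSEAMP(\lambda,\eps)$ via Lemma~\ref{lem:fixedpt}, and then integrate using the $\lambda\to 0$ boundary value from Lemma~\ref{lem:limpoints}. The only difference is that you are a bit more explicit about the absolute-continuity justification near $\lambda=0$, which the paper leaves implicit.
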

\begin{proof}
By differentiating Eq.~(\ref{eq:PsiEps}) we  obtain (recall
$\frac{\d\Info(\gamma)}{\d \gamma} = (1/2)\mmse(\gamma)$):
\begin{align}
  \diffp{\Psi(\gamma, \lambda, \eps)}{\gamma}\bigg\rvert_{\gamma =
    \gamma_*} &=
  \frac{\gamma_*}{2\lambda}-\frac{1}{2}+\frac{1}{2}(1-\eps)\mmse(\gamma_*)
  = 0\, , \\
  \diffp{\Psi(\gamma, \lambda, \eps)}{\lambda}\bigg\rvert_{\gamma =
    \gamma_*} &= 
\frac{1}{4}\left( 1-\frac{\gamma_*^2}{\lambda^2}
\right)\, .
\end{align}
It follows from the uniqueness and differentiability of
$\gamma_*(\lambda,\eps)$ (cf. Lemma \ref{rem:unique}) that
$\lambda \mapsto \Psi(\gamma_*(\lambda,\eps), \lambda, \eps)$ is
differentiable for any fixed $\eps>0$, with derivative
\begin{align}
\frac{\d\Psi}{\d\lambda}(\gamma_*(\lambda,\eps), \lambda, \eps)&= 
\frac{1}{4}\left( 1-\frac{\gamma_*^2}{\lambda^2}
\right) \, .
\end{align}
The lemma follows from the fundamental theorem of calculus using 
Lemma \ref{lem:limpoints} for $\lambda=0$, and Lemma
\ref{lem:fixedpt}, cf. Eq.~(\ref{eq:MSEAMP_tinfty}).
\end{proof}

\subsection{Proof of Proposition \ref{prop:singleletter}}

We are now in a position to prove Proposition \ref{prop:singleletter}.
We start from a simple remark, proved in Appendix 
\begin{remark}\label{rem:limitentro}
We have
\begin{align}
\big|I(\bX;\bY(\lambda), \bX(\eps))-I(\bX\bX^{\sT};\bY(\lambda),
\bX(\eps))\big| \le \log 2\, . \label{eq:IXX}
\end{align}
Further the asymptotic mutual information satisfies
\begin{align}
 \lim_{n\to\infty}\frac{1}{n}I(\bX\bX^\sT;  \bX(\eps),\bY(0)) &= \eps\log 2\, ,\label{eq:IXX_lambda0}\\
    \lim_{\lambda\to\infty} \liminf_{n\to\infty} \frac{1}{n}
    I(\bX\bX^\sT;\bX(\eps),\bY(\lambda)) &= \log 2\, . \label{eq:IXX_lambdainfty}
  \end{align}
\end{remark}
We defer the proof of these facts to Appendix \ref{sec:limEntro}.

Applying the (conditional) I-MMSE identity of \cite{guo2005mutual} we have
\begin{align}
\frac{1}{n}  \diffp{I(\bX\bX^\sT ; \bY(\lambda), \bX(\eps))}{\lambda} &= 
  \frac{1}{2n^2}\sum_{i<j} \E\left\{ (X_iX_j - \E\{X_iX_j|\bX(\eps), \bY(\lambda)\})^2 \right\} \\
  &= \frac{1}{4} \MMSE(\lambda,\eps,n) \label{eq:IMMSE}\\
 &\le  \frac{1}{4n^2} \E\left\{ \big\|\bX\bX^{\sT}- \hbx^t (\hbx^t)^{\sT}\big\|_F^2 \right\}\\
  &= \frac{1}{4}\MSEAMP(t;\lambda, \eps, n)\, . \label{eq:IMMSE2}
\end{align}

We therefore  have
\begin{align}
(1-\eps)\log 2 & \stackrel{(a)}{=}
\lim_{\lambda\to\infty}\liminf_{n\to\infty}\frac{1}{n}\big[I(\bX;\bX(\eps),\bY(\lambda))-
I(\bX;\bX(\eps),\bY(0))\big] \\
&\stackrel{(b)}{=}
\lim_{\lambda\to\infty}\liminf_{n\to\infty}
\frac{1}{4} \int_{0}^{\lambda}\MMSE(\lambda',\eps,n) \, \d\lambda'\\
&\stackrel{(c)}{\le}
\lim_{\lambda\to\infty}\lim_{t\to \infty}\limsup_{n\to\infty}
\frac{1}{4} \int_{0}^{\lambda}\MSEAMP(t;\lambda',\eps,n) \,
\d\lambda'\\
&\stackrel{(d)}{=}
\lim_{\lambda\to\infty}\lim_{t\to \infty}
\frac{1}{4} \int_{0}^{\lambda}\MSEAMP(t;\lambda',\eps) \,
\d\lambda'
\end{align}
where $(a)$ follows from Remark \ref{rem:limitentro}, $(b)$ from
Eq.~(\ref{eq:IXX}) and (\ref{eq:IMMSE}), $(c)$ from
(\ref{eq:IMMSE2}), and $(d)$ from bounded convergence. 
Continuing from the previous chain we get
\begin{align}
(\;\cdots\;) &\stackrel{(e)}{=} \lim_{\lambda\to\infty}
\frac{1}{4} \int_{0}^{\lambda}\MSEAMP(\lambda',\eps) \,
\d\lambda' \\
& \stackrel{(f)}{=} \lim_{\lambda\to\infty} \big[\Psi(\gamma_*(\lambda,\eps), \lambda,
\eps)-\Psi(\gamma_*(0,\eps), 0, \eps)\big]\\
& \stackrel{(g)}{=} (1-\eps)\log 2\, ,
\end{align}
where $(e)$ follows from Lemma \ref{lem:fixedpt}, $(f)$ from Lemma
\ref{lem:PsiIntegral}, and $(g)$ from Lemma \ref{lem:limpoints}.

We therefore have a chain of equalities, whence the inequality $(c)$
must hold with equality. Since $\MMSE(\lambda,\eps,n) \le
\MSEAMP(t;\lambda,\eps,n)$
for any $\lambda$, this implies
\begin{align}
\MSEAMP(\lambda,\eps) =\lim_{t\to\infty}\lim_{n\to\infty}\MSEAMP(t;\lambda,\eps,n) =
\lim_{n\to\infty}\MMSE(\lambda,\eps,n)\, 
\end{align}
for almost every $\lambda$. The conclusion follows for every $\lambda$
by the monotonicity of  $\lambda\mapsto\MMSE(\lambda,\eps,n)$, and the
continuity of $\MSEAMP(\lambda,\eps)$.

Using again Remark  \ref{rem:limitentro},  and the last display, we
get that the following limit exists
\begin{align}
\lim_{n\to\infty} \frac{1}{n}\, I(\bX ; \bX(\eps),\bY(\lambda) ) 
&= \lim_{n\to\infty} \frac{1}{n} I(\bX\bX^\sT ; X(\eps)
,Y(\lambda)) \\
&\le \eps\log 2 +\lim_{n\to\infty}\frac{1}{4}\int_{0}^{\lambda}\MMSE(\lambda',\eps,
n)\, \d\lambda'\\
&\le \eps\log 2 +\frac{1}{4}\int_{0}^{\lambda}\MSEAMP(\lambda',\eps)\,
\d\lambda'\\
& = \Psi(\gamma_*(\lambda, \eps), \lambda, \eps)\, ,
\end{align}
where we used Lemma \ref{lem:PsiIntegral} in the last step. This
concludes the proof.
%
%
\section{Proof of Theorem \ref{thm:MainEstimation}}
\label{sec:ProofEstimation}

\subsection{A general differentiation formula}

In this section we recall a general formula
to compute the derivative of the conditional entropy $H(\bX|\bG)$ with
respect to noise parameters. The formula was proved in \cite{measson2004life}
and \cite[Lemma 2]{measson2009generalized}. We restate it in the
present context and present a self-contained proof for the reader's
convenience.

We consider the following setting. For $n$ an integer, denote by $\Pair$
the set of unordered pairs in $[n]$ (in particular $|\Pair| =
\binom{n}{2})$). We will use $e,e_1,e_2,\dots$ to denote elements of $\Pair$. 
For for each $e = (i,j)$ we are given a one-parameter family of
discrete noisy channels indexed by $\theta\in J$ (with $J = (a,b)$ a
non-empty interval), with input alphabet $\{+1,-1\}$ and finite output
alphabet $\cY$. Concretely, for any $e$, we have a transition
probability 
\begin{align}
\{p_{e,\theta}(y|x)\}_{x\in\{+1,-1\},y\in\cY}\, ,
\end{align}
which is differentiable in $\theta$. We shall omit the subscript
$\theta$ since it will be clear from the context.
 
We then consider  $\bX= (X_1,X_2,\dots,X_n)$ 
a random vector in $\{+1,-1\}^n$, and $\bY = (Y_{ij})_{(i,j)\in\Pair}$
a set of observations in $\cY^{\Pair}$ that are conditionally
independent given $\bX$. Further $Y_{ij}$ is the noisy observation of
$X_iX_j$ through the channel $p_{ij}(\,\cdot\,|\,\cdot\,)$. In
formulae, the joint probability density function of $\bX$ and $\bY$ is
\begin{align}
p_{\bX,\bY}(\bx,\by) = p_{\bX}(\bx) \prod_{(i,j)\in
  \Pair}p_{ij}(y_{ij}|x_ix_j)\, .
\end{align}
This obviously include the two-groups stochastic block model 
as a special case, if we take $p_{\bX}(\,\cdot\,)$  to be the uniform
distribution over $\{+1,-1\}^n$, and output alphabet $\cY = \{0,1\}$.
In that case $\bY = \bG$ is just the adjacency matrix of the graph.

In the following we write $\bY_{-e}= (Y_{e'})_{e'\in\Pair\setminus e}$
for the set of observations excluded $e$, and $X_e = X_{i}X_j$ for $e=(i,j)$. 
\begin{lemma}[\cite{measson2009generalized}]\label{lemma:Differentiation}
With the above notation, we have:
\begin{align}
\frac{\d H(\bX|\bY)}{\d \theta} = \sum_{e\in\Pair}\sum_{x_e,y_e}\frac{\d p_e(y_e|x_e)}{\d\theta}
\E\Big\{p_{X_e|\bY_{-e}}(x_e|\bY_{-e})\log
\Big[\sum_{x'_e}\frac{p_e(y_e|x'_e)}{p_e(y_e|x_e)}p_{X_e|\bY_{-e}}(x'_e|\bY_{-e})\Big]\Big\}\,
.
\label{eq:BigDerivative}
\end{align}
\end{lemma}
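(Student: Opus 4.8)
The plan is to differentiate the conditional entropy directly using the chain rule, isolating the dependence on $\theta$ that enters only through the channels $p_{e,\theta}(y_e|x_e)$, and then to perform an algebraic simplification that makes the edge-by-edge structure manifest. First I would write $H(\bX|\bY) = -\sum_{\bx,\by} p_{\bX}(\bx)\big(\prod_e p_e(y_e|x_e)\big)\log p_{\bX|\bY}(\bx|\by)$, or more conveniently use the identity $H(\bX|\bY) = H(\bX,\bY) - H(\bY)$, since $H(\bX)$ is $\theta$-independent. Because each observation $Y_e$ depends on $\theta$ only through its own channel $p_{e,\theta}(\cdot|\cdot)$, by the product rule the derivative decomposes as a sum over $e\in\Pair$ of terms each involving $\d p_e(y_e|x_e)/\d\theta$. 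The key computation is then: for a fixed edge $e$, differentiating with respect to the parameter of $p_e$ only, one gets a contribution of the form $\sum_{x_e,y_e}\tfrac{\d p_e(y_e|x_e)}{\d\theta}\,\E\{\cdots\}$, and the task is to identify the bracketed expectation.

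The main step is the following reduction. Fix $e$ and condition on $\bY_{-e}$. Conditionally on $\bY_{-e}$, the pair $(X_e, Y_e)$ is governed by the prior $p_{X_e|\bY_{-e}}(\cdot|\bY_{-e})$ passed through the channel $p_e$; here one uses that $X_e = X_iX_j$ is the only function of $\bX$ that the channel at $e$ sees, which is exactly the ``product, not comma'' point emphasized in the elementary upper bound. Then $H(\bX|\bY) = H(\bX|\bY_{-e}) - I(X_e;Y_e\,|\,\bY_{-e})$ up to terms not depending on the parameter of channel $e$ — more precisely, the only $\theta$-dependence through channel $e$ sits in $I(X_e;Y_e|\bY_{-e})$, or equivalently in $H(Y_e|\bY_{-e})$. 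So I would differentiate $-H(Y_e|\bY_{-e}) = -\E\{-\sum_{y_e} p_{Y_e|\bY_{-e}}(y_e|\bY_{-e})\log p_{Y_e|\bY_{-e}}(y_e|\bY_{-e})\}$ with respect to $\theta$, where $p_{Y_e|\bY_{-e}}(y_e|\bY_{-e}) = \sum_{x_e} p_e(y_e|x_e)\,p_{X_e|\bY_{-e}}(x_e|\bY_{-e})$. Carrying out this single-variable differentiation, the $\log$-derivative and the explicit derivative terms combine; using $\sum_{y_e}\tfrac{\d p_e(y_e|x_e)}{\d\theta} = 0$ (since $\sum_{y_e}p_e(y_e|x_e)=1$ for all $\theta$) kills the stray normalization terms, and after rewriting $p_{Y_e|\bY_{-e}}(y_e|\bY_{-e})$ as $p_e(y_e|x_e)\sum_{x_e'}\tfrac{p_e(y_e|x_e')}{p_e(y_e|x_e)}p_{X_e|\bY_{-e}}(x_e'|\bY_{-e})$ one lands exactly on the claimed formula \eqref{eq:BigDerivative}.

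The hard part will be bookkeeping rather than conceptual: making sure that when differentiating through channel $e$, the posterior $p_{X_e|\bY_{-e}}$ is genuinely $\theta$-independent from the point of view of that derivative (it is, since it involves only the other channels, and the total derivative is the sum of partials — one per edge), and making sure the interchange of $\d/\d\theta$ with the finite sums and the expectation $\E$ over $\bY_{-e}$ is justified, which is immediate here because all alphabets are finite and $p_{e,\theta}$ is assumed differentiable in $\theta$. A secondary subtlety is that $p_e(y_e|x_e)$ could vanish for some $(x_e,y_e)$, making $\log$-ratios singular; I would either restrict to the (relatively open) set of $\theta$ where the channel has full support, or note that such terms are multiplied by $\d p_e(y_e|x_e)/\d\theta$ and handle them by a limiting argument, as in \cite{measson2009generalized}. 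Once \eqref{eq:BigDerivative} is in hand, specializing the channel family to the stochastic block model (with $\theta$ a suitable parametrization of $(p_n,q_n)$) will recover the I-MMSE-type derivative used elsewhere in the paper.
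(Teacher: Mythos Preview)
Your overall approach is essentially the paper's: fix an edge $e$, use the chain rule to isolate the $\theta$-dependence coming from channel $e$, differentiate, and rewrite. However, there is a genuine bookkeeping slip in the middle of your argument.

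You write that ``the only $\theta$-dependence through channel $e$ sits in $I(X_e;Y_e|\bY_{-e})$, or equivalently in $H(Y_e|\bY_{-e})$,'' and then proceed to differentiate only $-H(Y_e|\bY_{-e})$. That ``or equivalently'' is false. From the chain-rule identity $H(\bX|\bY)+H(Y_e|\bY_{-e})=H(\bX|\bY_{-e})+H(Y_e|X_e)$ (equivalently, $H(\bX|\bY)=H(\bX|\bY_{-e})-I(X_e;Y_e|\bY_{-e})$), the dependence on the parameter of channel $e$ sits in \emph{both} $H(Y_e|\bY_{-e})$ and $H(Y_e|X_e)$. Differentiating only $-H(Y_e|\bY_{-e})$ gives
\[
\sum_{x_e,y_e}\frac{\d p_e(y_e|x_e)}{\d\theta}\,\E\Big\{p_{X_e|\bY_{-e}}(x_e|\bY_{-e})\log\Big[\sum_{x'_e}p_e(y_e|x'_e)\,p_{X_e|\bY_{-e}}(x'_e|\bY_{-e})\Big]\Big\},
\]
which lacks the $1/p_e(y_e|x_e)$ inside the logarithm. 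Your proposed ``rewriting'' $p_{Y_e|\bY_{-e}}(y_e|\bY_{-e})=p_e(y_e|x_e)\sum_{x'_e}\frac{p_e(y_e|x'_e)}{p_e(y_e|x_e)}p_{X_e|\bY_{-e}}(x'_e|\bY_{-e})$ introduces an additive $\log p_e(y_e|x_e)$ term whose contribution is exactly $-\d H(Y_e|X_e)/\d\theta$; it does not vanish and is not killed by $\sum_{y_e}\d p_e/\d\theta=0$. So carrying out your steps as written you would land on \eqref{eq:BigDerivative} \emph{minus} $\d H(Y_e|X_e)/\d\theta$, not on \eqref{eq:BigDerivative} itself.

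The fix is exactly what the paper does: differentiate both pieces,
\[
\frac{\d H(\bX|\bY)}{\d\theta}=\frac{\d H(Y_e|X_e)}{\d\theta}-\frac{\d H(Y_e|\bY_{-e})}{\d\theta},
\]
compute each term separately (using $\sum_{y_e}\d p_e/\d\theta=0$ for the first and the fact that $p_{X_e|\bY_{-e}}$ does not depend on the parameter of channel $e$ for the second), and then combine them using $\E\{p_{X_e|\bY_{-e}}(x_e|\bY_{-e})\}=p_{X_e}(x_e)$. With that correction your proof is complete and matches the paper's.
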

\begin{proof}
Fix $e\in\Pair$. By linearity of differentiation, it is sufficient to
prove the claim when only $p_e(\,\cdot\,|\,\cdot\,)$ depends on $\theta$.

Writing $H(\bX,Y_e|\bY_{-e})$ by chain rule in two
alternative ways we get
\begin{align}
H(\bX|\bY) + H(Y_e|\bY_{-e}) &= H(\bX|\bY_{-e}) +H(Y_e|\bX,\bY_{-e})\\ 
& = H(\bX|\bY_{-e}) +H(Y_e|X_e)\, ,
\end{align}
where in the last identity we used the conditional independence of
$Y_e$ from $\bX$, $\bY_{-e}$, given $X_e$. Differentiating with
respect to $\theta$, and using the  fact that $H(\bX|\bY_{-e})$ is
independent of $p_e(\,\cdot\,|\,\cdot\,)$, we get
\begin{align}
\frac{\d H(\bX|\bY)}{\d \theta} = \frac{\d H(Y_e|X_e)}{\d \theta}
-\frac{\d H(Y_e|\bY_{-e})}{\d \theta} \, .\label{eq:DerivativeFirst}
\end{align}
Consider the first term. Singling out the dependence of $H(Y_e|X_e)$
on $p_e$ we get
\begin{align}
\frac{\d H(Y_e|X_e)}{\d \theta} &=
-\frac{\d\phantom{a}}{\d\theta}\sum_{y_e}\E\big\{p_e(y_e|X_e)\log
p_e(y_e|X_e)\big\}\\
& =
-\sum_{y_e}\E\big\{\frac{\d p_e(y_e|X_e)}{\d\theta}\log
p_e(y_e|X_e)\big\}\\
& =
-\sum_{x_e,y_e}\frac{\d p_e(y_e|x_e)}{\d\theta}\, p_{X_e}(x_e)\, \log
p_e(y_e|x_e)\, .\label{eq:Derivative2}
\end{align}
In the second line we used the fact that the distribution of $X_e$ is
independent of $\theta$, and the normalization condition
$\sum_{y_e}\frac{\d p_e(y_e|x_e)}{\d\theta}=0$.

We follow the same steps for the second term
(\ref{eq:DerivativeFirst}):
\begin{align}
&\frac{\d H(Y_e|\bY_{-e})}{\d \theta} =-\frac{\d\phantom{a}}{\d\theta}\sum_{y_e}
\E\Big\{\Big[\sum_{x_e}p_{X_e,Y_e|\bY_{-e}}(x_e,y_e|\bY_{-e})\Big]\log
\Big[\sum_{x'_e}p_{X_e,Y_e|\bY_{-e}}(x'_e,y_e|\bY_{-e})\Big]\Big\}\\
&\phantom{AA}=-\frac{\d\phantom{a}}{\d\theta}\sum_{y_e}
\E\Big\{\Big[\sum_{x_e}p_e(y_e|x_e)p_{X_e|\bY_{-e}}(x_e|\bY_{-e})\Big]\log
\Big[\sum_{x'_e}p_e(y_e|x'_e)p_{X_e|\bY_{-e}}(x'_e|\bY_{-e})\Big]\Big\}\\
&\phantom{AA}=-\sum_{x_e,y_e}\frac{\d p_e(y_e|x_e)}{\d\theta}
\E\Big\{p_{X_e|\bY_{-e}}(x_e|\bY_{-e})\log
\Big[\sum_{x'_e}p_e(y_e|x'_e)p_{X_e|\bY_{-e}}(x'_e|\bY_{-e})\Big]\Big\}\, .\label{eq:Derivative3}
\end{align}
Taking the difference of Eq.~(\ref{eq:Derivative2}) and
Eq.~(\ref{eq:Derivative3}) we obtain the desired formula.
\end{proof}

\subsection{Application to the stochastic block model}

We next apply the general differentiation Lemma
\ref{lemma:Differentiation} to the stochastic block model.
As mentioned above, this fits the framework in the previous section,
by setting $\bY$ be the adjacency matrix of the graph $\bG$, and
taking $p_{\bX}$ to be the uniform distribution over $\{+1,-1\}^n$.
For the sake of convenience, we will encode this as $Y_e = 2G_e$.
In other words $\cY = \{+1,-1\}$ and $Y_e = +1$ (respectively $=-1$) 
encodes the fact that edge $e$ is present (respectively, absent).
We then have the following channel model for all $e\in\Pair$:
\begin{align}
p_e(+|+) &= p_n\, , &p_e(+|-) = q_n\, ,\\
p_e(-|+) &= 1-p_n\, , &p_e(-|-) = 1-q_n\, .
\end{align}
We parametrize these probability kernels by a common parameter
$\theta\in\reals_{\ge 0}$ by letting
\begin{align}
p_n = \op_n +\sqrt{\frac{\op_n(1-\op_n)}{n}\,\theta}\, ,\;\;\;\;\;\;\;\;\;\;
q_n = \op_n -\sqrt{\frac{\op_n(1-\op_n)}{n}\,\theta}\, .\label{eq:PnQnParam}
\end{align}
We will be eventually interested in setting $\theta=\lambda_n$ to make
contact with the setting of Theorem \ref{thm:MainEstimation}.
\begin{lemma}\label{lemma:DifferentiationSBM}
Let $I(\bX;\bG)$ be the mutual information of the two-groups
stochastic block models with parameters $p_n = p_n(\theta)$ and 
$q_n= q_n(\theta)$ given by Eq.~(\ref{eq:PnQnParam}). Then there
exists a numerical constant $C$ such that the following happens.

For any $\theta_{\rm max}>0$ there exists $n_0(\theta_{\rm max})$ such
that, if $n\ge n_0(\theta_{\rm max})$ then for all $\theta\in
[0,\theta_{\rm max}]$,
\begin{align}
\left|\frac{1}{n}\,\frac{\d I(\bX;\bG)}{\d \theta} -\frac{1}{4}\MMSE_n(\theta)\right|\le
  C\left(\sqrt{\frac{\theta}{n\op_n(1-\op_n)}} \vee \frac{1}{n}\right)\, .
\end{align}
\end{lemma}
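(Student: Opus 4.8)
The plan is to use the representation $I(\bX;\bG)=n\log2-H(\bX|\bG)$, so that $\der_\theta I(\bX;\bG)=-\der_\theta H(\bX|\bG)$, and to evaluate $\der_\theta H(\bX|\bG)$ by feeding the SBM channel \eqref{eq:PnQnParam} into the general differentiation formula, Lemma \ref{lemma:Differentiation} (with $\bY=\bG$ there). Set $\Delta=\Delta(\theta):=\sqrt{\theta\,\op_n(1-\op_n)/n}$ so that $p_n=\op_n+\Delta$, $q_n=\op_n-\Delta$, and for $s,t\in\{\pm1\}$ write $\op_n^{(+1)}:=\op_n$, $\op_n^{(-1)}:=1-\op_n$, so that the channel is compactly $p_e(t\mid s)=\op_n^{(t)}+st\Delta$ and $\der_\theta p_e(t\mid s)=\tfrac{st}{2}\sqrt{\op_n(1-\op_n)/(n\theta)}$ (which obeys $\sum_t\der_\theta p_e(t\mid s)=0$, as it must). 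Let $m_{-e}:=\E\{X_iX_j\mid\bG_{-e}\}$ denote the posterior mean of $X_e=X_iX_j$ given all edge-variables except $e$, so $p_{X_e\mid\bG_{-e}}(s\mid\bG_{-e})=\tfrac12(1+sm_{-e})$, and note $\sum_{s'}\tfrac{p_e(t\mid s')}{p_e(t\mid s)}p_{X_e\mid\bG_{-e}}(s'\mid\bG_{-e})=R(t)/p_e(t\mid s)$ with $R(t):=\op_n^{(t)}+t\Delta m_{-e}$. Lemma \ref{lemma:Differentiation} then becomes
\begin{align}
\der_\theta H(\bX|\bG)=\frac12\sqrt{\frac{\op_n(1-\op_n)}{n\theta}}\sum_{e\in\Pair}\ \sum_{s,t\in\{\pm1\}}st\ \E\Big\{\tfrac{1+sm_{-e}}{2}\,\log\frac{R(t)}{p_e(t\mid s)}\Big\}\,.
\end{align}

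Next I would Taylor-expand the summand in $u_{+1}:=\Delta/\op_n$ and $u_{-1}:=\Delta/(1-\op_n)$, using $R(t)/p_e(t\mid s)=(1+tu_tm_{-e})/(1+stu_t)$ and $\log(1+x)=x-x^2/2+O(|x|^3)$. Since $n\op_n\ge n\op_n(1-\op_n)$ and symmetrically, we have $u_{\pm1}\le\sqrt{\theta_{\rm max}/(n\op_n(1-\op_n))}$, which is at most $1/2$ once $n\ge n_0(\theta_{\rm max})$ — this is precisely where the threshold $n_0(\theta_{\rm max})$ is needed, to legitimize the expansion. Multiplying the expansion by $st\tfrac12(1+sm_{-e})$ and using the identity $(1+sm_{-e})(m_{-e}-s)=-s(1-m_{-e}^2)$, the first-order term collapses to $-\tfrac{u_t}{2}(1-m_{-e}^2)$; summing over $s$ and then over $t$ gives the main term $-(u_{+1}+u_{-1})(1-m_{-e}^2)=-\tfrac{\Delta}{\op_n(1-\op_n)}(1-m_{-e}^2)$, with all higher-order contributions bounded by $O(u_{+1}^2+u_{-1}^2)$ (absolute constants, using $|m_{-e}|\le1$). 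Now $\tfrac12\sqrt{\op_n(1-\op_n)/(n\theta)}\cdot\tfrac{\Delta}{\op_n(1-\op_n)}=\tfrac1{2n}$, while $u_{+1}^2+u_{-1}^2\le2\theta/(n\op_n(1-\op_n))$; since $\E\{1-m_{-e}^2\}=\E\{(X_e-m_{-e})^2\}=:\MMSE_n^{-e}(\theta)$ is the same for all $e$ by exchangeability, summing over the $\binom n2$ pairs gives
\begin{align}
\der_\theta H(\bX|\bG)=-\frac{n-1}{4}\,\MMSE_n^{-e}(\theta)+O\!\Big(n\,\sqrt{\tfrac{\theta}{n\op_n(1-\op_n)}}\Big)\,.
\end{align}

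It remains to pass from $\MMSE_n^{-e}(\theta)$ (the MMSE for $X_e$ from the reduced observation $\bG_{-e}$) to the true $\MMSE_n(\theta)=\E\{(X_e-\E\{X_e|\bG\})^2\}$, and to remove the factor $(n-1)/n$. The factor costs $O(1/n)$ since $\MMSE_n^{-e}(\theta)\le1$. For the MMSE comparison, the tower property gives $\E\{\E\{X_e|\bG\}^2\}-\E\{m_{-e}^2\}=\E\{(\E\{X_e|\bG\}-m_{-e})^2\}$, and a one-step Bayes update — starting from the posterior mean $m_{-e}$ and incorporating the single observation $Y_e$ through $p_e$ — gives the closed form $\E\{X_e|\bG\}-m_{-e}=\tfrac{Y_e\Delta(1-m_{-e}^2)}{\op_n^{(Y_e)}+Y_e\Delta m_{-e}}$, of absolute value $\le2u_{Y_e}\le2(u_{+1}\vee u_{-1})$ when $u_{\pm1}\le1/2$; hence $0\le\MMSE_n^{-e}(\theta)-\MMSE_n(\theta)\le4(u_{+1}^2+u_{-1}^2)=O\!\big(\theta/(n\op_n(1-\op_n))\big)$, which is $\le\sqrt{\theta/(n\op_n(1-\op_n))}$ up to a constant whenever $n\op_n(1-\op_n)\ge\theta$ (true for $n\ge n_0(\theta_{\rm max})$). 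Collecting the three error sources, dividing by $n$, and using $\der_\theta I(\bX;\bG)=-\der_\theta H(\bX|\bG)$, yields $\big|\tfrac1n\der_\theta I(\bX;\bG)-\tfrac14\MMSE_n(\theta)\big|\le C\big(\sqrt{\theta/(n\op_n(1-\op_n))}\vee\tfrac1n\big)$, which is the claim.

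The step I expect to be the main obstacle is the error accounting in the previous two paragraphs: the leading part of $\der_\theta H(\bX|\bG)$ is of order $n$, so each discarded piece — the quadratic-and-higher Taylor remainders summed over $\binom n2$ edges, the $(n-1)/n$ correction, and the gap $\MMSE_n^{-e}-\MMSE_n$ — must be checked to be $O(n\,[\sqrt{\theta/(n\op_n(1-\op_n))}\vee1/n])$, and this requires carefully bookkeeping how each power of $\Delta$ interacts with the $\sqrt{\op_n(1-\op_n)/(n\theta)}$ prefactor and with the possibly-vanishing quantities $\op_n$, $1-\op_n$ (which is where $n\op_n\ge n\op_n(1-\op_n)$ gets used repeatedly). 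There is no genuine cancellation to engineer — the matrix MMSE term \emph{is} the leading term, in keeping with the I-MMSE relation — so the difficulty is organizational; the only slightly non-routine input is the explicit one-step posterior-update identity used to control $\MMSE_n^{-e}-\MMSE_n$.
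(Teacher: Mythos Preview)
Your proposal is correct and follows essentially the same route as the paper's own proof: apply the differentiation formula of Lemma~\ref{lemma:Differentiation} to the SBM channel, Taylor-expand the resulting logarithms in $\Delta/\op_n$ and $\Delta/(1-\op_n)$ to extract the leading term proportional to the leave-one-edge-out MMSE, and then use the one-step Bayes update for $\E\{X_e\mid\bG\}$ in terms of $m_{-e}$ to replace $\MMSE_n^{-e}$ by $\MMSE_n$ with the stated error. The organization and notation differ slightly (the paper encodes $Y_e=\pm1$ and writes the Bayes step as $\hat x_e(\bY)=(\hat x_e(\bY_{-e})+b(Y_e))/(1+b(Y_e)\hat x_e(\bY_{-e}))$ with $|b|\le\sqrt{\theta/(n\op_n(1-\op_n))}$), but the ideas, the order of the error terms, and the places where the threshold $n_0(\theta_{\max})$ enters are identical.
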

\begin{proof}
We let $Y_e = 2G_2-1$ and apply Lemma \ref{lemma:Differentiation}.
Simple calculus yields
\begin{align}
\frac{\d p_e(y_e|x_e)}{\d \theta} =
\frac{1}{2}\sqrt{\frac{\op_n(1-\op_n)}{n\theta}}\; x_ey_e\, .
\end{align}

From Eq.~(\ref{eq:BigDerivative}), letting $\hx_e(\bY_{-e}) \equiv \E\{X_e|\bY_{-e}\}$,
\begin{align}
\frac{\d H(\bX|\bY)}{\d \theta} =& \frac{1}{2}\sqrt{\frac{\op_n(1-\op_n)}{n\theta}}\sum_{e\in\Pair}\sum_{x_e,y_e}x_ey_e
\E\Big\{p_{X_e|\bY_{-e}}(x_e|\bY_{-e})\log
\Big[\sum_{x'_e}p_e(y_e|x'_e) p_{X_e|\bY_{-e}}(x'_e|\bY_{-e})\Big]\Big\}\nonumber\\
&-\frac{1}{2}\sqrt{\frac{\op_n(1-\op_n)}{n\theta}}\sum_{e\in\Pair}\sum_{x_e,y_e}x_ey_e
p_{X_e}(x_e)\log p_e(y_e|x_e)\\
=& \frac{1}{2}\sqrt{\frac{\op_n(1-\op_n)}{n\theta}}\sum_{e\in\Pair}
\E\left\{\hx_e(\bY_{-e})\log
\left[\frac{\sum_{x'_e}p_e(+1|x'_e)
    p_{X_e|\bY_{-e}}(x'_e|\bY_{-e})}{\sum_{x'_e}p_e(-1|x'_e)
    p_{X_e|\bY_{-e}}(x'_e|\bY_{-e})}\right]\right\}\nonumber\\
&
-\frac{1}{4}\sqrt{\frac{\op_n(1-\op_n)}{n\theta}}\binom{n}{2}\log\Big\{\frac{p_e(+|+)p_e(-|-)}{p_e(+|-)p_e(-|+)}\Big\}\, . \label{eq:DH}
\end{align}
Notice that, letting $\Delta_n = \sqrt{\op_n(1-\op_n)\theta/n}$, 
\begin{align}
\frac{\sum_{x'_e}p_e(+1|x'_e)
    p_{X_e|\bY_{-e}}(x'_e|\bY_{-e})}{\sum_{x'_e}p_e(-1|x'_e)
    p_{X_e|\bY_{-e}}(x'_e|\bY_{-e})} &= \frac{(p_n+q_n)
                                       +(p_n-q_n)\hx_e(\bY_{-e})}{(2-p_n-q_n)
                                       -(p_n-q_n)\hx_e(\bY_{-e})}\\
&=\frac{\op_n}{1-\op_n}\, \frac{1+(\Delta_n/\op_n)\hx_{e}(\bY_{-e})}
{1-(\Delta_n/(1-\op_n))\hx_{e}(\bY_{-e})}\, ,\\
\frac{p_e(+|+)p_e(-|-)}{p_e(+|-)p_e(-|+)}& =
                                            \frac{(1+(\Delta_n/\op_n))(1+\Delta_n/(1-\op_n))}{(1-(\Delta_n/\op_n))(1-\Delta_n/(1-\op_n))}\,.
\end{align}
Since have $|\Delta_n/\op_n|, |\Delta_n/(1-\op_n)|\le
\sqrt{\theta/[\op_n(1-\op_n)n]}\to 0$, and $\hx_e(\bY_{-e})|\le 1$, we
obtain the following bounds
by Taylor expansion
\begin{align}
\left|\log\left[\frac{\sum_{x'_e}p_e(+1|x'_e)
    p_{X_e|\bY_{-e}}(x'_e|\bY_{-e})}{\sum_{x'_e}p_e(-1|x'_e)
    p_{X_e|\bY_{-e}}(x'_e|\bY_{-e})} \right]-B_0-
  \frac{\Delta_n}{\op_n(1-\op_n)}\,\hx_e(\bY_{-e})\right|\le C\,
  \frac{\theta}{\op_n(1-\op_n)n}\, ,\\
\left|\log\Big[\frac{p_e(+|+)p_e(-|-)}{p_e(+|-)p_e(-|+)}\Big]- \frac{2\Delta_n}{\op_n(1-\op_n)}\right|\le C\,
  \frac{\theta}{\op_n(1-\op_n)n}\, ,
\end{align}
where $B_0\equiv\log(\op_n/(1-\op_n))$ and $C$ will denote a numerical constant that will change from line
to line in the following. Such bounds hold for all $\theta\in
[0,\theta_{\rm max}]$ provided $n\ge n_0(\theta_{\rm max})$.

Substituting these bounds in Eq.~(\ref{eq:DH}) and using
$\E\{\hx_e(\bY_{-e})\} = \E\{X_e\} = 0$, after some manipulations, we get
\begin{align}
\left|\frac{1}{n-1}\,\frac{\d H(\bX|\bY)}{\d \theta} +
  \frac{1}{4}\binom{n}{2}^{-1}\sum_{e\in\Pair}
\Big(1-\E\{\hx_e(\bY_{-e})^2\}\Big)\right|\le
  C\sqrt{\frac{\theta}{n\op_n(1-\op_n)}}\, . \label{eq:DHFinal}
\end{align}

We now define (with a slight overloading of notation)
$\hx_e(\bY)\equiv \E\{X_e|\bY\}$, and relate $\hx_e(\bY_{-e})= \E\{X_e|\bY_{-e}\}$  to the overall
conditional expectation $\hx_e(\bY)$. By Bayes formula
we have
\begin{align}
p_{X_e|\bY}(x_e|\bY)= \frac{p_e(x_e|Y_e)\,
  p_{X_e|\bY_{-e}}(x_e|\bY_{-e})}{\sum_{x'_e}p_e(x_e'|Y_e)\,
  p_{X_e|\bY_{-e}}(x_e'|\bY_{-e})}\, .
\end{align}
Rewriting this identity in terms of $\hx_e(\bY)$, $\hx_e(\bY_{-e})$, we obtain
\begin{align}
\hx_e(\bY) &= \frac{\hx_e(\bY_{-e})
  +b(Y_e)}{1+b(Y_e)\hx_e(\bY_{-e})}\, ,\label{eq:hxe}\\
b(Y_e) & = \frac{p_e(Y_e|+1)+p_e(Y_e|-1)}{p_e(Y_e|+1)-p_e(Y_e|-1)}\, .
\end{align}
Using the definition of $p_e(y_e|x_e)$, we obtain
\begin{align}
b(y_e) = \begin{cases}
\sqrt{(1-\op_n)\theta/(n\op_n)} & \mbox{ if $y_e=+1$,}\\
-\sqrt{\op_n\theta/(n(1-\op_n))} & \mbox{ if $y_e=-1$.}
\end{cases}
\end{align}
This in particular implies $|b(y_e)|\le
\sqrt{\theta/[n\op_n(1-\op_n)]}$.
From Eq.~(\ref{eq:hxe}) we therefore get (recalling $|\hx_e(\bY_{-e})|\le 1$)
\begin{align}
\big|\hx_e(\bY)-\hx_{e}(\bY_{-e})\big| = \frac{|b(Y_e)|
  (1-\hx_e(\bY_{-e})^2)}{1+b(Y_e) \hx_e(\bY_{-e})} \le |b(\bY_e)|\le
\sqrt{\frac{\theta}{n\op_n(1-\op_n)}}\, .
\end{align}

Substituting this in Eq.~(\ref{eq:DHFinal}), we get
\begin{align}
\left|\frac{1}{n}\,\frac{\d H(\bX|\bY)}{\d \theta} +
  \frac{1}{4}\binom{n}{2}^{-1}\sum_{e\in\Pair}
\Big(1-\E\{\hx_e(\bY)^2\}\Big)\right|\le
  C\left(\sqrt{\frac{\theta}{n\op_n(1-\op_n)}} \vee \frac{1}{n}\right)\, .
\end{align}
Finally we rewrite the sum over $e\in\Pair$ explicitly as sum over
$i<j$
and recall that $X_e = X_iX_j$  to get
\begin{align}
\left|\frac{1}{n}\,\frac{\d H(\bX|\bY)}{\d \theta} +
  \frac{1}{4}\binom{n}{2}^{-1}\sum_{1\le i<j\le n}
\E\big\{\big(X_iX_j-\E\{X_iX_j|\bY\}\big)^2\big\}\right|\le
  C\left(\sqrt{\frac{\theta}{n\op_n(1-\op_n)}} \vee \frac{1}{n}\right)\, .
\end{align}
Since $\bY$ is equivalent to $\bY$ (up to a change of variables) and 
$I(\bX;\bG) = H(\bX)-H(\bG|\bG)$, with $H(\bX) = n\log 2$ is
independent of $\theta$, this is equivalent to our claim (recall the
definition of $\MMSE_n(\,\cdot\,)$, Eq.~(\ref{eq:MMSE_2vert})).
\end{proof}

\subsection{Proof of Theorem \ref{thm:MainEstimation}}

From Lemma \ref{lemma:DifferentiationSBM} and Theorem \ref{thm:main},
we obtain, for any $0<\lambda_1<\lambda_2$,
\begin{align}
\lim_{n\to\infty}\int_{\lambda_1}^{\lambda_2}\frac{1}{4}\MMSE_n(\theta)\,
\d\theta = \Psi(\gamma_*(\lambda_2),\lambda_2)-
\Psi(\gamma_*(\lambda_1),\lambda_1)\, .
\end{align}
From Lemma  \ref{lem:fixedpt} and \ref{lem:PsiIntegral} 
\begin{align}
  \lim_{n\to\infty}\int_{\lambda_1}^{\lambda_2} \MMSE_n(\theta)\,
\d\theta =
\int_{\lambda_1}^{\lambda_2}\left(1-\frac{\gamma_*(\theta)^2}{\theta^2}\right)\,\d\theta\,
.
\end{align}

\section*{Acknowledgments}

Y.D. and A.M. were partially supported by NSF grants CCF-1319979 and DMS-1106627 and the AFOSR
grant FA9550-13-1-0036. Part of this work was done while the authors were visiting
Simons Institute for the Theory of Computing, UC Berkeley.

\appendix

\section{Estimation metrics: proofs}

\subsection{Proof of Lemma \ref{lemma:VectorVsMatrix}}
\label{sec:VectorVsMatrix}

Let us begin with the upper bound  on $\vmse_n(\lambda)$.
By using $\hbx(\bG) = \E\{\bX|X_1 = +1,\bG\}$ in
Eq.~(\ref{eq:vmseDef}), we get 
\begin{align}
\vmse_n(\lambda) &\le \frac{1}{n}\,\E\Big\{\min_{s\in \{+1,-1\}}
\big\|\bX-s\E\{\bX|X_1=+1,\bG\}\big\|_2^2\Big\}\\
&=\frac{1}{n}\,\E\Big\{\min_{s\in \{+1,-1\}}
\big\|\bX-s\E\{\bX|X_1=+1,\bG\}\big\|_2^2\big|X_1 = +1\Big\}\\
&\le \frac{1}{n}\,\E\Big\{
\big\|\bX-s\E\{\bX|X_1=+1,\bG\}\big\|_2^2\big|X_1 = +1\Big\} \\
& \le \E\Big\{
\big\|\bX-\E\{\bX|X_1=+1,\bG\}\big\|_2^2\big|X_1 = +1\Big\}\, , 
\end{align}
where the equality on the second line follows because $(\bX,\bG)$ is
distributed as $(-\bX,\bG)$. The last inequality yields the desired 
upper bound $\vmse_n(\lambda) \le\MMSE_n(\lambda)$.

In order to prove the lower bound on $\vmse_n(\lambda)$ assume, for
the sake of simplicity, that the infimum in the definition
(\ref{eq:vmseDef}) is achieved at a certain estimator
$\hbx(\,\cdot\,)$. If this is not the case, the argument below can be
easily adapted by letting $\hbx(\,\cdot\, )$ be an estimator that
achieves error within $\eps$ of the infimum.

Under this assumption, we have, from (\ref{eq:vmseDef}),
\begin{align}
\vmse_n(\lambda) & =\E\min_{s\in\{+1,-1\}}\Big\{1 -
\frac{2s}{n}\, \<\bX,\hbx(\bG)\> +\frac{s^2}{n} \|\hbx(\bG)\|_2^2\Big\} \\
&\ge \E\min_{\alpha\in\reals}\Big\{1 -
\frac{2\alpha}{n}\, \<\bX,\hbx(\bG)\>+\frac{\alpha^2}{n}
  \|\hbx(\bG)\|_2^2\Big\} \\
& = 1-\E\Big\{\frac{\<\bX,\hbx(\bG)\>^2}{n\|\hbx(\bG)\|_2^2}\Big\}\, ,\label{eq:BoundVmse1}
\end{align}
where the last identity follows since the minimum over 
$\alpha$ is achieved at $\alpha = \<\bX,\hbx(\bG)\>/\|\hbx(\bG)\|_2^2$. 

Consider next the matrix minimum mean square error. 
Let $\hbx(\bG)= (\hx(\bG))_{i\in [n]}$ an optimal estimator with
respect to  $\vmse_n(\lambda)$, and define
\begin{align}
\hx_{ij}(\bG) = \beta(\bG)\, 
\hx_i(\bG)\hx_j(\bG)\, ,\;\;\;\;\;\;\;\;
\beta(\bG) \equiv \frac{1}{\|\hbx(\bG)\|_2^2}\,
  \E\left(\{\frac{\<\bX,\hbx(\bG)\>^2}{\|\hbx(\bG)\|_2^2}\right)\, .
\end{align}
Using Eq.~(\ref{eq:MMSEdef}) and the optimality of posterior mean, we obtain
\begin{align}
(1-n^{-1})\MMSE_n(\lambda)& \le
\frac{1}{n^2}\sum_{i,j\in[n]}\E\Big\{\big[X_iX_j-\hx_{ij}(\bG)\big]^2\Big\}\\
&=
\frac{1}{n^2}\E\Big\{\big\|\bX\bX^{\sT}-\beta(\bG)\hbx(G)\hbx^{\sT}(\bG)\big\|_F^2\Big\}\\
& = \E\Big\{1-
  \frac{2\beta(\bG)}{n^2}\,\<\bX,\hbx(\bG)\>^2+\frac{\beta(\bG)^2}{n^2}\,\|\hbx(\bG)\|_2^4\Big\}\\
& = 1
  -\E\left(\frac{\<\bX,\hbx(\bG)\>^2}{n\|\hbx(\bG)\|_2^2}\right)^2\, . \label{eq:BoundVmse2}
\end{align}
The desired lower bound in Eq.~(\ref{eq:VmseBounds}) follows by
comparing Eqs.~(\ref{eq:BoundVmse1}) and (\ref{eq:BoundVmse2}).

\subsection{Proof of Lemma \ref{lemma:OverlapVsVector}}
\label{sec:OverlapVsVector}

We shall assume, for the sake of simplicity, that the infimum in the
definition of  $\vmse_n(\lambda)$, see Eq.~(\ref{eq:vmseDef}) is
achieved for a given estimator $\hbx:\cG_n\to\reals^n$. If this is not
the case, the proof below is easily adapted by considering an
approximately optimal estimator. We then define
$\bxi:\cG_n\to\reals_n$ by letting
\begin{align}
\bxi(\bG) \equiv \frac{\hbx(\bG)\, \sqrt{n}}{\|\hbx(\bG)\|_2}\, .
\end{align}
Notice that $\|\bxi(\bG)\|_2 = \sqrt{n}$. Also by the proof in previous
section, see Eq.~(\ref{eq:BoundVmse1}), we have
\begin{align}
\vmse_n(\lambda)& \ge
                   1-\E\Big\{\frac{1}{n^2}\<\bX,\bxi(\bG)\>^2\Big\}\, ,
\end{align}
and therefore  (since $|\<\bX,\bxi(\bG)\>|\le n$)
\begin{align}
\E\Big\{\Big|\frac{1}{n}\<\bX,\bxi(\bG)\>\Big|\Big\} \ge
  1-\vmse_n(\lambda)\, .
\end{align}

Next consider the definition of overlap
(\ref{eq:OverlapDef}). Consider the randomized estimator $\hbs:\cG_n\to
\{+1,-1\}^n$  defined by letting $\hbs(\bG) = (\hs_i(\bG))_{i\in [n]}$
with 
\begin{align}
\hs_i(\bG) = \begin{cases}
+1 & \mbox{ with probability $(1+\xi_i(\bG))/2$,}\\
-1 & \mbox{ with probability $(1-\xi_i(\bG))/2$.}
\end{cases}
\end{align}
independently across $i\in [n]$. (Formally, $\hbs:\cG_n\times \Omega\to
\{+1,-1\}^n$ with $\Omega$ a probability space, but  we prefer to
avoid unnecessary technicalities.)

We then have, by central limit theorem
\begin{align}
\E\Big\{\frac{1}{n}\big|\<\bX,\hbs(\bG)\>\big|\; \Big|\bX,\bG\Big\} = 
\frac{1}{n}\big|\<\bX,\bxi(\bG)\>\big| +O(n^{-1/2})\,,
\end{align}
with the $O(n^{-1/2})$ uniform in $\bX,\bG$. 
This yields the desired lower bound since, by dominated convergence,
\begin{align}
\overlap_n(\lambda) &\ge
                      \E\Big\{\frac{1}{n}\big|\<\bX,\hbs(\bG)\>\big|\Big\}\\
& \ge  \E\Big\{\frac{1}{n}\big|\<\bX,\bxi(\bG)\>\big|\Big\}
  -O(n^{-1/2})\\
&\ge 1-\vmse_n(\lambda)-O(n^{-1/2})\, .
\end{align}
%
%
%

%
%

\section{Additional technical proofs}
\label{app:addproofs}

   \subsection{Proof of Remark \ref{rem:berns}}
    We prove the claim for $\<\bx, \bG\bx\>$; the other claim follows from an identical
    argument.
    Since $\Delta_n\le \op_n$, we have by triangle inequality, that
    $\abs{\E\{\<\bx, \bG\bx\>|X\}} \le 2n(n-1)\op_n$. Applying Bernstein
    inequality to the sum $ \<\bx, \bG\bx\> - \E\{\<\bx, \bG\bx\>|\bX\} = \sum_{i, j} x_ix_j (G_{ij} - \E\{G_{ij}|X\})$
    of random variables bounded by 1:
    \begin{align}
      \P\left\{\sup_{\bx\in\hcube} \<\bx, \bG\bx\> - \E\{\<\bx,\bG\bx\> |\bX\}\> \ge t \right\} 
      &\le  
      2^n\sup_{\bx\in \hcube}\P\left\{ \<\bx, \bG\bx\> - \E\{\<\bx,\bG\bx\> |\bX\}\> \ge t \right\} \\
     &\le 2^n \exp(-t^2/2(n^2\op_n + t))
    \end{align}
    Setting $t = Cn^2\op_n$ for large enough $C$ yields the required result.

\subsection{Proof of Lemma \ref{rem:unique}}
\label{sec:ProofUnique}

Let us start from point $(a)$,.
Since  $\mmse(\gamma, \eps) = \eps+(1-\eps)(1-\mmse(\gamma))$,
it is sufficient to prove this claim for 
  \begin{align}
    G(\gamma ) \equiv 1-\mmse(\gamma) &= \E\{ \tanh(\gamma +
    \sqrt{\gamma}Z)^2 \}\, ,
\end{align}
 where, for the rest of the proof, we keep $Z\sim\normal(0, 1)$. 
We start by noting that, for all $k\in\integers_{>0}$,
\begin{align}
\E\left\{ \tanh(\gamma + \sqrt{\gamma}Z)^{2k-1} \right\} &= \E\left\{
  \tanh(\gamma+\sqrt{\gamma}Z)^{2k} \right\},\label{eq:poweridentity}
  \end{align}
This identity can be proved using the fact  that $\E\{X |
\sqrt{\gamma} X + Z\} = \tanh(\gamma X + \sqrt{\gamma}Z)$.
Indeed this yields
\begin{align}
\E\left\{
  \tanh(\gamma+\sqrt{\gamma}Z)^{2k} \right\}& = \E\left\{
  \tanh(\gamma X+\sqrt{\gamma}Z)^{2k} \right\}\\
& = \E\left\{\E\{X |
\sqrt{\gamma} X + Z\}\,  \tanh(\gamma X+\sqrt{\gamma}Z)^{2k-1}
\right\}\\
& = \E\left\{X \,  \tanh(\gamma X+\sqrt{\gamma}Z)^{2k-1}
\right\}\\
& = \E\left\{ \tanh(\gamma X+\sqrt{\gamma}Z)^{2k-1}
\right\}\, ,
\end{align}
where the first and last equalities follow by symmetry.

Differentiating with respect to $\gamma$ (which can be justified by
dominated convergence):
  \begin{align}
    G'(\gamma) &= \E\left\{ (1+ Z/2\sqrt{\gamma}) \sech(\gamma + \sqrt{\gamma}Z)^2 \right\} \\
    &= \E\left\{ \sech(\gamma +\sqrt{\gamma}Z)^2  \right\} + 
    \frac{1}{2\sqrt{\gamma}}\E\left\{ Z\sech(\gamma+ \sqrt{\gamma}Z)^2
    \right\}.
  \end{align}
  Now applying Stein's lemma (or Gaussian integration by parts):
  \begin{align}
    G'(\gamma) &= \E\left\{ \sech(\gamma+\sqrt{\gamma}Z)^2 \right\} -
    \E\left\{ -\tanh(\gamma + \sqrt{\gamma}Z)\sech(\gamma +
      \sqrt{\gamma }Z)^2 \right\}\, .
  \end{align}
  Using the trigonometric identity $\sech(z)^2 = 1-\tanh(z)^2$,  
  the shorthand $T =\tanh(\gamma + \sqrt{\gamma}Z)$ and identity
  \eqref{eq:poweridentity} above:
  \begin{align}
    G'(\gamma) &= \E\left\{ 1-T^2 - T + T^3 \right\}\\
    &= \E\left\{ 1 - 2T^2 + T^4 \right\} = \E\left\{ (1-T^2)^2 \right\} \\
    &= \E\left\{ \sech(\gamma + \sqrt{\gamma}Z)^4 \right\}.
  \end{align}
  Now, let $\psi(z) = \sech^4(z)$, whereby we have
  \begin{align}
    G'(\gamma) &= \E\left\{ \psi(\sqrt{\gamma}(\sqrt{\gamma} + Z)) \right\}.
  \end{align}

  Note now that $\psi(z)$ satisfies $(i)$ $\psi(z)$ is even with $z\psi'(z)\le 0$, $(ii)$ $\psi(z)$
  is continuously differentiable and $(iii)$ $\psi(z)$ and $\psi'(z)$ are  bounded. 
  Consider the function $H(x, y) = \E\left\{ \psi(x(Z+y)) \right\}$, where
  $x\ge 0$. We have the identities:
  \begin{align}
    G'(\gamma) &= H(\sqrt{\gamma}, \sqrt{\gamma}),  \\
    G''(\gamma) &= \diffp{H}{x} \bigg\rvert_{x=y=\sqrt{\gamma}} + \diffp{H}{y}\bigg\rvert_{x=y=\sqrt{\gamma}}.
  \end{align}
 Hence, to prove that $G''(\gamma)$ is concave on $\gamma\ge 0$, it suffices to
 show that $\partial H/\partial x$, $\partial H/ \partial y$ are non-positive for $x, y\ge 0$.
  By properties $(ii)$ and $(iii)$ above we can differentiate
   $H$ with respect to $x, y$ and interchange differentiation and expectation. 
   
   We first prove that $\partial H/\partial x$ is non-positive:
   \begin{align}
     \diffp{H}{x} &= \E\left\{ (Z+y)\psi'(x(Z+y)) \right\} \\
     &= \int_{-\infty}^\infty \varphi(z) (z+y) \psi'(x(z+y))\d z\\
     &= \int_{-\infty}^{\infty} z\varphi(z-y) \psi'(xz)\d z. 
   \end{align}
   Here $\varphi(z)$ is the Gaussian density $\varphi(z) = \exp(-z^2/2)/\sqrt{2\pi}$.
   Since $z\psi'(z)\le 0$ by property $(i)$ and $\varphi(z-y) >0$ we have the
   required claim.

   Computing the derivative with respect to $y$ yields
   \begin{align}
     \diffp{H}{y} &= x\E\left\{ \psi'(x(Z+y)) \right\} \\
     &= \int_{-\infty}^{\infty} x\psi'(x(z +y))\varphi(z)\d z\\
     &= \int_{-\infty}^{\infty} x\psi'(xz)\varphi(z-y)\d z \\
     &= \frac{1}{2} \int_{-\infty}^{\infty} x\psi'(xz)\varphi(z-y)\d z 
     - \frac{1}{2}\int_{-\infty}^{\infty} x\psi'(xz)\varphi(y+z)\d z,
   \end{align}
   where the last line follows from the fact that $\psi'(u)$ is
   odd and $\varphi(u)$ is even in $u$. Consequently
   \begin{align}
     \diffp{H}{y} &= x\int_{0}^{\infty} \psi'(xz)  (\varphi(y-z) - \varphi(y+z)) \d z.
   \end{align}
   Since $\varphi(y-z) \ge \varphi(y+z)$ and $\psi'(xz)\le 0$ for $y,z\ge 0$, the integrand is negative and 
   we obtain the desired result.

\subsection{Proof of Remark \ref{rem:limitentro}}
\label{sec:limEntro}

For any random variable $\bR$ we have
\begin{align}
H(\bX|\bR) - H(\bX\bX^{\sT}|\bR) = H(\bX|\bX\bX^{\sT}, \bR) -
H(\bX\bX^{\sT}|\bX,\bR) = H(\bX|\bX\bX^{\sT}, \bR)\, .
\end{align}
Since $H(\bX|\bX\bX^{\sT}, \bR)\le H(\bX|\bX\bX^{\sT}) \le \log 2$
(given $\bX\bX^{\sT}$ there are exactly $2$ possible choices for
$\bX$), this implies 
\begin{align}
0 \le H(\bX|\bR) - H(\bX\bX^{\sT}|\bR) \le \log 2 \, .
\end{align}
The claim (\ref{eq:IXX}) follows by applying the last inequality once
to $\bR=\emptyset$ and once to $\bR = (\bX(\eps),\bY(\lambda))$ and
taking the difference.

The  claim (\ref{eq:IXX_lambda0})  follows from the fact that $\bY(0)$
is independent of $\bX$, and hence $I(\bX;  \bX(\eps),\bY(0)) = I(\bX;
\bX(\eps)) = n\eps\log 2$.

  For the second claim, we prove that 
  $\limsup_{n\to\infty}H(\bX|\bY(\lambda), \bX(\eps)) \le
  \delta(\lambda)$ where $\delta(\lambda)\to 0$ as $\lambda\to
  \infty$,
whence the claim follows since $H(\bX) = n\log 2$.
  We claim that we can construct an estimator $\hbx(\bY) \in \{-1, 1\}^n$ 
  and a function $\delta_1(\lambda)$ with
  $\lim_{\lambda\to\infty}\delta_1(\lambda) = 0$, such that, defining
\begin{align}
E(\lambda,n) =  \begin{cases}
1 & \mbox{ if $\min\big(d(\hbx(\bY),\bX),d(\hbx(\bY),-\bX)\big)\ge
n\delta_1(\lambda)$,}\\
0 & \mbox{ otherwise,}
\end{cases}
\end{align}
then we have
\begin{align}
\lim_{n\to\infty}\prob\big\{E(\lambda,n)=1\big\} = 0\, .
\end{align}
To prove this claim, it is sufficient to consider $\hbx(\bY)=\sign(\bv_1(\bY))$ where $\bv_1(\bY)$
is the principal eigenvector of $\bY$.
Then \cite{capitaine2009largest,benaych2011eigenvalues} implies that, for
$\lambda\ge 1$, almost surely,
\begin{align}
\lim_{n\to\infty} \frac{1}{\sqrt{n}} |\<\bv_1(\bY),\bX\>| =
\sqrt{1-\lambda^{-1}} \, .
\end{align}
Hence the above claim holds, for instance, with $\delta_1(\lambda)
= 2/\lambda$.

  Then expanding $H(\bX, E|\bY(\lambda), \bX(\eps))$ with the chain
  rule (whereby $E=E(\lambda,n)$), we get:
  \begin{align}
    H(\bX|\bY(\lambda),\bX(\eps)) + H(E|\bX, \bY(\lambda), \bX(\eps)) &= H(\bX, E|\bY(\lambda), \bX(\eps))\\
    & = H(E|\bY(\lambda), \bX(\eps)) + H(\bX|E, \bY(\lambda),\nonumber
    \bX(\eps)).
  \end{align}
  Since $E$ is a function of $\bX, \bY(\lambda)$, $H(E|\bX, \bY(\lambda)) = 0$. 
  Furthermore $H(E|\bY(\lambda),\bX(\eps)) \le \log 2$
  since $E$ is binary. Hence:
  \begin{align}
    H(\bX|\bY(\lambda), \bX(\eps)) &\le\log 2 + H(\bX|E, \bY(\lambda), \bX(\eps)) \\
    &= \log 2 + \P\{E=0\}H(\bX|E = 0, \bY(\lambda), \bX(\eps)) + \P\{E=1\} H(\bX| E=1, \bY(\lambda), \bX(\eps)).\nonumber
  \end{align}
  When $E = 0$, $\bX$ differs from $\pm\hbx(\bY)$ in at most $\delta_1 n$ positions, whence
  $H(\bX|E=0, \bY(\lambda), \bX(\eps)) \le n \delta_1\log (e/\delta_1)
  +\log 2$. 
  When $E=1$, we trivially have $H(\bX|E=1, \bY(\lambda), \bX(\eps)) \le H(\bX) = n$.
  Consequently:
  \begin{align}
    H(\bX|\bY(\lambda), \bX(\eps)) &\le 2\log 2 + n\delta_1 \log \frac{e}{\delta_1} + n \delta_1. 
  \end{align}
  The second claim then follows by dividing with $n$ and letting $n\to\infty$ on the right
  hand side.

\subsection{Proof of Lemma \ref{lem:stateevollem}}

The lemma results by reducing the AMP algorithm 
Eq. \eqref{eq:ampiter} to the setting of \cite{javanmard2013state}.

By definition, we have:
\begin{align}
  \bx^{t+1} &= \frac{\bY(\lambda)}{\sqrt{n}} f_t(\bx^t, \bX(\eps)) - \ons_tf_{t-1}(\bx^{t-1}, \bX(\eps)) \\
  &= \frac{\sqrt{\lambda}}{n} \<\bX, f_t(\bx^t, \bX(\eps))\>\bX + \bZ
  f_t(\bx^t, \bX(\eps)) - \ons_t f_{t-1}(\bx^{t-1}, \bX(\eps)).
\end{align}
Define a related sequence $\bs^t\in\reals^n$ as follows:
\begin{align}
  \bs^{t+1}  &= \bZ f_t(\bs^t + \mu_t \bX, \bX(\eps)) - \tons_t f_{t-1}(\bs^{t-1} + \mu_{t-1}\bX, \bX(\eps)) \label{eq:ampproofiter}\\
  \tons_t &= \frac{1}{n}\sum_{i\in[n]} f'_t(s_i^t + \mu_t X_i,
  X(\eps)_i)\, , \\
  \bs^0 &= \bx^0 + \mu_0\bX\, .
\end{align}
Here $\mu_t$ is defined via the state evolution recursion:
\begin{align}
R
  \mu_t &= 
  \sqrt{\lambda}\, \E\left\{ X_0 f_t(\mu_t X_0 + \sigma_t Z_0 ,
    X_0(\eps) )\right\}\, ,\\
  \sigma^2_t &= \E\left\{ f_t(\mu_t X_0 + \sigma_t Z_0,  X_0(\eps) )^2
  \right\}, .
\end{align}
We call a function $\psi:\reals^d\to\reals$ is pseudo-Lipschitz if,
for all $\bu,\bv\in\reals^d$
\begin{align}
  \abs{\psi(\bu)-\psi(\bv)} &\le L(1+\|\bu\|+\|\bv\|)\,\|\bu-\bv\|\, ,
\end{align}
where $L$ is a constant. In the rest of the proof, we will use $L$ to
denote a constant that may depend on $t$ and but not on $n$, and can
change from line to line.

We are now ready to prove Lemma \ref{lem:stateevollem}. Since the iteration for $\bs^t$ is in 
the form of \cite{javanmard2013state}, we have
for any pseudo-Lipschitz function $\tilde{\psi}$:
\begin{align}
  \lim_{n\to\infty} \frac{1}{n} \sum_{i=1}^n\tilde{\psi}(s^t_i, X_i,X(\eps)_i) &=
  \E\big\{ \tilde{\psi}(\sigma_t Z_0, X_0, X_0(\eps)=q
  )\big\}\, . \label{eq:jmthm}
\end{align}
Letting $\tilde{\psi}(s, z,r) = \psi(s + \mu_t z, z,r)$, this  implies that,
almost surely:
\begin{align}
  \lim_{n\to\infty} \frac{1}{n} \sum_{i=1}^n\psi(s^t_i+ \mu_t X_i,
  X_i,X(\eps)_i) &= \E\big\{ \psi(\mu_t X_0+ \sigma_t Z_0, X_0,
  X_0(\eps))\big\}\, .
\end{align}
It then suffices to show that, for any pseudo-Lipschitz function
$\psi$,  almost surely:
\begin{align}
  \lim_{n\to\infty} \frac{1}{n}\sum_{i=1}^n \big[\psi(s^t_i + \mu_t
  X_i, X(\eps)_i) - \psi(x^t_i, X_i,X(\eps)_i) \big] = 0\, .
\end{align}

We instead prove the following claims that include
the above. For any $t$ fixed, almost surely:
\begin{align}
  \lim_{n\to\infty} \frac{1}{n}\sum_{i=1}^n
\big[\psi(s^t_i + \mu_t X_i, X(\eps)_i) - \psi(x^t_i, X_i, X(\eps)_i)\big] &= 0,\label{eq:induc0}\\ 
  \lim_{n\to \infty}\frac{1}{n}\,\|\bDelta^t\|_2^2  &= 0, \label{eq:induc1}\\
  \limsup_{n\to\infty} \frac{1}{n}\,\|\bs^t +
  \mu_t \bX\|_2^2 &<\infty\, ,\label{eq:induc2}
\end{align}
where  we let $\bDelta^t = \bx^t - \bs^t - \mu_t \bX$. 

We can prove this claim by induction on $t$. 
The base case of $t=-1, 0$ is trivial for all three claims: 
$\bs^0 +\mu^0 \bX = \bx^0$ and $\bDelta^0 = 0$ is satisfied by our initial condition 
$\bs^0 = \bx^0 = 0$, $\mu_0 = 0$. Now, assuming the claim
holds for $\ell=0, 1, \dots t-1$ we prove the claim for $t$.

By the pseudo-Lipschitz property and triangle inequality, we have, for
some $L$:
\begin{align}
  \abs{\psi(x^t_i, X_i,X(\eps)_i) - \psi(s^t_i + \mu_t X_i ,
    X_i,X(\eps)_{i})}
&\le L \abs{\Delta^t_i} (1+ \abs{s^t_i + \mu_t X_i} +
\abs{x^t_i})\\
  &\le 2L(\abs{\Delta^t_i} + \abs{s^t_i + \mu_t X_i} \abs{\Delta^t_i} +
\abs{\Delta^t_i}^2).
\end{align}
Consequently:
\begin{align}
  \frac{1}{n}\abs{\sum_{i=1}^n\big[ \psi(x^t, X_i,X(\eps)_i) - \psi(s^t+\mu_tX_i, X_i,X(\eps)_i)\big]} &\le
  \frac{L}{n} \sum_{i=1}^n\left( \abs{\Delta^t_i} + 
\abs{s^t_i + \mu_t X_i} \abs{\Delta^t_i} + \abs{\Delta^t_i}^2\right)\\
&\le \frac{L}{n} \Big(\|\bDelta^t\|_2^2 +\sqrt{n}\|\bDelta^t\|_2+
\|\bDelta^t\|_2\|\bs^t+\mu_t\bX\|_2\Big)\, .
\end{align}
Hence the induction claim \myeqref{eq:induc0} at $t$ follows from claims \myeqref{eq:induc1} and
\myeqref{eq:induc2} at $t$.

We next consider the claim \myeqref{eq:induc1}.
Expanding the iterations for $\bx^t, \bs^t$ we obtain the following 
expression for $\Delta^t_i$:
\begin{align}
  \Delta^t_i &= \left( \frac{\sqrt{\lambda}\<f_{t-1}(\bx^{t-1}, \bX(\eps)), \bX\>}{n} - \mu_t\right) X_i +
  \frac{1}{\sqrt{n}}\<\bZ_i, f_{t-1}(\bx^{t-1}, \bX(\eps)) - f_{t-1}(\bs^{t-1} + \mu_{t-1} \bX, \bX(\eps))\> 
  \nonumber\\
  &\quad- \ons_{t-1} f_{t-2}(x^{t-2}_i, X(\eps)_i) + \tons_{t-1} f_{t-2}(s^{t-2}_i + \mu_{t-2}X_i, X(\eps)_i). 
\end{align}
Here $\bZ_i$ is the $i^{\text{th}}$ row of $\bZ$.

Now, with the standard inequality $(z_1 + z_2 + z_3)^2 \le 3(z_1^2 +
z_2^2 + z_3^2)$:
\begin{align}
  \frac{1}{n}\|\bDelta^t\|_2^2 &\le 
  L\bigg(\frac{\sqrt{\lambda}}{n}\<\bX, f_{t-1}(\bx^{t-1}, \bX(\eps))\> - \mu_t\bigg)^2 \nonumber\\
  &\quad+\frac{L}{n^2} \|\bZ\|_2^2\norm{f_{t-1}(\bx^{t-1}, \bX(\eps))
    - f_{t-1}(\bs^{t-1} + \mu_{t-1}\bX, \bX(\eps))}^2
 \nonumber \\
  &\quad+ L \abs{\tons_{t-1} - \ons_{t-1}}^2 \frac{1}{n}\sum_{i=1}^nf_{t-2}(s^{t-2}_i+\mu_{t-2}X_i, X(\eps)_i ) \nonumber\\
  &\quad+ L\abs{\ons_{t-1}}^2 \frac{1}{n} \sum_{i=1}^n (f_{t-2}(s^{t-2}_i + \mu_{t-2}X_i, X(\eps)_i) - 
  f_{t-2}(x^{t-2}_i, X(\eps)_i))^2. 
\end{align}
Using the fact that $f_{t-1}, f_{t-2}$ are Lipschitz:
\begin{align}
  \frac{1}{n}\|\bDelta^t\|_2^2 &\le 
  L\bigg(\frac{\sqrt{\lambda}}{n}\<\bX, f_{t-1}(\bx^{t-1},
  \bX(\eps))\> - \mu_t\bigg)^2 + 
\frac{L}{n^2}\|\bZ\|_2^2\norm{\bDelta^{t-1}}^2_2 \nonumber \\
  &\quad+ L \abs{\tons_{t-1} - \ons_{t-1}}^2 \frac{1}{n}\sum_{i=1}^n f_{t-2}(s^{t-2}_i+\mu_{t-2}X_i, X(\eps)_i ) \nonumber\\
  &\quad+ L\abs{\ons_{t-1}}^2 \frac{\norm{\Delta^{t-2}}^2}{n}. \label{eq:deltaclaim}
\end{align}
By the induction hypothesis, (specifically $\psi(x, y,r)= yf_{t-1}(x)$ at $t-1$, wherein it
is immediate to check that $yf_{t-1}(x)$ is pseudo-Lipschitz by the boundedness of $\mu_t, \sigma_t$):
\begin{align}
  \lim_{n\to\infty} \frac{\<\bX, f_{t-1}(\bx^{t-1}, \bX(\eps))\>}{n} & = 
  \lim_{n\to\infty} \frac{\<\bX, f_{t-1}(\bs^{t-1} + \mu_{t-1}\bX, \bX(\eps))\>}{n} 
  = \mu_t \quad\text{a.s.}
\end{align}
Thus the first term in \myeqref{eq:deltaclaim} vanishes.
For the second term to vanish, using the induction hypothesis
for $\Delta^{t-1}$, it suffices that almost surely:
\begin{align}
  \limsup_{n\to\infty} \frac{1}{n}\norm{\bZ}^2&<\infty.
\end{align}
This follows from standard eigenvalue bounds for Wigner random matrices
\cite{Guionnet}.
For the third term in \myeqref{eq:deltaclaim} to vanish, we have by \cite{javanmard2013state}
that:
\begin{align}
  \limsup_{n\to\infty} \frac{1}{n}\sum_{i=1}^n f_{t-2}(s^{t-2}_i + \mu_{t-2}X_i, X(\eps)_i) <\infty. 
\end{align}
Hence it suffices that $\lim_{n\to\infty}\tons_{t-1} - \ons_{t-1} =0$ a.s., for which we expand their
definitions to get:
\begin{align}
  \lim_{n\to\infty}\tons_{t-1}-\ons_{t-1} &= \lim_{n\to\infty}
  \frac{1}{n}\sum_{i=1}^n\big[ f'_{t-1}(s^{t-1}_i+ \mu_{t-1}X_i, X(\eps)_i) - f'_{t-1}(x^{t-1}_i, X(\eps)_i)].
\end{align}
By assumption, $f_{t-1}'$ is Lipschitz and we can apply the induction hypothesis
with $\psi(x, y,q) = f'_{t-1}(x, q)$ 
to obtain that the limit vanishes. Indeed, by a similar argument $\tons_{t-1}$ is
bounded asymptotically in $n$, and so is $\ons_{t-1}$. Along with the induction
hypothesis for $\bDelta^{t-2}$ this implies that the fourth term in \myeqref{eq:deltaclaim}
asymptotically vanishes. This establishes the induction claim \myeqref{eq:induc1}. 

Now we only need to show the induction claim \myeqref{eq:induc2}. However, this
is a direct result of Theorem 1 of \cite{javanmard2013state}: indeed in \myeqref{eq:jmthm}
we let $\tilde{\psi}(x, y) = (s^t_i + \mu_t X_i)^2$ to obtain the required
claim.

\bibliographystyle{amsalpha}

\providecommand{\bysame}{\leavevmode\hbox to3em{\hrulefill}\thinspace}
\providecommand{\MR}{\relax\ifhmode\unskip\space\fi MR }
\providecommand{\MRhref}[2]{%
  \href{http://www.ams.org/mathscinet-getitem?mr=#1}{#2}
}
\providecommand{\href}[2]{#2}

\end{document}